\begin{document}
\newtheorem{theorem}{Theorem}[section]
\newtheorem{proposition}[theorem]{Proposition}
\newtheorem{lemma}[theorem]{Lemma}
\newtheorem{corollary}[theorem]{Corollary}
\newtheorem{defn}[theorem]{Definition}
\newtheorem{conjecture}[theorem]{Conjecture}
\newtheorem{question}[theorem]{Question}
\newtheorem{problem}[theorem]{Problem}
\newtheorem{Remark}[theorem]{Remark}

\theoremstyle{definition}
\newtheorem{definition}{Definition}

\numberwithin{equation}{section}

\newcommand{\R}{\mathbb R}
\newcommand{\C}{\mathbb C}
\newcommand{\TT}{\mathbb T}
\newcommand{\Z}{\mathbb Z}
\newcommand{\HH}{\mathbb H}
\renewcommand{\^}{\widehat}
\newcommand{\vE}{\mathcal E} % lattice points on the sphere
\newcommand{\vF}{\mathcal F}

\newcommand{\dist}{\operatorname{dist}}
\newcommand{\supp}{\operatorname{supp}}
\newcommand{\spec}{\operatorname{spec}}
\newcommand{\diam}{\operatorname{diam}}
\newcommand{\vol}{\operatorname{vol}}
\newcommand{\Ccap}{\operatorname{Cap}}
\newcommand{\Div}{\operatorname{div}}

\newcommand{\im}{\operatorname{Im}}
\newcommand{\re}{\operatorname{Re}}

\newcommand{\SL}{\operatorname{SL}}
\newcommand{\PSL}{\operatorname{PSL}}
\newcommand{\surface}{\Sigma}
\newcommand{\shape}{S}

\title{On the nodal sets of toral eigenfunctions}

\author{Jean Bourgain and Ze\'ev Rudnick}
\address{School of Mathematics, Institute for Advanced Study,
Princeton, NJ 08540 }
\email{bourgain@ias.edu}

\address{Raymond and Beverly Sackler School of Mathematical Sciences,
Tel Aviv University, Tel Aviv 69978, Israel 
%and School of Mathematics, Institute for Advanced Study,
%Princeton, NJ 08540 
}
\email{rudnick@post.tau.ac.il}
\date{\today}

\begin{abstract}
We study the nodal sets of eigenfunctions of the Laplacian
on the standard $d$-dimensional flat torus. The question we address
is: Can a fixed hypersurface lie on the nodal sets of eigenfunctions
with arbitrarily large eigenvalue? In dimension two, we show that
this happens only for segments of  closed geodesics. In higher dimensions,
certain cylindrical sets do lie on nodal sets corresponding to arbitrarily large eigenvalues.
Our main result is that this cannot happen for hypersurfaces with
nonzero Gauss-Kronecker curvature.

In dimension two, the result follows from a uniform lower bound for
the $L^2$-norm of the restriction of eigenfunctions to the curve,
proved in an earlier paper \cite{BR}. In high dimensions we currently
do not have this bound. Instead, we make use of the real-analytic
nature of the flat torus to study variations on this bound for
restrictions of eigenfunctions to suitable submanifolds in the complex
domain.
In all of our results, we need an arithmetic ingredient concerning the
cluster structure of lattice points on the sphere. We also present an
independent proof for the two-dimensional case relying on the
``abc-theorem'' in function fields.

\end{abstract}
\maketitle

\tableofcontents

\section{Introduction and statement of results}

Our goal in this paper is to study the nodal sets of high-frequency
eigenfunctions on the standard flat torus $\TT^d = \R^d/\Z^d$.
The eigenvalues of the Laplacian on $\TT^d$ are of the form
$4\pi^2 \lambda^2$, with $\lambda^2$ an integer, with corresponding
eigenfunctions which are trigonometric polynomials of the form
\begin{equation}\label{eigenfun on torus}
\varphi(x) = \sum_{\substack{|\xi| = \lambda\\ \xi\in \Z^d}} a_\xi e^{2\pi i \langle \xi, x \rangle}
\end{equation}
all of whose frequencies are integer points on the sphere $|x| =\lambda$. If $\lambda\neq 0$ then the  mean value $\int_{\TT^d} \varphi(x)dx=0$ vanishes. The nodal set is the locus of zeros $\{\varphi(x)=0\}$, which is a hypersurface
(codimension one) in $\TT^d$, necessarily real-analytic, possibly with
singularities.
We would like to study how the nodal sets vary when we increase the eigenvalue. It is known that for any real-analytic Riemannian manifold, the volume of the nodal sets is commensurable with $\lambda$ \cite{DF}.
%For other aspects of the nodal set, see \cite{NPS}.
In this paper, we address a different question: As $\lambda$ grows, can a fixed hypersurface lie on infinitely many nodal sets?
%More precisely, can one can rule out vanishing of all but finitely many eigenfunctions on a fixed hypersurface $\surface$?
%Note that we may assume $\surface$ is real-analytic as the nodal set of any of the trigonometric polynomials is real-analytic
%(including singularities).

\subsection{Dimension $d=2$}

In the flat torus in two dimensions, we can have fixed curves where many eigenfunctions vanish, as in the case the line
$y=0$ which is on the nodal set of all the eigenfunction $\sin (2\pi
mx) \sin (2\pi ny)$. More generally, if $\surface$ is a {\em closed geodesic},
one can easily construct an infinite sequence of eigenvalues with eigenfunctions vanishing on $\surface$.
However this is essentially the only such possibility for the flat torus in two dimensions, where we can settle this problem
completely:
\begin{theorem}\label{thm:2 dim}%[J.Bourgain and ZR]
\label{nonvanishing thm}
 Let $\surface \subset \TT^2$ be a real-analytic curve. Then a necessary and
 sufficient condition that there are eigenfunctions $\varphi_\lambda$
 with arbitrarily large frequencies which vanish
on $\surface$ is that it be a segment of a closed geodesic.
\end{theorem}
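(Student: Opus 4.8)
\emph{Sufficiency} is the elementary half, and I would dispatch it first by an explicit construction. If $\Sigma$ lies on a closed geodesic, that geodesic is the image in $\TT^2=\R^2/\Z^2$ of a line $\{(x_0,y_0)+t(p,q):t\in\R\}$ with $(p,q)\in\Z^2$ primitive. The integer linear form $qx-py$ descends to a map $\TT^2\to\R/\Z$ that is constant ($\equiv qx_0-py_0 \bmod 1$) along the geodesic, so for every $n\ge 1$ the function
\[
\psi_n(x,y)=\sin\bigl(2\pi n(qx-py)-2\pi n(qx_0-py_0)\bigr)
\]
is well defined on $\TT^2$, vanishes identically on $\Sigma$, and is an eigenfunction: its only frequencies are $\pm n(q,-p)$, which lie on the circle of radius $n\sqrt{p^2+q^2}\to\infty$. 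This produces the desired sequence.

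For \emph{necessity} I would argue contrapositively and in fact prove the stronger statement that if $\Sigma$ is not a segment of a closed geodesic then \emph{no} nonzero eigenfunction vanishes on it. Lift $\Sigma$ to a real-analytic arc in $\R^2$ and consider its curvature $\kappa$, a real-analytic function of arclength. If $\kappa\equiv 0$ then $\Sigma$ is a straight segment $\{(x_0,y_0)+t(a,b):t\in I\}$ whose direction $(a,b)$, since $\Sigma$ is not part of a closed geodesic, satisfies: $am+bn=0$ with $(m,n)\in\Z^2$ forces $(m,n)=0$. Now if an eigenfunction $\varphi_\lambda=\sum_{|\xi|=\lambda}a_\xi e^{2\pi i\langle\xi,x\rangle}$ vanished on $\Sigma$, restricting to the segment would give the entire function of $t$ equal to $\sum_{|\xi|=\lambda}\bigl(a_\xi e^{2\pi i\langle\xi,(x_0,y_0)\rangle}\bigr)e^{2\pi i t\langle\xi,(a,b)\rangle}$, vanishing on $I$ and hence everywhere; but the linear frequencies $\langle\xi,(a,b)\rangle$ are pairwise distinct over the lattice points on the circle (equal values would give $\langle\xi-\xi',(a,b)\rangle=0$, hence $\xi=\xi'$), so by independence of exponentials all coefficients vanish and $\varphi_\lambda\equiv 0$ — impossible.

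The remaining case $\kappa\not\equiv 0$ is where the real content sits. By real-analyticity $\kappa$ has only isolated zeros, so $\Sigma$ contains a subarc $\Sigma_0$ with nowhere-vanishing curvature, and I would invoke the restriction estimate of \cite{BR}: there is $c=c(\Sigma_0)>0$ with $\int_{\Sigma_0}|\varphi_\lambda|^2\,ds\ge c\,\|\varphi_\lambda\|_{L^2(\TT^2)}^2$ for every eigenfunction. If $\varphi_\lambda$ vanished on $\Sigma\supseteq\Sigma_0$ the left-hand side would be $0$ and the right-hand side positive — a contradiction. As the introduction notes, this restriction bound already carries the arithmetic input on the clustering of lattice points on circles; alternatively one could run the function-field abc-argument to show directly that the exponential sum obtained by restricting $\varphi_\lambda$ to a genuinely curved real-analytic arc cannot vanish identically once $\lambda$ is large.

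Thus the entire difficulty is concentrated in the curved case and is supplied by \cite{BR}; the rest is bookkeeping. The two points to be careful about are (i) using real-analyticity to drop down to a subarc of nonvanishing curvature, so that \cite{BR} applies, and (ii) the injectivity of $\xi\mapsto\langle\xi,(a,b)\rangle$ on the circle in the degenerate linear case, which is precisely where the irrationality of a non-closed geodesic's slope enters.
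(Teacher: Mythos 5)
Your proposal is correct and follows the same route as the paper: an explicit trigonometric construction for the sufficiency direction, the restriction lower bound of \cite{BR} on a subarc of nonvanishing curvature (extracted via real-analyticity) for the curved case, and a direct argument in the flat-but-unbounded case. The one place you go beyond the paper's exposition is the unbounded geodesic case, which the paper dismisses as ``easy to check''; your argument there — restrict to the segment, observe the resulting exponential sum is analytic in $t$, and use the irrationality of the slope to see that $\xi\mapsto\langle\xi,(a,b)\rangle$ is injective on lattice points, so vanishing on an interval kills all coefficients — is exactly the check the paper has in mind and is stated cleanly.
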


Theorem~\ref{thm:2 dim} is an easy consequence of our uniform $L^2$-restriction
theorem \cite{BR} on the torus, which shows that for any smooth curve $\surface
\subset \TT^2$ with nowhere zero curvature,
there is some $\lambda_\surface>0$ and $C_\surface>0$ so that for all
eigenfunctions $\varphi_\lambda$ with $\lambda \geq \lambda_\surface$, we have
\begin{equation}\label{restriction lower bound}
  \int_\surface |\varphi_\lambda|^2  \geq C_\surface ||\varphi_\lambda||^2
\end{equation}
The restriction lower bound  \eqref{restriction lower bound} can be
used to show nonvanishing on curves (Theorem~\ref{thm:2 dim}) which are not
segments of closed geodesics:
Indeed, since the curve is real-analytic, if it is not flat it has
only finitely many flat points, hence by shrinking it we may assume
that it has nowhere-zero curvature, and then non-vanishing follows
from the lower bound \eqref{restriction lower bound}.
If $\surface$ is flat, but not a segment of a closed geodesic,
%then either it is a segment of a closed
%geodesic, in which case one easily constructs an infinite sequence of
%eigenvalues with eigenfunctions vanishing on $\surface$, or else it is
then it is a segment of an {\em unbounded} geodesic, in which case it is easy to check that
\underline{no} eigenfunction can vanish on it
(though we do not know the restriction bound \eqref{restriction lower bound} in this case, see \cite{Sarnak}).

% KEEP THIS !!
%Indeed, a segment of an unbounded geodesic can be
%parametrized as $\vec \gamma(t) =  x_0 +t  \omega$, $|t|\leq 1/2$,
%where $\omega = (\omega_1, \omega_2)$ has $\omega_1/\omega_2$
%irrational. If $\varphi(x)$ is given by \eqref{eigenfun on torus}, then
%its restriction to the geodesic segment has the form
%$$
%\varphi(\vec\gamma(t)) = \sum_{ |\xi| = \lambda}
%a_\xi e^{2\pi i\langle  x_0,\xi\rangle} e^{2\pi i t \langle \xi, \omega \rangle}
%$$
%Nonvanishing of $\varphi(\vec\gamma(t))$ follows from the functional independence of the
%exponentials $e^{2\pi i \langle \xi,\omega \rangle t}$, which holds if and only if
%the frequencies $ \langle  \xi, \omega \rangle$ are distinct.
%That in turn follows from irrationality of $\omega_1/\omega_2$.

We also have a completely different proof of Theorem~\ref{thm:2 dim}
using an algebraic argument, relying on the ``ABC theorem'' of
Brownawell-Masser \cite{BM} and Voloch \cite{Voloch} which we
present in Appendix~\ref{sec:abc}.

\subsection{Higher dimensions}

Suppose now that $\surface\subset \TT^d$ is a hypersurface (codimension one).
A special role is played by {\em flat} hypersurfaces, which on the torus are affine hyperplanes.
As in the two-dimensional case,  if $\surface$ is flat and  {\em closed}  (a closed totally geodesic hypersurface),
then it is contained in the nodal set of eigenfunctions with arbitrarily large
eigenvalues, e.g. if  $\surface=\{x\in \TT^d :\langle \xi, x\rangle = c\}$ for some $\xi\in \Z^d$, then it is part of the nodal set of the eigenfunctions $\varphi_n(x)=\sin  2\pi n (\langle \xi, x\rangle - c)$ for all $n\geq 1$.
However, in more than two dimensions, we do have non-flat hypersurfaces
(that is such that not all principal curvatures vanish)
contained in the nodal set of eigenfunctions with arbitrarily large
eigenvalues. For instance,  let $\varphi_0(x,y)$ be an eigenfunction
on the two-dimensional torus with eigenvalue $\lambda_0^2$, and
$S_0\subset \TT^2$ a curved segment contained in its nodal set.
For $n\geq 0$ let $\varphi_n(x,y,z) = \varphi_0(x,y)\cos 2\pi n z$, which is
an eigenfunction on $\TT^3$ with eigenvalue $\lambda_n^2=\lambda_0^2+n^2$, whose
nodal set contains the cylindrical set $\surface=S_0\times S^1$. Thus if $S_0$
is curved  then $\surface$ is not flat yet lies within the nodal set of all the
$\varphi_n$.
A similar construction works to show that there are  $\surface\subset
\TT^d$ which are cylindrical in the direction of any closed geodesic
for which there are eigenfunctions with arbitrarily large eigenvalues
vanishing on $\surface$.

So assume that $\surface$ has nowhere zero Gauss-Kronecker curvature, meaning
all principal curvatures are nowhere zero (see \S~\ref{sec:geometric}  for a
discussion).
We then show a higher-dimensional version of  Theorem~\ref{thm:2 dim}:
\begin{theorem}\label{thm:nonzero curvature}
Let $\surface\subset \TT^d$ be a real analytic (codimension one)
hypersurface, with nowhere-vanishing Gauss-Kronecker curvature.  Then there
is some $\lambda_\surface>0$ so that if $\lambda\geq \lambda_\surface$,
then $\surface$ cannot lie within the nodal set of any eigenfunction
$\varphi_\lambda$.
\end{theorem}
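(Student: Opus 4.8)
The plan is to argue by contradiction: suppose that for arbitrarily large $\lambda$ there is an eigenfunction $\varphi_\lambda$ vanishing on $\Sigma$, and derive $\varphi_\lambda\equiv 0$. Everything is local near a point $p\in\Sigma$ (any $p$ will do, the curvature hypothesis being pointwise): take $p$ to be the origin with $T_p\Sigma=\{x_d=0\}$, so that near $p$ the surface is a graph $x_d=\psi(x')$ with $\psi(x')=\tfrac12 Q(x')+O(|x'|^3)$, where $Q$ is (up to a constant) the second fundamental form; nowhere-vanishing Gauss-Kronecker curvature means $Q$ is nondegenerate. Since $\varphi_\lambda$ is a trigonometric polynomial it extends to an entire function on $\C^d$ of exponential type $O(\lambda)$, and since $\Sigma$ is real-analytic its complexification $\Sigma_\C=\{z_d=\psi(z')\}$ is a genuine complex hypersurface near $p$; by the identity theorem, $\varphi_\lambda|_\Sigma\equiv 0$ forces $\varphi_\lambda|_{\Sigma_\C}\equiv 0$.

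The core is a \emph{restriction estimate in the complex domain}, replacing the real $L^2$-restriction bound \eqref{restriction lower bound} which is not available for $d\ge 3$; this is where the real-analytic structure must be exploited. Concretely one wants: for $\lambda\ge\lambda_\Sigma$, the vanishing of $\varphi_\lambda$ on the curved hypersurface $\Sigma_\C$ propagates, with only polynomial-in-$\lambda$ losses, to the statement that $\varphi_\lambda$ is \emph{small} on a suitably chosen affine complex line $S=\{a+wv:w\in\C\}$ near $p$ (the nondegeneracy of $Q$ is what makes $\Sigma_\C$ curved in a nondegenerate complex sense and drives this propagation). On the other hand, the restriction of $\varphi_\lambda$ to such a line is again an honest exponential sum, $\varphi_\lambda(a+wv)=\sum_{|\xi|=\lambda}a_\xi\,e(\langle\xi,a\rangle)\,e\big(w\langle\xi,v\rangle\big)$ in $w$ (writing $e(t)=e^{2\pi i t}$), and the claim to be proved is that such an exponential sum \emph{cannot} be uniformly tiny on $S$ --- contradicting the previous sentence, hence $\varphi_\lambda\equiv 0$.

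The impossibility of uniform smallness is the arithmetic input. Grouping the frequencies $\xi\in\Lambda_\lambda=\{\xi\in\Z^d:|\xi|=\lambda\}$ into clusters according to the value of $\langle\xi,v\rangle$ --- equivalently, according to the orthogonal projection of $\xi$ onto the real $2$-plane spanned by $\re v$ and $\im v$ --- smallness of the sum along all of $S$ forces near-cancellation of the numbers $a_\xi e(\langle\xi,a\rangle)$ within each cluster. The ``cluster structure of lattice points on the sphere'' invoked in the introduction is precisely the statement that, for $\lambda$ large, the points of $\Lambda_\lambda$ cannot concentrate along the fibers of such a projection --- for a plane that one is still free to rotate within the range allowed by the geometry of $\Sigma$ at $p$ --- to an extent consistent with this cancellation. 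One then chooses $v$ so that some cluster is a singleton carrying a nonzero coefficient, or, more robustly, so that the only coefficient vector consistent with the forced cancellations is the zero vector; either way $\varphi_\lambda\equiv 0$, the desired contradiction, and so $\Sigma$ cannot lie in any nodal set once $\lambda\ge\lambda_\Sigma$.

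The main obstacle is the complex-domain restriction estimate: converting the exact vanishing of $\varphi_\lambda$ on the curved complexification $\Sigma_\C$ into genuine, polynomially-controlled smallness on an affine line where the exponential-sum structure is available, using only the nondegeneracy of $Q$ and the fact that $\varphi_\lambda$ is entire of exponential type $O(\lambda)$. Secondary but also substantial is making the lattice-point cluster bounds uniform in $\lambda$ in the low dimensions $d=3,4$, where the spheres $|x|=\lambda$ can carry many lattice points; keeping the geometric freedom (the rotations of the plane at $p$ permitted by $\Sigma$) cleanly separated from the arithmetic constraints is what allows the two parts to be combined.
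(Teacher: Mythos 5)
Your overall architecture agrees with the paper up to the point of passing to the complexification: extend $\varphi_\lambda$ holomorphically to $\C^d/\Z^d$, extend $\Sigma$ to $\Sigma^{\C}$, and use $\varphi_\lambda|_\Sigma\equiv 0\Rightarrow\varphi_\lambda|_{\Sigma^{\C}}\equiv 0$. After that, the proposal diverges, and the divergence hides the two places where the real work lives.

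The central gap is the step you yourself flag as ``the main obstacle'': converting exact vanishing of $\varphi_\lambda$ on the curved complex hypersurface $\Sigma^{\C}$ into quantitative (polynomially controlled) smallness on a nearby affine complex line $S$. There is no mechanism available for such a propagation with controlled loss --- it is a quantitative unique-continuation statement across a positive distance in $\C^d$, and nothing in the hypotheses (exponential type $O(\lambda)$, nondegeneracy of $Q$) delivers it. The paper never needs this: instead of moving off $\Sigma^{\C}$ to a straight line, it constructs real $(d-1)$-dimensional slices $\Sigma(v,\tau)$ that lie \emph{inside} $\Sigma^{\C}$, consisting of points $Z$ with $\operatorname{Im}Z=tv$ for $\tau<t<2\tau$, where $v=-\xi_0/|\xi_0|$ is a tangent, non-asymptotic direction. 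Vanishing on $\Sigma^{\C}$ then gives \emph{exact} vanishing of the restricted mean square for free; all the effort goes into the complementary \emph{lower} bound $\int_{\Sigma(v,\tau)}|\varphi^{\C}(Z)e^{-2\pi i\langle\xi_0,Z\rangle}|^2\,d\mu \gg |a_{\xi_0}|^2 + O(\lambda^{-N})$, which isolates the single coefficient $a_{\xi_0}$. Restricting to a one-complex-dimensional line also loses the $(d-1)$-dimensional oscillatory-integral structure the lower bound relies on (the nonstationary-phase estimate for $J_{\xi,\xi'}$ needs the direction $v$ to be non-asymptotic and uses the full $(t,\hat x)$ variables).

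The second gap is structural: the complexification argument, even when done correctly, only produces smallness of $a_\xi$ for $\xi/|\xi|$ in a fixed cap $\Omega_\Sigma$ of directions, namely the (negatives of) tangent directions to $\Sigma$ that are nowhere asymptotic. This is the content of Theorem~\ref{thm:nonflat}, and it only requires $\Sigma$ to be non-flat. Your proposal does not separate this from the final conclusion. To kill \emph{all} the coefficients and reach the contradiction with $\sum|a_\xi|^2=1$, the paper runs a second, entirely real-variable bootstrap (Section~\ref{sec:all freqs}): because the Gauss-Kronecker curvature is nowhere zero, the Gauss map is a local diffeomorphism onto a cap of normals, one gets rapid decay of $\widehat{\mu}_u(\eta-\xi)$ in directions away from that cap, and a reflection argument enlarges the cap of negligible directions step by step until it covers $S^{d-1}$. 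Without some version of this enlargement, the nowhere-vanishing curvature hypothesis is never actually used beyond nondegeneracy at a single point, and the argument cannot close.

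Finally, your clustering-by-$\langle\xi,v\rangle$ idea gestures at the right kind of arithmetic input but is not what the paper's ``cluster structure'' provides. The paper's Proposition~\ref{Separation lemma} partitions the lattice points on the sphere into well-separated clusters of controlled diameter (built from Jarnik's theorem and a chain lemma); the separation feeds into the off-diagonal oscillatory integrals $J_{\xi,\xi'}$, and the diameter bound allows an induction on the dimension of the affine span of the cluster. Hoping for a direction $v$ that makes some cluster a singleton with a visible nonzero coefficient is not robust and is not how the lower bound is proved.
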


A key ingredient in this result deals with any hypersurface which is not flat.
% that is not all principal curvatures vanish.
As noted above, there are examples of such hypersurfaces contained in the nodal set of eigenfunctions with arbitrarily large eigenvalues. Our next result constrains the possible frequencies of such eigenfunctions, showing that the Fourier coefficients $a_\xi$ are negligible for frequencies $\xi$ whose directions $\xi/|\xi|$ lie outside a fixed cap on the sphere:
\begin{theorem}\label{thm:nonflat}
Let $\surface\subset \TT^d$ be a real analytic hypersurface which is not flat.
Then there is a cap $\Omega_\surface\subset S^{d-1}$ so that for all eigenfunctions
$\varphi_\lambda$ which vanish on $\surface$ we have
  \begin{equation}
    |a_\xi| \ll \frac {||\varphi_\lambda||_2}{\lambda^N}\,, \quad \forall N>1
  \end{equation}
for all $\xi\in \vE$ such that $\xi /|\xi |\in \Omega_\surface$.
%, where $\Omega_S\subset S^{d-1}$
%is some cap on $S^{d-1}$ depending on the hypersurface $S$.
\end{theorem}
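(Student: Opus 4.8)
\medskip
\noindent\emph{Proposed proof.} The plan is to use the real-analyticity of $\surface$ to pass to its complexification and to restrict $\varphi_\lambda$ to pieces of $\surface_\C$ that reach into purely imaginary directions, where each mode $a_\xi e^{2\pi i\langle\xi,z\rangle}$ picks up a weight $e^{-2\pi\langle\xi,\operatorname{Im}z\rangle}$ that exponentially favours a fixed cap of frequency directions. Since $\surface$ is not flat there is a point $p\in\surface$ at which some principal curvature is nonzero; in Euclidean coordinates centred at $p$ with $T_p\surface=\{x_d=0\}$ the surface is a graph $x_d=f(x')$, $x'=(x_1,\dots,x_{d-1})$, with $f(0)=0$, $\nabla f(0)=0$, $\operatorname{Hess}f(0)\neq 0$, and after rotating the $x'$-variables I may assume $\partial^2 f/\partial x_1^2(0)=2c\neq 0$. (The rotation only replaces $\Z^d$ by a rotated lattice, which is harmless, as the arithmetic inputs used below --- the bound $\#\{\xi\in\Z^d:|\xi|=\lambda\}\ll_\varepsilon\lambda^{d-2+\varepsilon}$ and the cluster structure of lattice points on a sphere --- are invariant under rotations.) Being real-analytic, $f$ extends holomorphically to a polydisc $\{|z_j|<r_0\}$, $r_0=r_0(\surface)>0$, so $\surface_\C=\{z_d=f(z')\}$ is a complex hypersurface containing $\surface$; and since $\varphi_\lambda$ is entire and vanishes on $\surface$, a maximal totally real submanifold of $\surface_\C$, it vanishes identically on $\surface_\C$.

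Next I would probe $\surface_\C$ in imaginary directions. For real $u,v$ with $|u|+|v|<\delta$ (a fixed small parameter) take the point of $\surface_\C$ with $z_1=u+iv$, $z_2=\cdots=z_{d-1}=0$, $z_d=f(z_1,0,\dots,0)=cz_1^2+O(z_1^3)$, whose imaginary part is $\operatorname{Im}z=(v,0,\dots,0,2cuv)+O(\delta^3)$. Then $\varphi_\lambda|_{\surface_\C}\equiv 0$ becomes
\[
F(u,v)\ :=\ \sum_{|\xi|=\lambda}a_\xi\,e^{-2\pi v\,\ell_\xi(u)}\,e^{2\pi i\theta_\xi(u,v)}\ \equiv\ 0,\qquad \ell_\xi(u)=\xi_1+2cu\,\xi_d,
\]
with $\theta_\xi$ a real phase and the (uniformly controlled, for $\delta$ small) higher-order terms absorbed into the exponents. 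For $v>0$ the weight $e^{-2\pi v\ell_\xi(u)}$ is largest for the $\xi$ minimising $\ell_\xi(u)=\langle\xi,(1,0,\dots,0,2cu)\rangle$, i.e.\ for $\xi/\lambda$ near $-(1,0,\dots,0,2cu)/\sqrt{1+4c^2u^2}$; as $u$ runs over $[-\delta,\delta]$ and $v$ changes sign these distinguished directions sweep out a fixed neighbourhood of $\pm e_1$, and I would let $\Omega_\surface$ be a slightly smaller cap inside it, about the curved principal direction at $p$.

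To bound a coefficient, fix $\xi^\ast$ with $\xi^\ast/\lambda\in\Omega_\surface$ and pick $u$ so that $\xi^\ast$ nearly minimises $\ell_\cdot(u)$; letting $\xi^{\ast\ast}$ realize the second smallest value, the triangle inequality in $F(u,v)=0$ together with $\#\{|\xi|=\lambda\}\ll_\varepsilon\lambda^{d-2+\varepsilon}$ and the choice $v\sim\delta$ gives
\[
|a_{\xi^\ast}|\ \ll\ \|\varphi_\lambda\|_2\,\lambda^{d/2}\,e^{-c\delta\,\Delta},\qquad \Delta=\ell_{\xi^{\ast\ast}}(u)-\ell_{\xi^\ast}(u)>0.
\]
A frequency at angular distance $\eta$ from $\xi^\ast$ produces a gap $\Delta\asymp\lambda\eta^2$, so this reads $|a_{\xi^\ast}|\ll\|\varphi_\lambda\|_2\lambda^{d/2}e^{-c\delta\lambda\eta_{\min}^2}$, with $\eta_{\min}$ the angular distance from $\xi^\ast$ to the nearest other lattice point on $|\xi|=\lambda$. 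As soon as $\eta_{\min}\gg\lambda^{-1/2+\varepsilon}$ this is $\ll\|\varphi_\lambda\|_2\lambda^{-N}$ for every $N$, which is what we want.

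The main obstacle is that lattice points on $|\xi|=\lambda$ can lie far closer together than $\lambda^{-1/2}$, so the dominant frequency need not be isolated; in the cylindrical examples $\varphi_0(x,y)\cos 2\pi nz$ an entire cluster of modes near the flat principal direction cancels in $F$ and admits no coefficient bound, which is exactly why $\Omega_\surface$ must be kept away from the non-curved principal directions. Controlling this is where the arithmetic ingredient on the cluster structure of lattice points on the sphere enters: an unresolvable cluster near a direction $\omega$ forces its members to differ by short integer vectors that are almost orthogonal to the whole tangent space $T_p\surface$, hence almost parallel to the conormal $\nu_p$ --- so, choosing $p$ (along the curved locus of $\surface$, where $\nu_p$ varies real-analytically and is generically irrational) so that $\nu_p$ is not close to any bounded-height rational direction, no such cluster meets a small enough cap $\Omega_\surface$ about the curved principal direction, and $\eta_{\min}\gg\lambda^{-1/2+\varepsilon}$ holds there. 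Any residual bounded cluster is dismantled by probing several of the $z_j$-directions at once, and by averaging over base points $p$, so that the modes are separated by the distinct oscillation rates $\partial_u\theta_\xi\approx\ell_\xi$; the arithmetic input again bounds the number of modes at play. Finally a bootstrap --- removing the modes now known to be $O(\lambda^{-c}\|\varphi_\lambda\|_2)$ and re-running the estimate --- upgrades the saving to $\lambda^{-N}$ for every $N$, completing the proof.
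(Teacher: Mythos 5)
Your opening moves are the right ones and match the paper: complexify $\surface$ to $\surface^\C$, observe that $\varphi_\lambda$ vanishing on $\surface$ forces $\varphi_\lambda^\C\equiv 0$ on $\surface^\C$, and exploit the weight $e^{-2\pi\langle\xi,\operatorname{Im}Z\rangle}$ to single out frequencies near a fixed cap of directions determined by the curved part of $\surface$. But the passage from this setup to a bound on an individual $a_{\xi^\ast}$ is where the proposal has a genuine gap, and it is precisely the part the paper works hardest on.

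The pointwise triangle-inequality argument
$|a_{\xi^\ast}|\ll\|\varphi_\lambda\|_2\lambda^{d/2}e^{-c\delta\Delta}$
only works when the distinguished frequency $\xi^\ast$ is \emph{isolated} among lattice points in its angular neighbourhood, and you correctly flag that this can fail because of clusters. The way you propose to get around it, however, does not hold up. First, the geometric claim that members of a cluster near direction $\omega$ ``differ by short integer vectors almost orthogonal to the whole tangent space, hence almost parallel to the conormal $\nu_p$'' is false: if $\xi,\xi'$ are nearby points on the sphere of radius $\lambda$ pointing roughly in direction $\omega$, then $\xi-\xi'$ is merely close to the tangent hyperplane of the sphere at $\lambda\omega$, i.e.\ approximately orthogonal to the single direction $\omega$; in dimension $d\geq 3$ that leaves a $(d-1)$-dimensional space of possibilities, not the line spanned by $\nu_p$. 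Second, you cannot avoid clusters by a judicious choice of base point $p$: the cap $\Omega_\surface$ must be \emph{fixed} (depending only on $\surface$), while the cluster structure of $\vE_\lambda=\Z^d\cap\lambda S^{d-1}$ changes with $\lambda$; for a fixed cap, clusters \emph{will} intersect it for infinitely many $\lambda$. The ``probe several $z_j$'s'', ``average over $p$'', and ``bootstrap'' steps are left as sketches and do not resolve this.

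The paper handles exactly this obstruction, but by a different mechanism. Instead of probing $\surface^\C$ along a $2$-parameter family $(u,v)$ and estimating pointwise, it integrates $|\varphi^\C(Z)e^{-2\pi i\langle\xi_0,Z\rangle}|^2$ over a full $(d-1)$-dimensional submanifold $\surface(v_0,\tau)\subset\surface^\C$ (constructed in Lemma~\ref{lem:directions3} and the subsequent discussion), with $\operatorname{Im}Z$ confined to a thin shell $\tau<|\operatorname{Im}Z|<2\tau$ parallel to $v_0=-\xi_0/|\xi_0|$. The exponential cut (Lemma~\ref{lem:short sum}) reduces the sum to frequencies in a cap of radius $\asymp\sqrt{\lambda}\log\lambda$, and then the core of the argument is a Parseval-type \emph{lower} bound (Proposition~\ref{lem:lower bd for short mean square}) proved by induction on the affine dimension of the frequency set. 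The induction is what tames clusters: Proposition~\ref{Separation lemma} splits $\vE'$ into pieces $\vE_\alpha$ with diameter $\lambda^{1/(d+1)}$ and pairwise separation $\gg\lambda^{\delta(d)}$; cross-cluster terms are killed by the oscillatory integral decay of $J_{\xi,\xi'}$ (Lemma~\ref{lem:oscillatory integral}), which is where non-flatness enters via the nonstationarity of the phase $\phi_u$ on $\surface(v_0,\tau)$ (Lemma~\ref{non-cylindrical lemma}); and within a cluster, Jarnik's theorem confines $\vE_\alpha$ to a hyperplane, so the affine dimension drops and the inductive hypothesis applies. No choice of base point, Diophantine genericity of $\nu_p$, or bootstrap is needed; the $L^2$ structure and the recursive cluster decomposition do the work. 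You would need to replace the triangle-inequality step and the cluster-avoidance claim with an argument of this kind to make the proposal into a proof.
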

Here and elsewhere we use the notation $f\ll g$ to mean that there is
some $c>0$ so that $f\leq cg$.

\subsection{About the proofs}
At this time we do not have an analogue of the uniform restriction theorem \eqref{restriction lower bound} in
arbitrary dimension (except dimension three \cite{BRdim3}). We refer to \cite{BGT, Hu} for upper bounds in a more general context.
%\begin{verbatim}  say something about dim 3 and Sigma of positive curvature\end{verbatim}
We are also not able to use an analogue of the ``abc theorem'' as in \S~\ref{sec:abc}.
However, we retain the feature of passing to the complex domain, combined with some ideas from the $L^2$-restriction theorem,
to prove Theorem~\ref{thm:nonflat}  in arbitrary dimension.
The idea is that the eigenfunctions $\varphi$ are naturally
extended to holomorphic functions $\varphi^\C$ on the complexification $\C^d/\Z^d$ of $\TT^d$,
and since $\surface$ is real-analytic it too admits a complexification
$\surface^\C\subset \C^d/\Z^d$.
We then show in \S~\ref{sec:geometric} that there is a fixed cap of directions $\Omega_\surface\subset S^{d-1}$, and $\tau=\tau_\surface>0$ depending only on $\surface$,
so that for $v\in \Omega_\surface$,  there is a submanifold $\surface(v,\tau)\subset \surface^\C$
such that for all $Z\in \surface(v,\tau)$ the imaginary part  $\im Z=t v$ is parallel to $v$, and $\tau<t<2\tau$.
%\item $\surface(v,\tau)\cap \surface\subseteq \{p\in \surface: N_p\perp v \}$, i.e. at the real
%points $p$ of $\surface(v)$, the normal vector $N_p$ is orthogonal to $v$, equivalently $v$ is tangent to
%$\surface$ at $p$.
For frequencies $\xi$ for which $v:=-\xi/|\xi|\in \Omega_\surface$, we give  in \S~\ref{lower bound for integral}
a lower bound for the $L^2$-restriction of $\varphi^\C(Z)e^{-2\pi i \langle \xi,Z\rangle}$ to $\surface(v,\tau)$ of the form
\begin{equation}\label{eq:int mean square}
 \int_{\surface(v,\tau)} \left| \varphi^\C(Z)e^{-2\pi i \langle   \xi,Z\rangle}
\right|^2 d\mu(Z)   \gg |a_\xi|^2 + O(\frac 1{\lambda^N})
\end{equation}
where $d\mu$ is a smooth measure on $\surface(v,\tau)$. To compare
with the restriction theorem \eqref{restriction lower bound}, note
that in that case the RHS is $\sum_\xi|a_\xi|^2$.
That $\surface$ is not flat is used to guarantee decay of certain oscillatory integrals in \S~\ref{sec:oscillatory integral}.
On the other hand, if $\varphi$ vanishes on $\surface$ then its holomorphic extension $\varphi^\C$ will vanish on $\surface^\C$ and in particular the LHS of \eqref{eq:int mean square}  will vanish. This will prove Theorem~~\ref{thm:nonflat}.

To get vanishing of {\em all} Fourier coefficients and hence Theorem~\ref{thm:nonzero curvature},
we need all principal curvatures to be nonzero.
The extra argument needed to deduce it from Theorem~\ref{thm:nonflat}  is given in \S~\ref{sec:all freqs}.
In all of our results we need an arithmetic ingredient,
concerning the structure of lattice points on the sphere, which is given in \S~\ref{sec:cluster}.

\subsection{The sphere and Legendre polynomials}

One may investigate corresponding questions for the nodal sets of
eigenfunctions on other manifolds. However even in seemingly simple
situations the problem is as yet open. Consider the situation on the
two-dimensional sphere $S^2\subset \R^3$, where the Laplace-Beltrami
operator has eigenvalues $n(n+1)$ with the dimension of the
corresponding eigenspace $H_n$ being $2n+1$. We use spherical
coordinates: the colatitude $\theta\in [0,\pi]$ and longitude
$\phi\in [0,2\pi]$. In these coordinates, we may take as a basis of
the eigenspace $H_n$ the spherical harmonics
$$Y_n^m(\theta,\phi)=P_n^m(\cos \theta)e^{i m\phi}, \quad
-n\leq m\leq n
$$
where $P_n^m(x)$ are Legendre functions  (to get real valued
functions replace $e^{im\phi}$ by sine and cosine). In particular for
$m=0$ one gets the zonal spherical functions
$Y_n(\theta,\phi)=P_n(\cos \theta)$ where $P_n(x)$ are Legendre polynomials
\begin{equation}
 P_n(x) = \frac 1{2^n} \sum_{j=0}^{\lfloor n/2 \rfloor} (-1)^j \binom{n}{j}\binom{2n-2j}{n-2j} x^{n-2j}
\end{equation}
which are orthogonal polynomials on  the interval $[-1,1]$. The Legendre polynomial $P_n(x)$ has $n$ simple zeros $x_{n,j}\in[-1,1]$.
% with weight $1$,
%$$ \int_{-1}^1 P_m(x) P_n(x)dx = \delta_{m,n} \frac 2{2n+1}$$
%The generating function of the Legendre polynomials is
%\begin{equation}
% \frac 1{\sqrt{1-2tx +t^2}} = \sum_{n=0}^\infty P_n(x)t^n
%\end{equation}
%The Legendre polynomials have rational coefficients, and can be
%written in the form
%and also
%\begin{equation}
% P_n(x) = \sum_{k=0}^n (-1)^k \binom{n}{k}^2 (\frac{1+x}2)^{n-k}
% (\frac {1-x}2)^k
%\end{equation}

The nodal set of the zonal spherical harmonic
$Y_n^0$ is the union of  the parallels $\theta=\theta_{n,j}$,
$j=1,\dots,n$ where $x_{n,j} = \cos \theta_{n,j}$ are the zeros of the Legendre polynomial $P_n(x)$.
Since $P_n(-x) = (-1)^n P_n(x)$, for odd $n$
we have $P_n(0)=0$, and so we find that the zonal spherical harmonics
$Y_n^0$  vanish on the equator $\theta=\pi/2$ for odd $n$.

The equator $\theta=0$ was singled out by our choice of coordinates,
but by symmetry a corresponding construction works for all rotations
of the equator.
Thus every closed geodesic on the sphere lies within the nodal set of
eigenfunctions with arbitrarily large eigenvalues, as happens on the
flat torus.

A simple version of our results for the flat torus is to ask whether the other parallels (besides the equator)
lie on nodal sets of infinitely many eigenfunctions. As a special case, one can conjecture that a parallel (other than the equator) cannot lie within the nodal set of more than one zonal spherical harmonic.
This special case is equivalent to the conjecture of Stieltjes \cite{Stieltjes} that $P_m(x)$ and $P_n(x)$ have no common roots except $x=0$ when $m,n$ are both odd.
In fact, in the same letter \cite{Stieltjes}, Stieltjes put forth the stronger conjecture that
$P_{2n}(x)$ and $P_{2n+1}(x)/x$ are irreducible.
This was taken up by Holt \cite{Holt} in 1912, and by Schur and his
student Hildegard Ille \cite{Ille}, see also \cite{Wahab1, Wahab2}.
Around 1960 irreducibility was known for all $n\leq 500$ with a few exceptions
(which nowadays are easily checked by computer).

\bigskip

\noindent{\bf Acknowledgments:} We thank Aaron Levine for his comments
on Appendix~\ref{sec:abc}, and the anonymous referees for their
comments and suggestions. 
J.B. was supported in part by N.S.F. grants DMS 0808042 and DMS 0835373. Z.R. was supported by the Oswald Veblen Fund 
during his stay at the Institute for Advanced Study and by the Israel Science Foundation  (grant No. 1083/10).

\newpage
\section{Cluster structure of lattice points on the sphere}\label{sec:cluster}

For  $R\geq 1$, we denote by $\vE= \vE_R = \Z^d\cap RS^{d-1}$ the set
of lattice points on the sphere of radius $R$ (assuming $R^2$ is an integer):
$$ \vE_R:=\{\xi\in \Z^d: |\xi| = R\}$$
As is well known, the number of points in $\vE$ satisfies $\vE_R\ll R^\epsilon$ for all $\epsilon>0$ in dimension $d=2$, while in higher dimension $\#\vE_R$ grows polynomially.
Jarnik's theorem \cite{Jarnik} places constraints on location of lattice points in small caps:
\begin{theorem}[Jarnik's Theorem]\label{Jarnik's thm}
 There is some $c_d>0$ so
that all lattice points in the cap
\begin{equation*}
 \vE \cap \{|x|=R: |x-x_0|< c_d R^{\frac  1{d+1}} \}
\end{equation*}
lie on an affine hyperplane.
\end{theorem}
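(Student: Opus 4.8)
The plan is to bound the determinant of the difference vectors of any $d+1$ lattice points in the cap and then invoke integrality. It suffices to prove that any $d+1$ points $\xi_0,\dots,\xi_d\in\vE$ with $|\xi_i-\xi_0|<\rho:=2c_dR^{1/(d+1)}$ are affinely dependent: granting this for every such $(d+1)$-tuple, the vectors $\xi_i-\xi_0$ over all lattice points in the cap span a subspace of dimension at most $d-1$ --- otherwise $d$ of them would be linearly independent, contradicting affine dependence of the corresponding $d+1$ points --- so the whole cap lies in an affine subspace of dimension $\le d-1$, hence on an affine hyperplane.

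So fix such $\xi_0,\dots,\xi_d$ and set $v_i=\xi_i-\xi_0\in\Z^d$; the goal becomes $\det(v_1,\dots,v_d)=0$. The key geometric input is that difference vectors of points on a sphere are nearly tangent to it: since $|\xi_i|=|\xi_0|=R$ we have $\langle v_i,\xi_0\rangle=\langle\xi_i,\xi_0\rangle-R^2=-\tfrac12|v_i|^2$, so in the decomposition $v_i=v_i^\perp+v_i^\parallel$ with $v_i^\parallel\in\R\xi_0$ and $v_i^\perp\in\xi_0^\perp$ one gets $|v_i^\parallel|=|v_i|^2/(2R)\le\rho^2/(2R)$ while $|v_i^\perp|\le|v_i|\le\rho$. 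Now expand $\det(v_1,\dots,v_d)$ multilinearly in these decompositions. Every term containing two or more of the mutually parallel vectors $v_i^\parallel$ vanishes, and the single term containing none vanishes because $v_1^\perp,\dots,v_d^\perp$ are $d$ vectors lying in the $(d-1)$-dimensional space $\xi_0^\perp$. Only the $d$ terms with exactly one factor $v_i^\parallel$ survive, and by Hadamard's inequality each has absolute value at most $|v_i^\parallel|\prod_{j\ne i}|v_j^\perp|\le\rho^{d+1}/(2R)$. Hence $|\det(v_1,\dots,v_d)|\le d\rho^{d+1}/(2R)=\tfrac{d}{2}(2c_d)^{d+1}$.

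Since $v_1,\dots,v_d$ have integer entries, $\det(v_1,\dots,v_d)\in\Z$; choosing the absolute constant $c_d>0$ small enough that $\tfrac{d}{2}(2c_d)^{d+1}<1$ forces this integer to be $0$, which is the claimed affine dependence, and the reduction of the first paragraph then gives the theorem.

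I expect no serious obstacle: this is a short determinant estimate combined with integrality. The only point requiring care is the multilinear bookkeeping in the second paragraph --- identifying exactly which terms survive and checking that each carries one factor of order $\rho^2/R$ and $d-1$ of order $\rho$ --- since it is precisely the resulting bound of order $\rho^{d+1}/R$ against the integrality threshold $1$ that produces the sharp radius $R^{1/(d+1)}$; any cruder estimate loses the exponent.
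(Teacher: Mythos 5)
The paper states Jarnik's theorem without proof, citing \cite{Jarnik} directly, so there is no in-paper argument to compare against. Your proof is a correct, self-contained version of the classical determinant-plus-integrality argument for this result: the decomposition of each difference vector $v_i=\xi_i-\xi_0$ into tangential and normal parts, the observation $\langle v_i,\xi_0\rangle=-\tfrac12|v_i|^2$ giving $|v_i^\parallel|\le\rho^2/(2R)$, the multilinear expansion of the determinant in which only the $d$ terms with exactly one normal factor survive, Hadamard's inequality bounding each by $\rho^{d+1}/(2R)$, and finally integrality forcing the determinant to vanish when $c_d$ is small — each step is correct, and the reduction from ``every $d+1$ points are affinely dependent'' to ``all points lie on one affine hyperplane'' is also handled correctly via the dimension of the span of the difference vectors. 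The only cosmetic remark is that $c_d$ is of course a constant depending on $d$ (since $d$ enters the threshold $\tfrac d2(2c_d)^{d+1}<1$), not an absolute one as your final paragraph briefly says; the theorem as stated also only asks for $c_d$ depending on $d$, so this does not affect anything.
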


We will need more information about the ``cluster structure'' of the set $\vE$.
We define recursively  two sequences $c(d), \delta(d)$ with initial conditions
\begin{equation}
 \delta(2)<\frac 13, \quad c(2)=0
\end{equation}
and satisfying for $d\geq 3 $,
\begin{equation}\label{def of c(d)}
 c(d)  = 2\max \left( c(d-1),\frac {d}{\delta(d-1)} \right)
\end{equation}
\begin{equation}\label{def of delta(d)}
 \delta(d) = \frac 1{2(d+1)(1+c(d))}
\end{equation}

\begin{proposition}\label{Separation lemma}
Let $\vE\subseteq \Z^d \cap\{|x|=R\}$ be a subset of the set of lattice
  points on the sphere of radius $R$.
If $\rho<R^{\delta(d)}$ then:

a)  For any subset  $\vF \subset \vE$,
there is an overset $\vF \subseteq \tilde \vF \subset \vE$ satisfying
\begin{equation}\label{comparable diameter}
 %\mbox{comparable diameter } \quad
\diam(\tilde \vF) \leq \diam(\vF) + \rho^{1+c(d)}
\end{equation}
\begin{equation}\label{separation of F}
% \mbox{separation } \quad
\dist(\tilde \vF,\vE\backslash \tilde \vF) > \rho
\end{equation}

b)   We may decompose $\vE = \coprod_\alpha \vE_\alpha$ into subsets
  satisfying
  \begin{equation}\label{Separation of clusters}
    \dist (\vE_\alpha,\vE_\beta) >\rho, \quad \alpha\neq \beta
  \end{equation}
  \begin{equation}
    \diam \vE_\alpha <\rho^{1+c(d)}
  \end{equation}
\end{proposition}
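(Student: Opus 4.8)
The plan is to reduce the whole proposition to a single bound on the diameters of the connected components of a ``proximity graph'' on $\vE$, and to prove that bound by induction on the dimension $d$, using Jarnik's theorem (Theorem~\ref{Jarnik's thm}) at each step to pass to a hyperplane section. We may assume throughout that $R$, and $\rho$, exceed constants depending only on $d$: if $\rho<\sqrt2$ then the proximity graph below has no edges and both assertions are trivial, and in the remaining bounded ranges the caps occurring are of bounded size, so Jarnik's theorem applies and the few numerical inequalities needed hold in view of the slack in the recursion.

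Part (b) follows from part (a) by iteration: pick $\xi\in\vE$, apply (a) to $\vF=\{\xi\}$ to extract a subset $\vE_1\subseteq\vE$ with $\diam\vE_1\le\rho^{1+c(d)}$ and $\dist(\vE_1,\vE\setminus\vE_1)>\rho$, then repeat on $\vE\setminus\vE_1$, and so on; each set removed is $\rho$-separated from all that still remains, so the pieces produced are pairwise $\rho$-separated, and the process terminates because $\vE$ is finite. For part (a), let $G_\rho$ be the graph on $\vE$ whose edges join pairs of points at distance $\le\rho$; given $\vF$, take $\tilde\vF$ to be the union of all connected components of $G_\rho$ that meet $\vF$. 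Then \eqref{separation of F} holds by construction, and \eqref{comparable diameter} follows from the triangle inequality as soon as every component of $G_\rho$ is known to have diameter at most $\frac12\rho^{1+c(d)}$ (the harmless factor being absorbed by the slack in the recursion). So everything reduces to the following.

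\textbf{Claim.} If $\rho<R^{\delta(d)}$, then every connected component $\mathcal C$ of $G_\rho$ has $\diam\mathcal C\le\frac12\rho^{1+c(d)}$.

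I would prove the Claim by induction on $d$. When $d=2$: if a component $\mathcal C$ had three points, then, using that $\mathcal C$ is connected, one finds three of its points lying pairwise within distance $2\rho$; since $\delta(2)<\frac13$ and $R$ is large, $2\rho<c_2R^{1/3}$, so these three points lie in a single cap and, by Jarnik's theorem, are collinear --- impossible for points on a circle. Hence each component has at most two points, and so diameter $\le\rho$. For $d\ge3$, suppose some component $\mathcal C$ violated the Claim. Fix $\xi_0\in\mathcal C$; there is then a point of $\mathcal C$ at distance $>\frac14\rho^{1+c(d)}$ from $\xi_0$, and joining it to $\xi_0$ by a chain of $\rho$-steps inside $\mathcal C$ and truncating at the first vertex at distance $>\frac14\rho^{1+c(d)}$ from $\xi_0$ produces a $G_\rho$-connected set $P$ with $\diam P>\frac14\rho^{1+c(d)}$ lying in a cap of radius $<\frac12\rho^{1+c(d)}$. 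Since \eqref{def of delta(d)} gives $\delta(d)(1+c(d))=\frac1{2(d+1)}$, this radius is $<R^{1/(2(d+1))}<c_dR^{1/(d+1)}$, and so by Jarnik's theorem $P$ lies on an affine hyperplane $H$. Thus $P$ is a $G_\rho$-connected set of lattice points on the sphere $H\cap\{|x|=R\}$, whose radius $R_H$ satisfies $R_H\le R$. If $R_H<\frac18\rho^{1+c(d)}$ then $\diam P\le 2R_H<\frac14\rho^{1+c(d)}$, a contradiction; otherwise the inequality $c(d)\ge 2d/\delta(d-1)$ built into \eqref{def of c(d)} forces $\rho<R_H^{\delta(d-1)}$, so the inductive hypothesis applies inside $H$ and bounds $\diam P$ by a quantity of order $\rho^{1+c(d-1)}$, which is $<\frac14\rho^{1+c(d)}$ because $c(d)$ dwarfs $c(d-1)$ --- again a contradiction. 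This establishes the Claim, and hence the proposition.

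The step I expect to be the genuine obstacle is the passage into the hyperplane $H$, since the lattice naturally available there is $\Z^d\cap H$, a rank-$(d-1)$ sublattice of $\Z^d$ but not a rescaled copy of $\Z^{d-1}$. What rescues the argument is that the relevant points of $P$ all lie in one coset of the sublattice $\Lambda'$ generated by the chain's step vectors, each of which has length $\le\rho$; hence $\Lambda'$ has \emph{all} its successive minima $\le\rho$, and passing to a reduced basis identifies $\Lambda'$ with $\Z^{k}$, $k\le d-1$, while distorting Euclidean lengths by only a bounded power of $\rho$ --- at the cost of turning the round sphere $H\cap\{|x|=R\}$ into an ellipsoid of eccentricity $\lesssim\rho$. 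So the induction must really be run through a slightly more general statement: lattice points of $\Z^{k}$ on an ellipsoid of controlled eccentricity, with all bounds permitted to depend polynomially on the eccentricity (which in turn requires an ellipsoidal version of Jarnik's theorem, with the resolution degrading with eccentricity), after which one checks that the polynomial-in-$\rho$ losses incurred at each descent are reabsorbed. This is precisely what the recursion \eqref{def of c(d)}--\eqref{def of delta(d)} is built for: $\delta(d)$ is taken small enough that $\rho^{1+c(d)}<R^{1/(2(d+1))}$ stays well inside the range where Jarnik's theorem applies, while $c(d)$ is forced to grow so rapidly in $d$ that $\rho^{1+c(d-1)}$, even inflated by the distortion factors, remains far below $\rho^{1+c(d)}$. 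Carrying out this bookkeeping carefully --- rather than the Jarnik--induction skeleton sketched above --- is where the real work lies.
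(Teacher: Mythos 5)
Your proposal follows essentially the same route as the paper: both reduce the proposition to a quantitative statement about $\rho$-connected sets of lattice points on the sphere, proved by induction on the dimension with Jarnik's theorem used to pass into a hyperplane section. The paper's Lemma~\ref{lem:chain} (a $\rho$-chain has length $\ll\rho^{c(d)}$) and your Claim (a component of $G_\rho$ has diameter $\le\frac12\rho^{1+c(d)}$) are interchangeable, and your case split on the section radius $R_H$ is a clean variant of the paper's split on $\rho$ versus $R_1^{\delta(d-1)}$: your small-$R_H$ case kills the chain directly via $\diam P\le 2R_H$, where the paper instead counts all lattice points crudely by putting the section sphere inside a $d$-dimensional cube.

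Where you diverge is the final paragraph on the sublattice $\Z^d\cap H$. You are right that this is not a rescaled $\Z^{d-1}$, and that the paper's phrase ``we get a $\rho$-chain of length $K'$ in dimension $d-1$'' glosses over this, so the inductive claim should really be stated for sublattices of $\Z^d$. But the reduced-basis/ellipsoid/distortion machinery you outline is far heavier than what is needed, and the recursion \eqref{def of c(d)}--\eqref{def of delta(d)} is not calibrated to absorb such losses: the factor $2$ in \eqref{def of c(d)} and the factor $\frac12$ in \eqref{def of delta(d)} serve the two-case analysis and the Jarnik threshold, not any eccentricity. The relevant observation is that the cap form of Jarnik's theorem depends on the lattice only through the volume of a fundamental domain: $k+1$ affinely independent points of a rank-$k$ lattice, lying in a cap of diameter $\delta$ on a $(k-1)$-sphere of radius $R$, span a $k$-simplex whose volume is $O_k(\delta^{k+1}/R)$ and which is at least $\frac1{k!}$ times the covolume of the lattice; and any sublattice of $\Z^d$ has covolume $\ge 1$, since the Gram determinant of an integral basis is a positive integer. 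Thus the Jarnik constant can only improve under descent, and once the inductive statement is phrased for arbitrary full-rank sublattices of $\Z^d$ in a hyperplane, the paper's argument goes through verbatim with no distortion factors to track, no ellipsoidal Jarnik, and no extra bookkeeping.
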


To prove Proposition~\ref{Separation lemma}, we will need:
\begin{lemma}\label{lem:chain}
If $1<\rho<R^{\delta(d)}$, and  $x_0,\dots, x_K\in \vE$ are distinct
elements satisfying
\begin{equation}
 \dist(x_i,x_{i+1}) \leq \rho, \quad i=0,\dots,K-1
\end{equation}
then $K \ll \rho^{c(d)}$.
\end{lemma}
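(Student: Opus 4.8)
The plan is to run an induction on the dimension $d$, using Jarnik's theorem together with the recursive definitions of $c(d)$ and $\delta(d)$. The base case $d=2$ is where the arithmetic of lattice points on circles enters: a chain $x_0,\dots,x_K$ of distinct lattice points on the circle of radius $R$ with consecutive gaps $\le\rho$ has all its points inside an arc of length $\ll K\rho$; provided $K\rho\ll R^{1/3}$ (which holds since $\rho<R^{\delta(2)}$ and $\delta(2)<1/3$, once we know $K$ is not too large), Jarnik forces these points onto a line, but a line meets a circle in at most two points, so $K\le 1$. More carefully, one bootstraps: either $K\rho < c_2 R^{1/3}$, in which case Jarnik applies and $K\le 1$; or $K\rho \ge c_2R^{1/3}$, which combined with $\rho<R^{\delta(2)}$ gives $K \gg R^{1/3-\delta(2)}$, and this has to be reconciled with a trivial upper bound $K \le \#\vE_R \ll R^\epsilon$, a contradiction for large $R$ since $\delta(2)<1/3$. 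Hence $K \ll 1 = \rho^{c(2)}$, as $c(2)=0$.

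For the inductive step, assume the statement in dimension $d-1$ and take a chain $x_0,\dots,x_K$ in $\vE\subset\Z^d\cap\{|x|=R\}$ with consecutive distances $\le\rho$. I would distinguish two regimes according to the diameter of the chain. If $\diam\{x_0,\dots,x_K\}< c_d R^{1/(d+1)}$, then by Jarnik's theorem all the $x_i$ lie on a common affine hyperplane $H$; intersecting $H$ with the sphere $RS^{d-1}$ gives a $(d-2)$-dimensional sphere of some radius $R'\le R$, and after an integral (or at least bounded-denominator) change of coordinates the $x_i$ become lattice-like points on a sphere in dimension $d-1$. One then applies the inductive hypothesis — here is where the condition $\rho < R^{\delta(d)}$ must be shown to imply $\rho < (R')^{\delta(d-1)}$, or a suitable variant, so that the induction is legitimate — to conclude $K \ll \rho^{c(d-1)} \le \rho^{c(d)}$. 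If instead $\diam\{x_0,\dots,x_K\}\ge c_d R^{1/(d+1)}$, then since consecutive steps are of size $\le\rho$ we get the crude but decisive bound $K\rho \ge \diam \ge c_d R^{1/(d+1)}$, hence $K \gg R^{1/(d+1)}/\rho \gg R^{1/(d+1)-\delta(d)}$; by the choice \eqref{def of delta(d)}, $\delta(d)<\frac1{2(d+1)}$, so $K \gg R^{1/(2(d+1))}$ grows like a positive power of $R$, which again must be ruled out — either against a polynomial bound $\#\vE_R \ll R^{d-2+\epsilon}$ that is dwarfed... no: in high dimensions $\#\vE_R$ is itself a positive power of $R$, so this crude comparison is not enough and one instead feeds the long chain back into Jarnik by covering it with overlapping short sub-chains.

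The cleaner way to organize the large-diameter case, which I expect to be the main obstacle, is this: partition the chain into consecutive blocks each of diameter just below $c_dR^{1/(d+1)}$ (possible since single steps are $\le\rho\ll R^{1/(d+1)}$); each block lies on an affine hyperplane by Jarnik, and successive blocks share a point, so consecutive hyperplanes meet; tracking how these hyperplanes can turn, and applying the $(d-1)$-dimensional bound on each hyperplane slice, one bounds the number of blocks and hence $K$. Making the constants match the recursion \eqref{def of c(d)}, $c(d)=2\max(c(d-1),d/\delta(d-1))$, is the delicate bookkeeping step — the factor $2$ and the term $d/\delta(d-1)$ are presumably exactly what is needed to absorb, respectively, the doubling from gluing blocks and the loss incurred when converting the ambient radius $R$ into the smaller radius $R'$ of a hyperplane section before invoking the inductive hypothesis. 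Once Lemma~\ref{lem:chain} is in hand, Proposition~\ref{Separation lemma} follows by a standard greedy clustering argument: grow $\tilde\vF$ by repeatedly absorbing any point of $\vE$ within distance $\rho$, which terminates after $\ll\rho^{c(d)}$ steps by the Lemma, giving the diameter increase $\le \rho\cdot\rho^{c(d)} = \rho^{1+c(d)}$ in \eqref{comparable diameter} and the separation \eqref{separation of F}; part (b) is the same construction applied to singletons.
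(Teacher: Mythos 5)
Your base case is correct, though more roundabout than necessary. The paper does not need the bootstrap: it observes that a chain of length two, $x_0,x_1,x_2$, already puts three distinct lattice points inside a cap of diameter $<2\rho<c_2R^{1/3}$, forcing them to be collinear by Jarnik, which is impossible for three distinct points on a circle. So $K\le 1$ directly, with no appeal to the crude bound $\#\vE_R\ll R^\epsilon$ and no need to control the diameter of the whole chain.

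In the inductive step there is a genuine gap, and you identify part of it yourself but do not resolve it. Your case split is on whether the \emph{whole chain} has diameter $\lessgtr c_dR^{1/(d+1)}$, and the large-diameter branch leads you into the block-decomposition idea, which you acknowledge is not worked out; I do not see how ``tracking how these hyperplanes can turn'' yields a bound on the number of blocks. The paper sidesteps the large-diameter case entirely: argue by contradiction, assume $K>\rho^{c(d)}$, and look only at the initial subchain $x_0,\dots,x_{K'}$ of length $K'=\lfloor\rho^{c(d)}\rfloor$. Its diameter is automatically $\le K'\rho\ll\rho^{1+c(d)}<R^{\delta(d)(1+c(d))}=R^{1/(2(d+1))}=o(R^{1/(d+1)})$ by the very choice of $\delta(d)$, so Jarnik applies to it with no further case distinction on the geometry of the chain. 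The contradiction is then extracted from the subchain alone.

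The second, independent gap is the point you flag with ``the condition $\rho<R^{\delta(d)}$ must be shown to imply $\rho<(R')^{\delta(d-1)}$, or a suitable variant.'' This implication is simply false in general, because the hyperplane slice can be a sphere of very small radius $R_1$, and this is exactly where the paper's key case distinction lives. If $\rho<R_1^{\delta(d-1)}$ the inductive hypothesis gives $K'\ll\rho^{c(d-1)}=o(\rho^{c(d)})$, contradiction. If $\rho\ge R_1^{\delta(d-1)}$, the inductive hypothesis is unavailable, but then $R_1\le\rho^{1/\delta(d-1)}$ is small, and one bounds the \emph{total} number of lattice points on the $(d-2)$-sphere of radius $R_1$ crudely by $\ll(1+2R_1)^d\ll\rho^{d/\delta(d-1)}=o(\rho^{c(d)})$, again a contradiction. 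The recursion $c(d)=2\max\bigl(c(d-1),\,d/\delta(d-1)\bigr)$ is engineered precisely so that both branches lose against $K'\approx\rho^{c(d)}$; you guessed that the two terms absorb two different losses, but the relevant dichotomy is on the size of $R_1$ relative to $\rho$, not on doubling from gluing blocks. Without this case split your argument cannot be completed.

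Your reading of Proposition~\ref{Separation lemma} as a greedy absorption argument driven by the Lemma is correct and matches the paper.
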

\begin{proof}

We prove the claim by induction on the dimension $d$.

For $d=2$, we note that by Jarnik's theorem, there is some $c_2>0$ such that all lattice points in
an arc $\{|x|=R: |x-x_0|< c_2 R^{1/3} \}$ are co-linear, an in
particular there can be at most two of them. Thus if $\rho<\frac 12
c_2 R^{1/3}$ then we cannot have a chain $x_0,x_1,x_2$ with
$\dist(x_i,x_{i+1})<\rho$ since then we would have three lattice point in
cap of size $c_2R^{1/3}$.

Now let $d\geq 3$ and assume the contrary, that there is some chain $x_0,\dots x_K$ of length $K>\rho^{c(d)}$.
Let $K'=\lfloor \rho^{c(d)} \rfloor$ and consider the initial chain $C'=\{x_0,\dots x_{K'} \}$ of length $K'$.
The diameter of this chain is at most $\diam(C') \leq K'\rho \ll \rho^{c(d)+1} <R^{\delta(d)(1+c(d))}$ and hence
by \eqref{def of delta(d)}, $\diam(C')<R^{1/2(d+1)}=o(R^{1/(d+1)})$.
Therefore by Jarnik's theorem, this subchain is contained in some
hyperplane $H$, that is in the intersection of the sphere of radius
$R$ with the hyperplane $H$, which is a $(d-2)$-dimensional sphere of
some radius $R_1 \ll R$.

Thus we get a $\rho$-chain of length $K'$ in dimension $d-1$. There
are two possibilities:

1) If $\rho<R_1^{\delta(d-1)}$  then the inductive hypothesis allows
us to conclude
$K'<\rho^{c(d-1)} = o(\rho^{c(d)})$ by \eqref{def of c(d)} ,
contradicting $K'\approx \rho^{c(d)}$.

2) If $\rho >R_1^{\delta(d-1)}$ then we bound  the
number of lattice points on a $(d-2)$-dimensional sphere of radius
$R_1$ by $(1+2R_1)^d\ll R_1^d$ by replacing the sphere by a
$d$-dimensional cube containing the sphere (this is crude but uniform
with respect to the hyperplane $H$). Hence $K'\ll R_1^{d} \ll
\rho^{\frac{d}{\delta(d-1)} }$, which contradicts
$K'\approx \rho^{c(d)}$ by \eqref{def of c(d)}.
\end{proof}

%Note: Here is a direct argument for $d=3$:
%
%We may assume that $K\geq 2$. Take $\rho< \frac 14 c_3 R^{1/4}$. Then
%by Jarnik, in
%any chain of distinct points $x_k$, $x_{k+1}$, $x_{k+2}$, $x_{k+3}$ of
%length $3$  with
%$\dist(x_i,x_{i+1})<\rho$,  lies in a plane $H_k$. We claim that
%all these planes $H_k$ must coincide. Indeed, if contiguous
%planes $H_k$ and $H_{k-1}$ do not coincide they are either parallel
%(in which case they have empty intersection)
%or else they intersect in a line. On the other hand they contain
%the three distinct points $x_k,x_{k+1},x_{k+2}$ which lie on the
%sphere $RS^2$ .
%Thus the intersection $H_k\cap H_{k-1}$ is non-empty and in fact cannot
%be a line, since  a line intersect the sphere in at most two points.
%Thus we deduce that all points in the chain lie in one plane $H$.
%
%The intersection of $H$ and the sphere $RS^2$ is a circle of radius
%$R_1$. There are two options:
%
%a) $\rho< \frac 12 c_2 R_1^{1/3}$. Then the case $d=2$ we must have
%$K\leq 1$.
%
%b) Otherwise, we use the  bound $K\ll R_1^\epsilon$ for the number of
%lattice points on a circle. Since we assume $\rho \gg R_1^{1/3}$, we
%have $K\ll \rho^{3\epsilon}$.
%
%Thus for $d=3$ we may take $K\ll \rho^\epsilon$, $\forall \epsilon>0$.
%\qed

We may now prove Proposition~\ref{Separation lemma}:
\begin{proof}
 We set  $\vF_0:=\vF$ and define
\begin{equation*}
 \vF_i:=\vF \cup \{x\in \vE: \dist(x,\vF_{i-1}) \leq \rho \}
\end{equation*}
to be the set of lattice points at distance less than $\rho$ from the
previous set. So we have an ascending sequence
$$ \vF_0=\vF \subseteq \vF_1 \subseteq \vF_2 \subseteq \dots$$
If $\vF_0,\vF_1,\dots \vF_k$ are all distinct then $k<\rho^{c(d)}$
since then we can form a chain $x_0,\dots x_k$ of {\em distinct}
elements $x_i \in \vF_i\backslash \vF_{i-1}$,  with
$\dist(x_i,x_{i+1})\leq \rho$.
Hence by Lemma~\ref{lem:chain}  we have $k<\rho^{c(d)}$.

Thus for some $0\leq K<\rho^{c(d)}$ we must have $\vF_K=\vF_{K+1}$.
Note that if  $\vF_{K+1} = \vF_K$ then $\vF_{K+j} = \vF_K$ for all
$j\geq 1$ and by definition, if $y\in \vE\backslash \vF_K$ then
$\dist(y,\vF_K) >\rho$. Thus taking $\tilde \vF:=\vF_K$ we get a set
which is well separated from its complement, that is \eqref{separation
  of F} holds, and for any $y\in \tilde \vF$ there is some $x\in \vF$
with $\dist(x,y)<K\rho<\rho^{1+c(d)}$, so that \eqref{comparable diameter} holds.

To prove the second part, we take some lattice point $x_1\in \vE$ and
let $\mathcal F = \vE\cap \mbox{Ball}(x_1,\rho^{1+c(d)})$.
Using the first part we find an overset
$\mathcal F \subseteq {\tilde\vF} \subseteq \vE$ satisfying
\eqref{separation of F} and \eqref{comparable diameter} and set
$\vE_1=\tilde \vF$, so that
$\diam \vE_1 \ll \rho^{1+c(d)}$ and $\dist(\vE_1,\vE\backslash
\vE_1)>\rho$. Now replace $\vE$ by $\vE\backslash \vE_1$ and continue
the process.
\end{proof}

\newpage
\section{Some geometric constructions}\label{sec:geometric}

\subsection{Background from differential geometry}
Let $\surface\subset \TT^d$ be a real-analytic hypersurface, which we assume
is non-singular. % and not {\em flat}.
We consider a small parametric patch on $\surface$ which we may assume looks
like a graph, that is there is a real-analytic function
$f(x_1,\dots,x_{d-1})$ so that
$$\gamma(x) = (x, f(x)),\quad |x|<\delta$$
is a parametrization of $\surface$.

For each point $p\in \surface$, denote by $T_p\surface$ the tangent space to $\surface$ at $p$. On $\surface$ we have the frame field
\begin{equation}
 X_j := \frac{\partial \gamma}{\partial x_j} = (0,\dots ,\overbrace{1}^{j},\dots,0, \frac{\partial f}{\partial x_j})
\end{equation}
which gives at each point $p$ a basis of $T_p\surface$. The general tangent vector may be given as the linear combination
\begin{equation}\label{tangent vector}
v=\sum_{j=1}^{d-1}  w_j X_j=(\omega, \nabla f \cdot \omega),\quad \omega=(w_1,\dots w_{d-1})
\end{equation}

%The embedding of the hypersurface $\surface\subset \TT^d$ induces a Riemannian metric on $\surface$,
%classically called the first fundamental form
%$$ I_p(u,v)=\langle u, v \rangle, \quad u,v\in T_p \surface \subset \R^d$$
%The norm of a tangent vector \eqref{tangent vector} is given by
%\begin{equation}
% |v|^2=I(v,v) =|\omega|^2 + (\nabla f \cdot \omega)^2
%\end{equation}

A choice of a unit normal field to the hypersurface $\surface$ at the point $p=(x, f(x))$ is given by
\begin{equation}\label{unit normal}
N_p:= \frac 1{\sqrt{ 1+|\nabla f(x)|^2}} (-\nabla f(x),1)
\end{equation}
The unit normal field defines  the Gauss map $N :\surface\to S^{d-1}$.
The {\em shape operator} for the hypersurface $\surface$, determined by the choice \eqref{unit normal} of unit normal,
is the linear map
\begin{equation}
\shape_p: T_p\surface\to T_p\surface,\quad v\mapsto -\nabla_v N_p
\end{equation}
that is $\shape_p$ is (minus) the derivative of the Gauss map.

The shape operator is self-adjoint:
\begin{equation}
\langle \shape_p(u), v \rangle = \langle u, \shape_p (v) \rangle
\end{equation}
and associated to it one has a symmetric bilinear form, the second fundamental form
\begin{equation}
 II_p(u,v) = \langle \shape_p(u),v \rangle
\end{equation}
The coefficients of the second fundamental form with respect to the frame field $\{X_j\}$ may be computed explicitly in terms of the derivatives
$$
X_{i,j} = \frac{\partial^2\gamma}{\partial x_i \partial x_j} = (\vec 0, \frac{\partial^2 f}{\partial x_i\partial x_j})
$$
as
\begin{equation}
 \langle \shape(X_i),X_j \rangle = \langle N, X_{i,j} \rangle =
\frac{\frac{\partial^2 f}{\partial x_i\partial x_j}}{ \sqrt{|\nabla f|^2 +1} }
\end{equation}

The eigenvalues of the shape operator are the {\em principal curvatures} of $\surface$,
and the determinant of $\shape$ is called the {\em Gauss-Kronecker curvature} of $\surface$.
The hypersurface $\surface$ is {\em flat}, i.e. is an affine hyper-plane, if and only if the unit normal $N_p$ is constant,
which happens if and only if all principal curvatures vanish, that is the shape operator is identically zero.

Given a unit tangent vector $u\in T_p\surface$, the {\em normal curvature} of $\surface$ at $p$ in the direction $u$ is defined as
$$k(u) = \langle \shape_p(u),u \rangle $$
If we cut the hypersurface $\surface$ by the plane spanned by $u$ and the unit normal $N_p$,
we get a curve whose tangent at $p$ is the vector $u$, and whose curvature at $p$ is $k(u)$.
For any nonzero tangent vector $v$ given as in \eqref{tangent vector}, the normal curvature in direction $v$ is
\begin{equation}
 k(v)=\langle \shape(\frac{v}{|v|}),\frac{v}{|v|}\rangle =
\frac 1{ \sqrt{|\nabla f|^2 +1} } \frac{\omega^T D_{x,x} f \omega}{|\omega|^2 + (\nabla f\cdot \omega)^2}
\end{equation}
where $D_{x,x} f = ( \frac{\partial^2 f}{\partial x_i\partial x_j})$ is the Hessian matrix of $f$.

Directions  for which the normal curvature vanishes are called
{\em asymptotic directions}. Thus a tangent vector $v$ as in \eqref{tangent vector} points in an asymptotic direction if and only if
\begin{equation}
 \omega^T D_{x,x} f \omega = 0
\end{equation}
The set of asymptotic directions at a point $p$ is called the asymptotic cone.
The hypersurface $\surface$ is   flat at the point $p$ if and only if every direction is asymptotic, that is
the asymptotic cone coincides with the whole tangent space.

\begin{lemma}\label{lem:directions3}
  Suppose $\surface$ is not flat. Then after shrinking $\surface$, we can find  a cap
$\Omega_\surface \subset S^{d-1}$ of directions $v =(\omega,w_d)$ so that:

i) There is a point $p=\gamma(x) \in \surface$ so that $v$ is tangent to $\surface$ at $p$, equivalently satisfies
\begin{equation}\label{v tangent at p}
 \nabla f(x) \cdot \omega = w_d
\end{equation}

ii)  The direction $v$ is not an asymptotic direction for all $p\in \surface$, that is for all $x$ and all $v=(\omega,w_d)\in \Omega_\surface$  we have
  \begin{equation}\label{nonvanishing at p}
    \omega^T D_{xx}f(x ) \omega \neq 0
  \end{equation}
%where $D_{xx}f(x) = (\frac{\partial^2 f}{\partial x_i \partial x_j})$
%is the Hessian of $f$.
\end{lemma}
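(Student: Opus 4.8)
The plan is to exploit the fact that non-flatness is an open condition. Since $\surface$ is not flat, there is at least one point $p_0 = \gamma(x_0)$ and a tangent vector at $p_0$ which is not asymptotic, i.e. a vector $\omega_0 \in \R^{d-1}$ with $\omega_0^T D_{xx}f(x_0)\omega_0 \neq 0$. The map $(x,\omega)\mapsto \omega^T D_{xx}f(x)\omega$ is real-analytic (in particular continuous) in $x$ and polynomial in $\omega$, so there is a whole neighborhood $U$ of $x_0$ and a whole neighborhood $W$ of $\omega_0$ on which $\omega^T D_{xx}f(x)\omega \neq 0$. After shrinking $\surface$ so that its parameter domain is contained in $U$, condition (ii) of the lemma will be satisfied for every $x$ in the (shrunk) parameter domain and every $\omega$ in $W$, once we arrange that the direction part $\omega$ of every $v\in \Omega_\surface$ lies in $W$.

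Next I would arrange condition (i) — that each $v\in\Omega_\surface$ is actually tangent to $\surface$ at \emph{some} point. Given $\omega$ near $\omega_0$, the vector $v=(\omega, w_d)$ is tangent to $\surface$ at $p=\gamma(x)$ precisely when $w_d = \nabla f(x)\cdot \omega$ by \eqref{v tangent at p}. So I would simply fix a base point, say $x_0$ itself, and for each $\omega\in W$ set $w_d := \nabla f(x_0)\cdot \omega$; then $v=(\omega, w_d)$ is tangent at $p_0=\gamma(x_0)$, giving (i) for free. After normalizing each such $v$ to unit length, the collection $\{(\omega,\nabla f(x_0)\cdot\omega)/|(\omega,\nabla f(x_0)\cdot\omega)| : \omega\in W\}$ traces out an open subset of $S^{d-1}$; shrinking $W$ (equivalently shrinking $\surface$'s parameter domain further, or just $W$) we may take this to be a genuine cap $\Omega_\surface$. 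We should also ensure $\omega_0\neq 0$ — but that is automatic since $\omega_0^T D_{xx}f\omega_0\neq 0$ forces $\omega_0\neq 0$ — so $0$ is not in the closure of $W$ and the normalization is well-defined and smooth.

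The only genuine point requiring care is the interaction between the two shrinkings: condition (ii) demands that the \emph{same} cap $\Omega_\surface$ works for \emph{all} points $x$ of the (already shrunk) hypersurface, whereas condition (i) was obtained by pinning down a single base point $x_0$. This is consistent: (i) only asks for tangency at \emph{some} point, and we always use $p_0$; (ii) is a uniform statement over the small parameter domain $U$, which holds because $U\times W$ was chosen inside the open set where $\omega^T D_{xx}f(x)\omega\neq 0$. So the order is: (1) pick $p_0$ and $\omega_0$ witnessing non-flatness; (2) use continuity to get an open box $U\times W$ on which the non-asymptoticity inequality holds, with $0\notin \overline W$; (3) shrink $\surface$ so its parameter domain sits inside $U$; (4) define $\Omega_\surface$ as the image in $S^{d-1}$ of $\{(\omega,\nabla f(x_0)\cdot\omega):\omega\in W\}$ under normalization, shrinking $W$ if needed so this image contains a cap.

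I do not expect any real obstacle here; the lemma is essentially an unwinding of definitions plus the openness of a non-degeneracy condition. The one thing to double-check is that $\Omega_\surface$ can be taken to be an honest spherical cap (a metric ball in $S^{d-1}$) rather than merely an open set — this follows since the normalization map is a local diffeomorphism near $\omega_0$ (its image is a hypersurface graph over directions, transverse to the radial direction because $|v|$ is bounded away from $0$ and $\infty$ on $\overline W$), so its image contains an open neighborhood of $v_0:=(\omega_0,\nabla f(x_0)\cdot\omega_0)/|\cdots|$ in $S^{d-1}$, and we take $\Omega_\surface$ to be a small cap around $v_0$ inside that neighborhood, pulling back to a correspondingly small $W$.
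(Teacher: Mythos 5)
Your proposal has a genuine gap in the construction of the cap $\Omega_\surface$. The problem is in step (i): you pin the base point to a single $x_0$ and define $\Omega_\surface$ as the normalized image of $\omega\mapsto v=(\omega,\nabla f(x_0)\cdot\omega)$ for $\omega\in W$. But the map $\omega\mapsto(\omega,\nabla f(x_0)\cdot\omega)$ is linear in $\omega$, so its image is exactly the tangent hyperplane $T_{p_0}\surface\subset\R^d$ (translated to the origin), a $(d-1)$-dimensional linear subspace. Consequently the normalized image $\{v/|v|:v\in T_{p_0}\surface\setminus\{0\}\}$ is the great $(d-2)$-sphere $T_{p_0}\surface\cap S^{d-1}$, which has measure zero in $S^{d-1}$ and cannot contain a cap (for $d=2$ it is just two antipodal points; for $d=3$ a great circle). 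Your claim that the normalization map "is a local diffeomorphism near $\omega_0$" onto an open neighborhood of $v_0$ in $S^{d-1}$ is therefore false: a map whose image lies in a codimension-one great sphere cannot have open image. The radial degeneracy you invoke transversality against is not the issue — the issue is that scaling $\omega\mapsto c\omega$ already kills one dimension, leaving only $(d-2)$ dimensions of distinct directions in $T_{p_0}\surface$.

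What is needed, and what the paper does, is to let the tangency point vary as well. The paper considers the map
\begin{equation*}
V:\mathrm{Ball}(\vec 0,\delta)\times\mathcal C\to\R^d,\qquad (x,\omega)\mapsto (\omega,\nabla f(x)\cdot\omega)
\end{equation*}
and shows it is a submersion by computing the Jacobian
\begin{equation*}
D_{x,\omega}V=\begin{pmatrix} I_{d-1} & 0 \\ \nabla f(x) & D_{x,x}f(x)\cdot\omega \end{pmatrix},
\end{equation*}
which has rank $d$ precisely because $D_{x,x}f(x)\cdot\omega\neq\vec 0$ — a consequence of your own nondegeneracy condition $\omega^T D_{x,x}f(x)\omega\neq 0$. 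Letting $x$ vary is essential: as $p=\gamma(x)$ moves, the tangent hyperplanes $T_p\surface$ rotate (this is exactly what the nonvanishing of $D_{x,x}f\cdot\omega$ encodes), and only by sweeping through a family of them can the tangent vectors fill an open set of directions. Your step (2) — obtaining the open box $U\times W$ where $\omega^T D_{x,x}f(x)\omega\neq 0$ — is correct and matches the paper; it is the passage from that box to the cap that must go through the submersion argument rather than a single-point construction.
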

%Note that if $\surface$ has {\em positive}  curvature, then  condition
%\eqref{nonvanishing at p} merely asserts that $\omega\neq 0$.

\begin{proof}
Since $\surface$ is not flat, the Hessian $D_{x,x}f$ is not identically zero
(if it were, $f(x)=a+b\cdot x$ would be linear hence $\surface$ would be flat).
Then we may assume by further shrinking $\surface$ that in fact the Hessian matrix $D_{x,x} f(x) \neq 0$ is nonzero
for {\em all} $|x|<\delta$.

Since $D_{x,x} f(0)$ is not the zero matrix, the asymptotic cone
$$\{\omega\in \R^{d-1}: \omega^T D_{x,x} f(0)\omega = 0\}$$
has lower dimension and hence there is an open cone of directions for which $ \omega^T D_{x,x} f(0)\omega \neq 0$.
Moreover, by continuity of $x\mapsto D_{x,x}f(x)$, we get some $\delta>0$ and an open cone $\mathcal C$ so that
\begin{equation}
 \omega^T D_{x,x} f(x)\omega \neq 0,\quad \forall |x|<\delta, \quad \forall \omega\in \mathcal C
\end{equation}

%For each $\omega\in \mathcal C$, we get a tangent vector to the surface $\surface$ at the point $\gamma(x)$ which is given by
%\begin{equation}
% W_{x,\omega}:= \sum_{j=1}^{d-1} \omega_j \frac{\partial \gamma}{\partial x_j} = (\omega, \nabla f(x) \cdot \omega) \in T_{\gamma(x)} %\surface
%\end{equation}

Consider the map
\begin{equation}
\begin{split}
 V:\mbox{Ball}(\vec 0,\delta) \times \mathcal C &\to \R^d \\
(x,\omega)&\mapsto v=\sum_{j=1}^{d-1} w_j X_j = (\omega,\nabla f(x)\cdot \omega)
\end{split}
\end{equation}
We claim that the  map $V$ is a submersion, i.e. the Jacobian of $V$ has maximal rank for each $(x,\omega)$, hence the image of $V$ contains an open set of directions $\Omega_\surface\subset S^{d-1}$.

To see this, compute the Jacobian of $V$:
\begin{equation}
 D_{x,\omega} V = (\frac{\partial V}{\partial \omega_i},  \frac{\partial V}{\partial x_j}) =
\begin{pmatrix}
 I_{d-1} & 0_{d-1} \\ \nabla f(x) & D_{x,x}f(x) \cdot \omega
\end{pmatrix}
\end{equation}
Since $\omega^T D_{x,x} f(x)\omega \neq 0$ for all $x$ and $\omega\in \mathcal C$, hence $D_{x,x} f(x)\cdot \omega\neq \vec 0$,
it follows that the rank of $D_{x,\omega}V$ is $d$ as claimed.

Thus for each unit vector $v=(\omega, w_d)\in \Omega_\surface$, there is a point $p=\gamma(x) \in \surface$ so that $v$ is tangent to $\surface$ at $p$, equivalently is orthogonal to the normal, so satisfies
\begin{equation}
 \nabla f(x) \cdot \omega = w_d
\end{equation}
Moreover, for all such $x$ we have $\omega^T D_{x,x} f(x) \omega \neq 0$.
\end{proof}
Note that if $v$ satisfies \eqref{v tangent at p}, \eqref{nonvanishing at p} then so does $-v$.

\subsection{The complexification of $\surface$ and the submanifolds $\surface(v)$}

Since $f$ is real-analytic, there is a  holomorphic extension $F(z)$ of
$f$ to some neighborhood $\mathcal U \subset \C^{d-1}$.
This gives a holomorphic extension of the parametrization
\begin{equation}
\gamma^\C :z\in \mathcal U \mapsto (z,F(z))
\end{equation}
and we define the image
\begin{equation}
\surface^\C:=\{\gamma^\C (x+iy) = (z,F(z)),z\in \mathcal U\}
\end{equation}
to be the holomorphic extension of the surface $\surface$.

Let $  v  \in S^{d-1}$ be a unit vector in the cap guaranteed
by Lemma~\ref{lem:directions3}, so there is some $p = (x_0,f(x_0))\in \surface$ with
$  v\perp N_p$.
For such $v$, we will define a submanifold
$\surface(v)\subset \surface^\C$ so that
\begin{itemize}
 \item If $Z\in \surface(v)$ then $\im Z=t v$ is parallel to $v$.

\item $\surface(v)\cap \surface\subseteq \{p\in \surface: N_p\perp v \}$, i.e. at the real
points $p$ of $\surface(v)$, the normal vector $N_p$ is orthogonal to $v$, equivalently $v$ is tangent to
$\surface$ at $p$.

\end{itemize}
To do so, write $v=(\omega,w_d)$ and note that the vector-valued function
$\im\gamma^\C(x+it\omega)-t\vec v$ is
real-analytic in $t$ and vanishes at $t=0$. Hence we
can write
\begin{equation}
  \im \gamma^\C(x+it\omega)-t\vec v = t(\vec 0, h(x,t))
\end{equation}
with $h(x,t)=h_v(x,t)$ real analytic. We want to define $\surface(v)$ by the vanishing of $h(x,t)$. To do so, we need:
\begin{lemma}
For  $v=(\omega,w_d)\in \Omega_\surface$, let $p = (x_0, f(x_0))\in \surface$ be such that $v\in T_p\surface$.
Then
\begin{equation}
 h(x_0,0)=0
\end{equation}
and
\begin{equation}\label{grad of h}
      \nabla h(x_0,0) = (D_{x,x}f(x_0)\omega,0) \neq \vec 0
\end{equation}
\end{lemma}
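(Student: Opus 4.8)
The plan is to compute the Taylor expansion of the real-analytic function $h(x,t)=h_v(x,t)$ at $(x_0,0)$ directly from its defining identity $\im\gamma^\C(x+it\omega)-t v = t(\vec 0, h(x,t))$, treating $x$ near $x_0$ and $t$ near $0$ as independent variables. Writing $\gamma^\C(z)=(z,F(z))$, the first $d-1$ coordinates of $\im\gamma^\C(x+it\omega)$ are just $\im(x+it\omega)=t\omega$, which matches the first $d-1$ coordinates $t\omega$ of $tv$ exactly; so the identity is only a constraint on the last coordinate, namely $\im F(x+it\omega)-tw_d = t\,h(x,t)$. Thus everything reduces to expanding $\im F(x+it\omega)$ in powers of $t$.

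First I would expand $F(x+it\omega)$ in $t$ using holomorphy: $F(x+it\omega)=F(x)+it\,\nabla F(x)\cdot\omega + \tfrac{(it)^2}{2}\,\omega^T D^2F(x)\,\omega + O(t^3)$. Since $F$ restricts to the real-analytic $f$ on the reals, $F(x)=f(x)$, $\nabla F(x)\cdot\omega=\nabla f(x)\cdot\omega$, and $\omega^T D^2F(x)\omega=\omega^T D_{x,x}f(x)\omega$ are all real when $x$ is real. Taking imaginary parts kills the zeroth-order term $f(x)$ and the second-order term $-\tfrac{t^2}{2}\omega^T D_{x,x}f(x)\omega$ (these are real), and picks out $t\,\nabla f(x)\cdot\omega$ from the first-order term and $O(t^3)$ from the remainder. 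Hence $\im F(x+it\omega) = t\,\nabla f(x)\cdot\omega + O(t^3)$, so
$$ t\,h(x,t) = \im F(x+it\omega) - t w_d = t\bigl(\nabla f(x)\cdot\omega - w_d\bigr) + O(t^3),$$
giving $h(x,t) = \bigl(\nabla f(x)\cdot\omega - w_d\bigr) + O(t^2)$.

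From this closed form the two assertions are immediate. At $x=x_0$: since $v=(\omega,w_d)\in T_p\surface$, equation \eqref{v tangent at p} gives $\nabla f(x_0)\cdot\omega = w_d$, so $h(x_0,0)=\nabla f(x_0)\cdot\omega - w_d = 0$. For the gradient, $\nabla_t h(x,t)=O(t)$ vanishes at $t=0$, accounting for the last component $0$; and $\nabla_x h(x,0) = \nabla_x\bigl(\nabla f(x)\cdot\omega\bigr) = D_{x,x}f(x)\,\omega$, which at $x_0$ is $D_{x,x}f(x_0)\omega$. This is nonzero: if $D_{x,x}f(x_0)\omega=\vec 0$ then in particular $\omega^T D_{x,x}f(x_0)\omega = 0$, contradicting \eqref{nonvanishing at p} (which holds for all $x$, including $x_0$, and all $v=(\omega,w_d)\in\Omega_\surface$). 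This yields \eqref{grad of h}.

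The computation is entirely routine; there is no genuine obstacle. The only point requiring a little care is the bookkeeping of which terms in the complex Taylor expansion of $F$ survive after taking imaginary parts — one must be careful that $x$ and $\omega$ are real so that $f(x)$, $\nabla f(x)\cdot\omega$, and $\omega^T D_{x,x}f(x)\omega$ are genuinely real and hence contribute nothing to $\im F(x+it\omega)$ at orders $t^0$ and $t^2$, and that the first-order term contributes $\im(it\,\nabla F(x)\cdot\omega)=t\,\nabla f(x)\cdot\omega$ rather than something spurious. Everything else follows by inspection from the defining identity together with Lemma~\ref{lem:directions3}.
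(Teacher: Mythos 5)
Your proof is correct and rests on the same underlying idea as the paper's: Taylor-expand at $t=0$ and exploit the realness of $f$ and its derivatives. The paper organizes the computation in three separate steps, each invoking the Cauchy--Riemann equations or the conjugation symmetry $F(\bar z)=\overline{F(z)}$ explicitly: first $h(x,0)=\nabla f(x)\cdot\omega - w_d$ via $\nabla_y\im F = \nabla_x\re F$; then $\partial_t h(x_0,0)=0$ by arguing $h$ is even in $t$; then $\nabla_x h(x_0,0)=D_{x,x}f(x_0)\omega$ via $D_{xy}\im F = D_{xx}\re F$. You instead write the single holomorphic expansion $F(x+it\omega)=\sum_k\frac{(it)^k}{k!}(\omega\cdot\nabla)^k f(x)$, observe that every coefficient $(\omega\cdot\nabla)^kf(x)$ is real, and read off $\im F(x+it\omega)=t\,\nabla f(x)\cdot\omega + O(t^3)$, hence $h(x,t)=(\nabla f(x)\cdot\omega-w_d)+O(t^2)$; all three conclusions then drop out of inspecting this one expansion. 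That is a modest but genuine streamlining: the Cauchy--Riemann identities are absorbed into the use of the complex chain rule, and the parity of $h$ in $t$ (which the paper establishes separately from $F(\bar z)=\overline{F(z)}$) is visible directly because $(it)^k$ is real for even $k$. The final nonvanishing step — $D_{x,x}f(x_0)\omega\neq\vec 0$ because $\omega^T D_{x,x}f(x_0)\omega\neq 0$ by \eqref{nonvanishing at p} — is identical in both.
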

\begin{proof}
Start at a point $p=(x_0, f(x_0))\in \surface$ with $N_p\perp v$.
To find the value of the function $h(x,t)$ at $t=0$, expand
\begin{equation}
 \im F(x,t\omega) = \im F(x,0) + t\nabla_y\im F(x,0)\cdot \omega + O(t^2)
\end{equation}
By the Cauchy-Riemann equations, $\nabla_y\im F(x,0) = \nabla_x \re F(x,0) = \nabla f(x)$
and since $\im F(x,0)=0$ we find
\begin{equation}
 h(x,0) = \lim_{t\to 0} \frac {\im F(x,t\omega)}{t} - w_d = \nabla
 f(x)\cdot \omega - w_d %= \sqrt{1+|\nabla f(x_0)|^2} N_p \cdot v
\end{equation}
Since the normal direction to the surface $\surface$ at $p$ is given by the vector $(-\nabla f(x_0),1)$, we find that
if $N_p\perp v$ then $h(x_0,0)=0$.

%We note that $h(x,t)$ is even in $t$: $h(x,-t) = h(x,t)$. Indeed,
To show  \eqref{grad of h}, note that since $f(x)$ is real-analytic,
its holomorphic extension $F$ satisfies $ \overline{F(z)} = F(\bar
z)$, that is
$$\re F(\bar z) =\re F(z), \quad \im F(\bar z) = -\im F(z)$$
showing that $\im F(x+it\omega)$ is odd in $t$, hence $h(x,t)$ is
even in $t$. Therefore we have
$$ \frac{\partial h}{\partial t}(x,0) = 0 $$
Moreover since $\im F(x,t\omega)$ is odd in $t$,
$$ \im F(x,t\omega) = t (\nabla_y \im F)(x,0)\cdot \omega +O(t^3)$$
and hence
$$ \nabla_x \frac {\im F(x,t\omega)}{t} = D_{xy} \im F(x,0) \cdot
\omega +O(t^2)
$$
By Cauchy-Riemann, $D_{xy} \im F(x_0,0) = D_{xx}\re F(x_0,0) =
D_{xx}f(x_0)$ and hence we find
$$\nabla_x h(x_0) = D_{xx}f(x_0)\cdot \omega$$
proving \eqref{grad of h}.

By construction of the cap $\Omega_\surface$ in Lemma~\ref{lem:directions3}, we know that $\omega^T D_{x,x}f(x)\omega \neq 0$
for all $x$; in particular $D_{xx}(f)(x_0) \cdot \omega\neq \vec 0$ and therefore $\nabla h(x_0,0)\neq \vec 0$.
\end{proof}

Since $\nabla_x h(x_0,0)=D_{x,x}f(x_0)\omega \neq \vec 0$, we may use the Implicit Function Theorem to guarantee that there is
a neighborhood of $(x_0,0)$ in the $(x,t)$ domain where $\nabla_x h(x,t)\neq \vec 0$ and the condition $h(x,t)=0$ defines a smooth $(d-1)$-dimensional submanifold.  After shrinking $\surface$ and relabeling, we may then assume that for all $(x,t)$,
\begin{equation}
 \frac{\partial h}{\partial x_1}(x,t)\neq 0
\end{equation}
Using the Implicit Function Theorem, we can then write
\begin{equation}
 x_1 = x_1(t,\^x), \quad \^x:=(x_2,\dots, x_{d-1})
\end{equation}
We then define
\begin{equation}
\surface(v):=\{ \gamma^\C((x_1(t,\^x),\^x)+it\omega): |t|<\delta, |\^x|<\delta \}
\end{equation}
Note that since $v$ varies in a compact set, we may choose $\delta>0$
to work uniformly for all such $v$. Hence for $\tau\ll \delta$ , $\surface(v)$ contains
the set
\begin{equation}
  \surface(v,\tau):=\{Z=\gamma^\C((x_1(t,\^x),\^x)+it\omega)\in \surface(v):   \tau<t<2\tau, |\^x|<\delta \}
\end{equation}
so that $Z\in  \surface(v,\tau)$ implies that $\im Z=tv$, with $t\in
(\tau,2\tau)$.

We define a smooth measure $d\mu$ on $\surface(v,\tau)$ by taking a
smooth bump function $\psi(t,\^x)$, supported in $t\in [\tau,2\tau]$,
and setting
\begin{equation}\label{definition of mu}
 \int_{\surface(v,\tau)} g(Z)d\mu(Z):=
\int_{\tau<t<2\tau}\int_{|\^x|<\delta} g(\gamma^\C ((x_1(t,\^x),\^x)+it\omega)) \psi(t,\^x) dt d\^x
\end{equation}
We will restrict $\psi$ by requiring that its support is disjoint from
the lower-dimensional set of $(t,\^x)$ satisfying
\begin{equation}\label{restricting stationary pts a}
\begin{split}
\langle (\nabla_x \re F)(x,t\omega),\omega \rangle &=   w_d \\
\langle  (\nabla_y \re F)(x,t\omega),\omega \rangle &=0
\end{split}
\end{equation}
This condition will be used in \S~\ref{sec:oscillatory integral}
to ensure decay of an oscillatory integral.

%In view of the above, the submanifold  $\surface(v)$ of $\surface^\C$  satisfies
%\begin{equation}
%  \surface(v) \subset  \{ Z=\vec \Gamma(x+it\omega)\in \surface^\C: h(x,t)=\vec 0 \}
%\end{equation}
%(here $|t|<\delta$).

\newpage
\section{Using complexification}

We start with an eigenfunction of the Laplacian with eigenvalue $\lambda^2$
\begin{equation}
  \varphi_\lambda(\vec X) = \sum_{\xi\in \vE} a_\xi e^{2\pi i\langle \xi,\vec
  X\rangle}, \quad \vec X \in \TT^d
\end{equation}
which we normalize by
\begin{equation}
 \sum_\xi |a_\xi|^2 = 1
\end{equation}
We want to show that if $\surface$ has nowhere zero Gauss-Kronecker curvature,
then for $\lambda>\lambda_\surface$, $\varphi_\lambda$ cannot vanish on the fixed hypersurface $\surface$.
We proceed to do so by showing initially that the Fourier coefficients $a_\xi$ are negligible for all frequencies whose directions $\xi/|\xi|$ lie in a cap $\Omega_\surface$ depending only on $\surface$. All that is required for this is that $\surface$ not be flat:
\begin{theorem}\label{lem:small a's}
  Assume $\surface\subset \TT^d$ is not flat. Then there is a cap $\Omega_\surface\subset S^{d-1}$ so that for all eigenfunctions
$\varphi_\lambda$ which vanish on $\surface$ we have
  \begin{equation}
    |a_\xi| \ll \frac {1}{\lambda^N}\,, \quad \forall N>1
  \end{equation}
for all $\xi\in \vE$ such that $-\xi /|\xi |\in \Omega_\surface$.
%, where $\Omega_\surface\subset S^{d-1}$
%is some cap on $S^{d-1}$ depending on the hypersurface $\surface$.
\end{theorem}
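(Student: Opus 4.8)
The plan is to exploit the holomorphic extension of $\varphi_\lambda$ together with the lower bound \eqref{eq:int mean square}. First I would extend $\varphi_\lambda$ to the entire function $\varphi_\lambda^\C(Z)=\sum_{\eta\in\vE}a_\eta e^{2\pi i\langle\eta,Z\rangle}$ on $\C^d/\Z^d$, fix $\xi\in\vE$ with $v:=-\xi/|\xi|\in\Omega_\surface$ (the cap produced by Lemma~\ref{lem:directions3}), and bring in the submanifold $\surface(v,\tau)\subset\surface^\C$ with its measure $d\mu$ from \eqref{definition of mu}, on which $\im Z=tv$ with $\tau<t<2\tau$. The key observation is then that $\varphi_\lambda^\C$ vanishes identically on $\surface^\C$: the composite $g:=\varphi_\lambda^\C\circ\gamma^\C$ is holomorphic on the connected open set $\mathcal U\subset\C^{d-1}$, and for real $x$ with $|x|<\delta$ one has $g(x)=\varphi_\lambda(\gamma(x))=0$ since $\varphi_\lambda$ vanishes on $\surface$; a holomorphic function on a connected open subset of $\C^{d-1}$ vanishing on the real slice vanishes identically (compare Taylor coefficients at a real point), so $g\equiv0$, and hence $\varphi_\lambda^\C$ vanishes on $\surface^\C$, in particular on $\surface(v,\tau)$.

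From here the theorem is immediate: substituting the vanishing of $\varphi_\lambda^\C$ on $\surface(v,\tau)$ into \eqref{eq:int mean square} makes its left-hand side equal to $0$, so \eqref{eq:int mean square} forces $|a_\xi|^2\ll\lambda^{-N}$ for every $N$; since the normalization $\sum_\eta|a_\eta|^2=1$ gives $|a_\xi|\le1$, we conclude $|a_\xi|\le|a_\xi|^{1/2}\ll\lambda^{-N}$ for all $N$, which is the claimed bound for every $\xi\in\vE$ with $-\xi/|\xi|\in\Omega_\surface$.

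All of the substance therefore sits in the lower bound \eqref{eq:int mean square}, and this is where I expect the real obstacle. Expanding $\int_{\surface(v,\tau)}|\varphi_\lambda^\C(Z)e^{-2\pi i\langle\xi,Z\rangle}|^2\,d\mu$ and using $\im Z=tv$, the diagonal term $\eta=\xi$ carries weight $1$ and contributes $\gg|a_\xi|^2\,\mu(\surface(v,\tau))\gg|a_\xi|^2$, the remaining diagonal terms are nonnegative, and an off-diagonal term ($\eta\ne\eta'$ in $\vE$) carries an oscillatory factor of phase $\langle\eta-\eta',\re Z\rangle$ times a weight $\exp\!\big(-\pi t(|\eta-\xi|^2+|\eta'-\xi|^2)/\lambda\big)$, where one uses $\langle\eta,v\rangle+\lambda=|\eta-\xi|^2/(2\lambda)\ge0$. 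To show the off-diagonal sum is $O(\lambda^{-N})$ I would combine two ingredients: (i) the non-flatness of $\surface$, via Lemma~\ref{lem:directions3} and the choice of $\psi$ avoiding the stationary set \eqref{restricting stationary pts a}, which makes the phase free of stationary points on $\supp\mu$ so that repeated integration by parts in $\^x$ gains a factor $|\eta-\eta'|^{-N}$ (the oscillatory-integral estimate of \S~\ref{sec:oscillatory integral}); and (ii) the cluster structure of \S~\ref{sec:cluster}, namely Proposition~\ref{Separation lemma}, to split $\vE$ into clusters of diameter $\ll\rho^{1+c(d)}$ separated by more than $\rho$, so that cross-cluster pairs are handled by (i) while same-cluster pairs are absorbed by the exponential weight, summing finally over the $\ll\#\vE$ clusters. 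The hard part will be to calibrate $\rho$ and $\tau$ so that the oscillatory gain and the exponential weight simultaneously overcome the number of lattice points $\#\vE$ — which is $\lambda^{o(1)}$ when $d=2$ but grows polynomially in $\lambda$ for $d\ge3$ — and this is exactly why the arithmetic input on cluster structure is indispensable; the bookkeeping is carried out in \S~\ref{lower bound for integral}.
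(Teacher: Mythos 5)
Your outline of the first two steps is correct and matches the paper exactly: the holomorphic extension $\varphi_\lambda^\C$ vanishes identically on $\surface^\C$ by the identity theorem, and once the lower bound \eqref{eq:int mean square} is in hand the theorem follows immediately by setting the left side to zero. You have also correctly identified the two ingredients for \eqref{eq:int mean square} — the non-flatness of $\surface$ entering through oscillatory-integral decay, and the cluster structure of $\vE$ from \S~\ref{sec:cluster}. The gap is in how you propose to combine them.

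You claim that the off-diagonal sum is $O(\lambda^{-N})$, with same-cluster pairs ``absorbed by the exponential weight.'' This cannot work. After the truncation to $\vE'=\{\eta:A(\eta)<D\}$ with $D\approx(\log\lambda)^2$ (Lemma~\ref{lem:short sum}), the remaining frequencies are exactly those for which the weight $e^{-2\pi tA(\eta)}$ is \emph{not} negligibly small; in particular, in the cluster containing $\xi_0$ one has $A(\eta)\ll 1$ and the weight is $\Theta(1)$. If $\eta\ne\eta'$ lie in that cluster with $|\eta-\eta'|\asymp1$, the oscillatory integral $J_{\eta,\eta'}$ has no decay — it is simply $O(1)$ — and the cross-term $a_\eta\overline{a_{\eta'}}J_{\eta,\eta'}$ is of the same order as the diagonal. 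Nothing you have written prevents such a cross-term from nearly cancelling the $|a_{\xi_0}|^2$ diagonal contribution. A simple scenario: two nearby frequencies with opposite coefficients, $a_{\xi_1}=-a_{\xi_0}$, produce an off-diagonal contribution that is $\Theta(1)$, not $O(\lambda^{-N})$.

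What the paper does instead (Proposition~\ref{lem:lower bd for short mean square}) is an \emph{induction on the affine dimension} $d'=d(\vE')$ of the frequency set. Cross-cluster pairs are killed by Lemma~\ref{lem:oscillatory integral} because the clusters of Proposition~\ref{Separation lemma} are separated by $\gg\lambda^{1/((d+1)(1+c(d)))}$; same-cluster frequencies lie on an affine hyperplane by Jarnik's theorem (since $\diam\vE_\alpha\ll\lambda^{1/(d+1)}$), so they are handled by the induction hypothesis with the dimension dropped by one. The argument bottoms out at $d'=1$, i.e.\ two colinear frequencies $\xi,\xi'$, and this base case is the piece missing from your proposal: one must show that the cross-term, which there is $2\mathcal A_\xi(t)\mathcal A_{\xi'}(t)\cos2\pi\phi(t,\hat x)$, cannot overwhelm the two diagonal terms. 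The paper does this by restricting the integral to the set $\mathcal S_\delta$ where $\cos2\pi\phi\ge-1+\delta$, and uses the uniform gradient lower bound \eqref{grad Phi} (which comes from the non-flatness hypothesis and the choice of $\psi$ avoiding \eqref{restricting stationary pts a}) to show $\mathcal S_\delta$ has measure bounded below independently of $\xi,\xi'$. On $\mathcal S_\delta$ the sum of squares beats the cross-term pointwise. This positivity step, together with the inductive reduction via Jarnik, is the mechanism that actually controls the same-cluster off-diagonal, and it cannot be replaced by the exponential weight.
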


\subsection{The strategy}
%{Definition of $\surface(v_0)$}

Fix $\xi_0  \in \vE$ so that
\begin{equation}\label{choosing xi'}
  v_0  = -\frac { \xi_0}{|\xi_0|}  \in \Omega_\surface
\end{equation}
lies in the set of directions guaranteed by Lemma~\ref{lem:directions3}.

 We have a holomorphic extension $\varphi^\C(\vec Z)$ of $\varphi$ by replacing $\vec X$ by $\vec Z=\vec
X+i\vec Y\in \C^d$, which is a function on $\C^d/\Z^d$.
We will give a lower bound for the mean square of $\varphi^\C(Z)e^{-2\pi i\langle \xi_0,Z\rangle}$ restricted to the submanifold $\surface(v_0,\tau)$:
\begin{equation}\label{eq:lower bound for mean square}
 \int_{\surface(v_0,\tau)} \left| \varphi^\C(Z)e^{-2\pi i\langle \xi_0,Z\rangle}\right|^2 d\mu(Z) \gg |a_{\xi_0}|^2 + O(\frac 1{\lambda^N})
\end{equation}
where $d\mu$ is the smooth measure on $\surface(v_0)$ constructed in \eqref{definition of mu}.

On the other hand, vanishing of $\varphi$ on $\surface$ implies vanishing of
$\varphi^\C(\vec Z)$ on the holomorphic extension $\surface^\C$ of our surface $\surface$.
In particular, the mean square of $\varphi^\C(Z)e^{-2\pi i\langle \xi_0,Z\rangle}$ along $\surface(v_0,\tau)$ vanishes:
\begin{equation}
 \int_{\surface(v_0,\tau)} \left| \varphi^\C(Z)e^{-2\pi i\langle \xi_0,Z\rangle}\right|^2 d\mu(Z)=0
\end{equation}
and combining with the lower bound \eqref{eq:lower bound for mean square}, this will prove Theorem~\ref{lem:small a's}.

To prove the lower bound on the mean square \eqref{eq:lower bound for mean square}, we show that on $\surface(v_0,\tau)$
we may represent $ \varphi^\C(Z)e^{-2\pi i\langle \xi_0,Z\rangle}$ up to negligible error
by a sum over frequencies in a small cap:
$$
\varphi^\C(\vec  Z)e^{-2\pi i \langle \xi_0,\vec Z \rangle} =
\sum_{\vE'} a_\xi e^{2\pi i \langle \xi-\xi_0,Z\rangle}
%e^{2\pi i \langle \xi-\xi_0, \vec X\rangle } e^{-2\pi t A(\xi)}
+ O(\frac 1{\lambda^N}),\quad Z\in \surface(v_0,\tau)
$$
%where $A(\xi)=\langle \xi-\xi_0,v_0 \rangle$; the condition $A(\xi)\leq (\log \lambda)^2$
where the sum is over a certain set $\vE'$ of frequencies contained in a cap of size $\approx \sqrt{\lambda }\log \lambda$ around $\xi_0$.

On squaring out the sum in we will be faced with oscillatory integrals of the form
\begin{equation}\label{defA of J}
  J_{\xi,\xi'} := \int_{\surface(v_0,\tau)}  e^{ 2\pi i (\langle
    \xi-\xi_0, Z \rangle - \langle
    \xi'-\xi_0, \bar Z \rangle)}  d\mu(Z)
%=\int e^{2\pi i \langle \xi-\xi', G(t,\^x) \rangle} e^{-2\pi t(A(\xi)+A(\xi'))} \psi(t,\^x)dtdx
\end{equation}
We will bound these integrals by
\begin{equation}
 J_{\xi,\xi'} \ll \frac 1{|\xi-\xi'|^N},\quad \xi\neq \xi'\in \vE'
\end{equation}
Here the fact that $\surface$ is not flat is crucial.
Armed with this estimate, we will prove \eqref{eq:lower bound for mean square} by using the cluster structure
of the set of frequencies $\vE$ shown in \S~\ref{sec:cluster}.

\subsection{An oscillatory integral}\label{sec:oscillatory integral}
We want to bound the oscillatory integral $J_{\xi,\xi'}$ in 
\eqref{defA of J}, 
%\begin{equation}%\label{def of J}
%  J_{\xi,\xi'} := \int_{\surface(v_0,\tau)}  e^{ 2\pi i \langle \xi-\xi',\vec Z \rangle}  d\mu(Z)
%\end{equation}
or writing out explicitly,
\begin{equation}\label{def of J}
  J_{\xi,\xi'} = \int e^{2\pi i |\xi-\xi'| \phi_u (t,\^x)} \mathcal A_{\xi,\xi'}(t,\^x) dt d\^x
\end{equation}
where we write $u=\frac{\xi-\xi'}{|\xi-\xi'|}$ and for any vector $u=(u_1,\dots,u_d)$
the phase function $\phi_u$ is defined by
\begin{equation}
 \phi_u(t,\^x) = \langle u, G(t,\^x) \rangle = \langle u, (x, \re F(x,t\omega)) \rangle
\end{equation}
%where $u$ is a unit vector orthogonal to $v_0=(\omega,w_d)$.
$$
\^x=(x_2,\dots, x_{d-1}), \quad x=(x_1,\^x)
$$
$$
%\phi_u(x)  = \langle u, G(x) \rangle\,, \quad
G(t,\^x)  = \re \gamma^\C(x+it\omega) = (x, \re F(x,t\omega))
$$
%\begin{equation}
%\phi_u(x) =\langle u,\re  \Gamma(x+it\omega)\rangle
%\end{equation}
%where $u$ is a unit vector (approximately) orthogonal to $v_0=(\omega,w_d)$
and with amplitude
\begin{equation}
  \mathcal A_{\xi,\xi'}(t,\^x) = e^{- 2\pi t (A(\xi) + A(\xi'))} \psi(t,\^x)
\end{equation}
\begin{equation}
 A(\xi) = \langle \xi-\xi_0, v_0 \rangle
\end{equation}
and $\psi(t,\^x)$ is a bump function. The region of integration in the
$(t,\^x)$ domain is
a small ball such that $ \tau <t<2\tau$.
%and moreover \eqref{grad of h} holds,
%that is $\nabla_x h(x_0,0) = \nabla_{x,x}f(x_0) \omega \neq \vec 0$ so that we may assume that
%\begin{equation}
% \frac{\partial h}{\partial x_1} \neq 0
%\end{equation}
%on our domain. Thus $\im F(x+itw) = tw_d$ determines $x_1=x_1(t,\^x)$ as a function of $(t,\^x)$

\begin{lemma}\label{non-cylindrical lemma}
Let $v_0=(\omega,w_d)$ be as given in \eqref{choosing xi'}. 
Then for all unit vectors $u$ orthogonal to $v_0$,
the phase function $\phi_u(t,\^x)$ is non constant,
and the stationary points of $\phi_u$ lie on a subset of lower
dimension, which is independent of $u$, namely the points $(t,\^x)$
where
\begin{equation}\label{restricting stationary pts}
\begin{split}
\langle (\nabla_x \re F)(x,t\omega),\omega \rangle &=   w_d \\
\langle  (\nabla_y \re F)(x,t\omega),\omega \rangle &=0
\end{split}
\end{equation}
\end{lemma}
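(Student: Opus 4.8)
The plan is to compute the gradient of $\phi_u$ in the $(t,\^x)$ variables and show that its vanishing forces $(t,\^x)$ into the set \eqref{restricting stationary pts}, which is independent of $u$; along the way we also rule out $\phi_u$ being constant. Write $\phi_u(t,\^x) = \langle u, G(t,\^x)\rangle$ with $G(t,\^x) = (x, \re F(x,t\omega))$ and $x = (x_1,\^x)$. Recall from \S~\ref{sec:geometric} that on $\surface(v_0)$ the variable $x_1$ is a function $x_1 = x_1(t,\^x)$ determined implicitly by $h_{v_0}(x,t) = 0$, where $h(x,t)$ is defined by $\im\gamma^\C(x+it\omega) - tv_0 = t(\vec 0, h(x,t))$. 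So first I would differentiate $G$ with respect to $t$ and $\^x$, remembering the chain rule through $x_1(t,\^x)$, and pair the result with $u$.

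The key computational point is that the tangent vectors to $\surface(v_0)$ at a real point (i.e. at $t=0$, or more generally combined with the imaginary-part constraint) are naturally expressed via the Cauchy--Riemann equations: $\partial_t G$ and $\partial_{\^x_j} G$ can be rewritten using $\nabla_y\im F = \nabla_x\re F = \nabla f$. The content of Lemma~\ref{lem:directions3} is that $v_0$ is tangent to $\surface$ at the relevant real points, so the real tangent space $T_p\surface$ is $d-1$ dimensional and contains $v_0$. I expect the computation to show: $\nabla_{t,\^x}\phi_u(t,\^x) = 0$ (for all unit $u \perp v_0$ simultaneously, but actually for a single such $u$ generically) iff the span of $\partial_t G, \partial_{\^x}G$ — a codimension-one subspace of $\R^d$ once the $x_1$-constraint is used — is contained in $u^\perp$, i.e. iff that span equals $v_0^\perp$, which happens precisely on the locus where the two equations \eqref{restricting stationary pts} hold. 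Concretely, off that locus the image of $D_{t,\^x}G$ together with $v_0$ spans all of $\R^d$, so no nonzero $u \perp v_0$ can annihilate all partials; hence a stationary point of $\phi_u$ for one such $u$ already lands in the exceptional set, and that set does not depend on $u$.

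For non-constancy: if $\phi_u$ were constant on the open region then $u$ would be orthogonal to the entire span of all $\partial_t G, \partial_{\^x_j}G$ over that open set; but that span, together with $v_0$, is all of $\R^d$ at a generic point (exactly because $D_{xx}f(x_0)\omega \neq \vec 0$, which is what drives \eqref{grad of h} and the submersion property in Lemma~\ref{lem:directions3}), contradicting $u \neq 0$. So $\phi_u$ is non-constant for every unit $u \perp v_0$.

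The main obstacle I anticipate is bookkeeping the implicit dependence $x_1 = x_1(t,\^x)$ cleanly: one must verify that differentiating $\langle u, G\rangle$ and substituting the relation $\partial_{x_1}h\,\dot x_1 + (\text{other terms}) = 0$ really does produce the two symmetric conditions in \eqref{restricting stationary pts} rather than some messier $u$-dependent variety. The right way to organize this is to note that \eqref{restricting stationary pts} is exactly the statement that $v_0 = (\omega,w_d)$ is an \emph{asymptotic-type degeneracy} of the map $G$ — i.e. the point where the tangent plane to $\surface(v_0)$ in the real slice fails to be transverse to $v_0$ — and to check both equations arise as the $t$-derivative and the $\^x$-derivative conditions respectively after eliminating $x_1$ via the implicit function. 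The curvature hypothesis $\omega^T D_{xx}f\,\omega \neq 0$ is what guarantees this degeneracy locus is genuinely lower-dimensional, so it is nonempty-complement and the oscillatory integral in \S~\ref{sec:oscillatory integral} can be localized away from it.
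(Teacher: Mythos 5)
Your plan — differentiate $\phi_u$ through the implicit function $x_1(t,\^x)$, apply the Cauchy--Riemann equations, and invoke $\omega^T D_{xx}f\,\omega \neq 0$ at $t=0$ — is indeed the skeleton of the paper's argument. But the proposal never actually executes the computation; it repeatedly defers it (``I expect the computation to show\ldots'', ``The main obstacle I anticipate is bookkeeping\ldots''). That bookkeeping is the content of the lemma. The paper handles it by introducing a tailored first-order operator
\[
\mathcal L = A\frac{\partial}{\partial t} + \sum_{j\geq 2} w_j\frac{\partial}{\partial x_j},
\qquad A=\frac{1}{\partial x_1/\partial t}\Bigl(w_1 - \sum_{j\geq 2} w_j\,\partial x_1/\partial x_j\Bigr),
\]
chosen so that $\mathcal L\{H(x,t\omega)\}=\langle\nabla_x H,\omega\rangle + A\langle\nabla_y H,\omega\rangle$ on functions of the form $H(x,t\omega)$. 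Applying $\mathcal L$ both to $\phi_u$ (at a stationary point, using $u\perp v_0$) and to the constraint $\im F(x,t\omega)=tw_d$, then invoking Cauchy--Riemann, yields a $2\times2$ linear system in $\langle\nabla_x\re F,\omega\rangle$ and $\langle\nabla_y\re F,\omega\rangle$ whose unique solution is exactly \eqref{restricting stationary pts}. Nothing in your proposal produces this reduction, nor does it address the two case distinctions the paper is forced into ($u_d=0$ vs.\ $u_d\neq0$, and $\partial x_1/\partial t\neq0$ vs.\ $\equiv 0$), each of which requires its own argument.

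Two further points are imprecise or circular as stated. First, your ``i.e.\ iff that span equals $v_0^\perp$'' over-claims: the lemma asserts only a one-way implication (stationarity for any $u\perp v_0$ implies the $u$-free conditions \eqref{restricting stationary pts}); there is no equivalence, and ``a single $u$ generically'' is weaker than the needed ``every $u\perp v_0$''. Second, your non-constancy argument — ``that span, together with $v_0$, is all of $\R^d$ at a generic point, contradicting $u\neq 0$'' — is exactly what needs proof, not a step in it: if $\phi_u$ were constant, \eqref{restricting stationary pts} would hold everywhere, so a priori the span of partials could be $v_0^\perp$ identically. The paper closes this by applying $\mathcal L$ a second time to \eqref{restricting stationary pts}, using Cauchy--Riemann once more to derive $\omega^T\nabla_{xx}\re F\,\omega = \omega^T\nabla_{xy}\re F\,\omega = 0$ throughout $\surface(v_0)$, and then specializing to $t=0$ to contradict the non-asymptotic condition from Lemma~\ref{lem:directions3}. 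You name the right hypothesis but do not show it does the work.
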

\begin{proof}
 Write out the phase function  explicitly as
\begin{equation}
 \phi_u(t,\^x) = u_1x_1(t,\^x) + \sum_{j=2}^{d-1} u_j x_j + u_d \re F(x_1,\^x, t\omega)
\end{equation}
Assume first that
$$\frac{\partial x_1}{\partial t}\neq 0$$
on the domain of integration.

We first dispose of the possibility that $u_d=0$. In that case,
\begin{equation}
 \phi_u(t,\^x) = u_1x_1(t,\^x) + \sum_{j=2}^{d-1} u_j x_j
\end{equation}
 At a stationary point,
\begin{equation}
0= \frac{\partial \phi_u}{\partial t} = u_1\frac{\partial x_1}{\partial t}
\end{equation}
and since $\frac{\partial x_1}{\partial t}\neq 0$ on the support of $\psi$, we find $u_1=0$ so that $u=(0,u_2,\dots, u_{d-1},0)$ and
$\phi_u = \sum_{j=2}^{d-1} u_j x_j$ is linear, $\nabla \phi_u = u$ and $\phi_u$ has no stationary points.

%NEW VERSION:
Assume from now that $u_d\neq 0$. Consider the differential operator
\begin{equation}
  \mathcal L  = \frac{w_1}{\frac{\partial x_1}{\partial t}}
  \frac{\partial}{\partial t}  + \sum_{j=2}^{d-1}
  w_j(\frac{\partial}{\partial x_j} - \frac{\frac{\partial
      x_1}{\partial x_j}}{\frac{\partial x_1}{\partial t}}
  \frac{\partial}{\partial t})
= A\frac{\partial}{\partial t} + \sum_{j=2}^{d-1} w_j
  \frac{\partial}{\partial x_j}
\end{equation}
with
$$\quad A = \frac 1{\frac{\partial x_1}{\partial t}}
(w_1-\sum_{j=2}^{d-1} w_j \frac{\partial x_1}{\partial x_j})
$$
%\marginpar{is there a differential geometric meaning to L ?}
A calculation using the chain rule shows that for a function of the
form $H(x,t\omega)$ we have
\begin{equation}
  \mathcal L\{H(x,t\omega)\}  = \langle \nabla_xH,\omega \rangle +
A  \langle \nabla_y H,\omega \rangle
\end{equation}
Applying $\mathcal L$ to the phase function $\phi_u$, using $\mathcal
L x_j = w_j$, gives
\begin{equation}
  \mathcal L \phi_u = \sum_{j=1}^{d-1} u_jw_j + u_d(\langle \nabla_x \re
  F,\omega \rangle  + A\langle \nabla_y \re  F,\omega \rangle)
\end{equation}
Hence at a stationary point, where $\mathcal L \phi_u = 0$, we find on
using the orthogonality of $u$ and $v_0$, that
\begin{equation}
  u_d(\langle \nabla_x \re  F,\omega \rangle
+ A(\langle \nabla_y \re  F,\omega \rangle) = - \sum_{j=1}^{d-1}
u_jw_j = u_dw_d
\end{equation}
and since $u_d\neq 0$ we get
\begin{equation}\label{eq:7.14}
  \langle \nabla_x \re  F,\omega \rangle
+ A \langle \nabla_y \re  F,\omega \rangle = w_d
\end{equation}

Likewise, applying $\mathcal L$ to the relation
$\im F(x,t\omega) =tw_d$ we get on using $\mathcal L t=A$ that
\begin{equation}
  \langle \nabla_x \im F,\omega \rangle + A \langle \nabla_y \im
  F,\omega \rangle  = Aw_d
\end{equation}
Applying the Cauchy-Riemann equations $\nabla_x \im F =-\nabla_y \re
F$, $\nabla_y\im F = \nabla_x \re F$ gives
\begin{equation}\label{eq:7.15}
  A\langle \nabla_x\re F,\omega \rangle -\langle \nabla_y\re F,\omega
  \rangle = Aw_d
\end{equation}
The unique solution of the system \eqref{eq:7.14}, \eqref{eq:7.15} is then
\begin{equation}\label{eq:7.16}
  \langle \nabla_x\re F,\omega \rangle =w_d\,,
\quad \langle \nabla_y\re F,\omega  \rangle =0
\end{equation}

Now assume that $\phi_u$ is constant, so that \eqref{eq:7.16} holds
for all $t,\^x$. Then we may apply the differential operator $\mathcal
L$ to  \eqref{eq:7.16} to get
\begin{equation}\label{eq:7.17}
  \mathcal L  \langle \nabla_x\re F,\omega \rangle = \omega^T
  \nabla_{x,x} \re F \omega + A \omega^T \nabla_{x,y} \re F \omega =0
\end{equation}
and
\begin{equation}
   \mathcal L  \langle \nabla_y\re F,\omega \rangle =
\omega^T \nabla_{x,y} \re F \omega  + A \omega^T \nabla_{y,y} \re F
\omega = 0
\end{equation}
By the Cauchy-Riemann equations, $\nabla_{y,y} \re F = -\nabla_{x,x}
\re F$ and we find
\begin{equation}\label{eq:7.19}
  \omega^T \nabla_{x,y} \re F \omega  - A \omega^T \nabla_{x,x} \re F\omega = 0
\end{equation}

The unique solution of the system \eqref{eq:7.17}, \eqref{eq:7.19} is
\begin{equation}
  \omega^T \nabla_{x,x} \re F\omega = 0= \omega^T \nabla_{x,y} \re F \omega
\end{equation}
which assuming that $\phi_u$ is constant on $\surface(v_0)$, holds throughout
$\surface(v_0)$ and in particular on the real locus  $t=0$, where $\re F(x,0) =
f(x)$, where we find
\begin{equation}
  \omega^T \nabla_{x,x} f(x) \omega = 0, \quad \forall x\in \surface(v_0)\cap \surface
\end{equation}
This contradicts  \eqref{nonvanishing at p}, that is that $v_0$ is not an asymptotic direction at any point on $\surface$.
Thus $\phi_u$ is non-constant, and being real-analytic its stationary
points (where $\nabla \phi_u = \vec 0$) lie on a lower-dimensional
subset $\mbox{Crit}(u)$ of $\surface(v_0)$.

Since \eqref{restricting stationary pts}  holds at the stationary
points, which is independent of $u$,
$\mbox{Crit}(u)$ are confined to lie inside a lower-dimensional subset
which is independent of $u$.

Next consider the case $\frac{\partial x_1}{\partial t}\equiv 0$ on the support of $\psi$. We claim that still
\eqref{restricting stationary pts} holds.

We first dispose of the case $u_d=0$ when $\phi_u = u_1x_1+\sum_{j=2}^{d-1} u_j x_j$ with $u\cdot w= \sum_{j=1}^{d-1}u_jw_j=0$.
(We leave the case $d=2$ as an exercise). If $u_1=0$ then $\phi_u$ is a non-zero linear function and has no stationary points. Otherwise, at a stationary point,
\begin{equation}
 0= \frac{\partial \phi_u}{\partial x_j} = u_1 \frac{\partial x_1}{\partial x_j} +u_j
\end{equation}
Differentiating the relation $\im F(x,t\omega) = tw_d$ with respect to $x_j$ gives
\begin{equation}
 0=\frac{\partial x_1}{\partial x_j}\frac{\partial \im F}{\partial x_1} + \frac{\partial \im F}{\partial x_j} = -
\frac{\partial x_1}{\partial x_j}\frac{\partial \re F}{\partial y_1} - \frac{\partial \re F}{\partial y_j}
\end{equation}
by the Cauchy-Riemann equations. Hence
\begin{equation*}
 \begin{split}
  \langle \nabla_y \re F, \omega \rangle  &= w_1 \frac{\partial \re F}{\partial y_1} + \sum_{j=2}^{d-1} w_j \frac{\partial \re F}{\partial y_j} \\
&=  w_1 \frac{\partial \re F}{\partial y_1} -\sum_{j=2}^{d-1} w_j \frac{\partial x_1}{\partial x_j}\frac{\partial \re F}{\partial y_1}  \\&=  \frac{\partial \re F}{\partial y_1}(w_1 + \sum_{j=2}^{d-1} w_j\frac{u_j}{u_1})=0
  \end{split}
\end{equation*}
since by orthogonality of $u$ and $v_0$ and vanishing of $u_d$, we have $w_1 + \sum_{j=2}^{d-1} w_j\frac{u_j}{u_1}=0$.

Assume now that $u_d\neq 0$. Then at a stationary point,
\begin{equation}
0=\frac{\partial \phi_u}{\partial t} =u_d\langle \nabla_y \re F,\omega \rangle
\end{equation}
and $u_d\neq 0$ implies $\langle \nabla_y \re F,\omega \rangle=0$.
Differentiating the relation $\im F(x,t\omega) = tw_d$ with respect to $t$, using independence of $x_1$ relative to $t$, gives
\begin{equation}
 w_d = \langle \nabla_y \im F,\omega \rangle = \langle \nabla_x \re F,\omega \rangle
\end{equation}
by the Cauchy-Riemann equations. Thus in all cases \eqref{restricting stationary pts} hold at a stationary point.

Now differentiate the relation $\langle \nabla_y \re F(x,t\omega),\omega \rangle=0$ with respect to $t$,
keeping in mind that $x_1$ is independent of $t$, to get $ \omega^T \nabla_{y,y} \re F \omega = 0$
and using the Cauchy-Riemann equation we get $\omega^T \nabla_{x,x} \re F \omega =0$.
Specializing to the real locus $t=0$ again gives a contradiction.
\end{proof}

\begin{lemma}\label{lem:oscillatory integral}
Let $v_0$ be as in \eqref{choosing xi'}, and set $D=(\log \lambda)^2$. 
Then for all  $\xi\neq\xi'$ lying in a cap of size $\sqrt{\lambda D}$
around $\lambda v_0$ we have
  \begin{equation}
     |J_{\xi,\xi'} | \ll  \frac {1} {|\xi-\xi'|^r}  %( \int e^{-2t(x)A(\xi)}dx)^{1/2}  ( \int e^{-2t(x)A(\xi')}dx)^{1/2}
\,,\quad \forall r\geq 1
  \end{equation}
\end{lemma}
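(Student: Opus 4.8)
The plan is to estimate $J_{\xi,\xi'}$, in the form \eqref{def of J}, by non-stationary phase. Write $m=|\xi-\xi'|$ and $u=\frac{\xi-\xi'}{m}$, and introduce the first-order operator
\[
\mathcal L_u g = \frac{1}{2\pi i\,m}\,\frac{\langle \nabla_{t,\^x}\phi_u,\nabla_{t,\^x}g\rangle}{|\nabla_{t,\^x}\phi_u|^2},
\]
so that $\mathcal L_u\bigl(e^{2\pi i m\phi_u}\bigr)=e^{2\pi i m\phi_u}$. Integrating by parts $r$ times (no boundary terms, since $\psi$ is compactly supported in the interior of the integration region) gives
\[
J_{\xi,\xi'}=\int e^{2\pi i m\phi_u(t,\^x)}\,({}^t\mathcal L_u)^r\mathcal A_{\xi,\xi'}(t,\^x)\,dt\,d\^x .
\]
The coefficients of $\mathcal L_u$ are built from $\nabla\phi_u$ and $1/|\nabla\phi_u|^2$, and since $G(t,\^x)=(x,\re F(x,t\omega))$ is real-analytic on the (compact, $\lambda$-independent) support of $\psi$, these coefficients are bounded with bounded derivatives there, uniformly in $u$, once we know (a) a lower bound $|\nabla_{t,\^x}\phi_u|\geq c>0$ on $\supp\psi$, with $c$ independent of $u$ and $\lambda$; and (b) that $\mathcal A_{\xi,\xi'}$ and all its derivatives are bounded on $\supp\psi$ uniformly in $\xi,\xi',\lambda$. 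Granting (a) and (b), each application of ${}^t\mathcal L_u$ costs a factor $\ll 1/m$ (with an $\surface$- and $r$-dependent, $\lambda$-independent constant), and since $\supp\psi$ has bounded measure we obtain $|J_{\xi,\xi'}|\ll_{r,\surface} m^{-r}$.

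For (a): since $\xi,\xi'\in\vE$ both have length $\lambda$, we have $\langle\xi-\xi',\xi+\xi'\rangle=0$; and since $\xi,\xi'$ lie within $\sqrt{\lambda D}$ of $\lambda v_0$ with $D=(\log\lambda)^2$, the vector $\xi+\xi'$ is parallel to $v_0$ up to relative error $O(\sqrt{D/\lambda})$, whence $|\langle u,v_0\rangle|\ll (\log\lambda)/\sqrt\lambda$. Now by Lemma~\ref{non-cylindrical lemma}, for every unit vector $u'\perp v_0$ the phase $\phi_{u'}$ is non-constant and its stationary points lie in the lower-dimensional set \eqref{restricting stationary pts}, which by the choice of $\psi$ (see \eqref{restricting stationary pts a}) is disjoint from $\supp\psi$. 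Hence $(u',p)\mapsto|\nabla_{t,\^x}\phi_{u'}(p)|$ is continuous and strictly positive on the compact set $\{u'\in S^{d-1}:u'\perp v_0\}\times\supp\psi$, so it is bounded below by some $2c>0$. Finally $\nabla_{t,\^x}\phi_u=(D_{t,\^x}G)^{T}u$ depends linearly on $u$ with coefficients bounded on $\supp\psi$; so taking $\lambda\geq\lambda_\surface$ large enough that $|\langle u,v_0\rangle|$ is below the allowed perturbation, we get $|\nabla_{t,\^x}\phi_u|\geq c$ on $\supp\psi$. (The degenerate sub-case $\partial x_1/\partial t\equiv0$ needs no separate handling: it is already contained in Lemma~\ref{non-cylindrical lemma}, and $\mathcal L_u$ uses only $\nabla\phi_u\neq\vec 0$.)

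For (b): recall $\mathcal A_{\xi,\xi'}=e^{-2\pi t(A(\xi)+A(\xi'))}\psi(t,\^x)$ with $A(\xi)=\langle\xi-\xi_0,v_0\rangle$; since $v_0=-\xi_0/\lambda$ and $|\xi|=|\xi_0|=\lambda$, one has $A(\xi)=\lambda^{-1}\bigl(\lambda^2-\langle\xi,\xi_0\rangle\bigr)\geq 0$ by Cauchy--Schwarz. As $\psi$ is supported in $\tau<t<2\tau$ with $\tau>0$ fixed, writing $a:=A(\xi)+A(\xi')\geq0$ we see that any derivative $\partial^\beta_{t,\^x}\mathcal A_{\xi,\xi'}$ is a finite sum of terms bounded by $(2\pi a)^j e^{-2\pi\tau a}\,\|\partial^{\beta'}\psi\|_\infty$ with $j\leq|\beta|$; since $\sup_{a\geq0}a^j e^{-2\pi\tau a}<\infty$, this yields $\|\partial^\beta_{t,\^x}\mathcal A_{\xi,\xi'}\|_\infty\ll_{\beta,\tau}1$, uniformly in $\xi,\xi',\lambda$. (This is precisely where it matters that $\surface(v,\tau)$ was defined with $t$ bounded away from $0$; no cap bound on $A(\xi)$ is needed beyond $A(\xi)\geq 0$.)

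Feeding (a) and (b) into the integration-by-parts identity gives $|J_{\xi,\xi'}|\ll_{r,\surface} m^{-r}$ for every integer $r\geq1$, hence for all real $r\geq1$ since $m=|\xi-\xi'|\geq1$. I expect the main obstacle to be step (a): the substantive input is Lemma~\ref{non-cylindrical lemma}, which pins down the stationary locus of $\phi_u$ and — crucially — its independence of $u$; the remaining work is to upgrade its pointwise non-vanishing to a uniform gradient lower bound by compactness, and then to transfer this from exactly-orthogonal directions $u'$ to the merely nearly-orthogonal $u=(\xi-\xi')/|\xi-\xi'|$. It is this last perturbation that forces the cap to have radius $o(\lambda)$, and $\sqrt{\lambda D}$ comfortably suffices.
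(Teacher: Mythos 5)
Your proposal is correct and follows essentially the same route as the paper: reduce to a uniform lower bound $|\nabla\Phi_{\xi,\xi'}|\geq C$ on $\supp\psi$, obtained from Lemma~\ref{non-cylindrical lemma} plus the choice \eqref{restricting stationary pts a} of $\psi$ (via compactness over $u'\perp v_0$), then perturb to the nearly-orthogonal direction $u=(\xi-\xi')/|\xi-\xi'|$ using the cap estimate $|\langle u,v_0\rangle|\ll\sqrt{D/\lambda}$, and finish by $r$-fold integration by parts together with a uniform $C^r$ bound on $\mathcal A_{\xi,\xi'}$. The only (harmless) departures are cosmetic: you derive the near-orthogonality via $\langle\xi-\xi',\xi+\xi'\rangle=0$ rather than the paper's explicit $\eta,\eta'$ expansion, and you correctly observe that the amplitude estimate needs only $A(\xi)\geq 0$ and $t\geq\tau>0$, not the cap bound $A(\xi)\leq D$ that the paper also invokes.
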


\begin{proof}
Write
$$
J_{\xi,\xi'}= \int e^{2\pi i|\xi-\xi'| \Phi_{\xi,\xi'}(t,\^x)}
\mathcal A_{\xi,\xi'}(t,\^x) dtd\^xdx
$$
where
\begin{equation}
\Phi_{\xi,\xi'}(t,\^x):= \langle \frac{\xi-\xi'}{|\xi-\xi'|}, G(t,\^x) \rangle
\end{equation}

We claim that there is some $C>0$ for which for all  $\xi\neq\xi'$ in our
cap, the phase functions satisfy
\begin{equation}\label{grad Phi}
 ||\nabla \Phi_{\xi,\xi'}(t,\^x) || \geq C
\end{equation}
Indeed, decompose $\xi-\xi'$ into components along  $v_0$ and
orthogonal to it:
$$
\xi-\xi' = ku + \langle \xi-\xi' ,v_0 \rangle v_0\,, \quad u\perp v_0, \quad ||u||=1
$$
Since $\xi,\xi'$ lie in a cap of size $\sqrt{\lambda D}$ on the sphere
of radius $\lambda$,
the difference $\xi-\xi'$ is almost orthogonal to $v_0$ and 
%a simple geometric consideration shows that
we claim that 
\begin{equation}
 \left| \langle \frac{ \xi-\xi'}{|\xi-\xi'|} ,v_0\rangle \right| \ll
 \frac{\sqrt{\lambda D}}{\lambda } = o(1)
\end{equation}
Indeed, writing $\xi=\xi_0+\eta$, $\xi'=\xi_0+\eta'$, with
$|\eta|,|\eta'|\leq \sqrt{\lambda D}$ we get 
$$ 
2\langle \eta,\xi_0 \rangle +|\eta|^2 = 0 = 2\langle \eta',\xi_0
\rangle +|\eta'|^2
$$
and since $\xi_0=-\lambda v_0$, 
$$
\left| \langle \frac {\xi-\xi'}{|\xi-\xi'|}, v_0 \rangle \right| = 
\left| \langle \frac {\eta-\eta'}{|\eta-\eta'|}, v_0 \rangle \right|
= 
\left | \frac{|\eta|^2 - |\eta'|^2}{2\lambda |\eta-\eta'|}\right| \leq
\frac{|\eta|+|\eta'|}{2\lambda} = \frac{\sqrt{\lambda D}}{\lambda}
$$
as claimed. Likewise we have 
\begin{equation}
  A(\xi),A(\xi') \leq \frac D2
\end{equation}
because 
$$
A(\xi) = \langle \eta,v_0 \rangle  =  - \langle \eta, \frac
{\xi_0}{\lambda} \rangle = \frac{|\eta|^2}{2\lambda} \leq \frac D2
$$
Therefore we find
\begin{equation}
 |k| \sim |\xi-\xi'|
\end{equation}
Thus
$$
\Phi_{\xi,\xi'}(t,\^x) = \frac{k}{|\xi-\xi'|}  \phi_u(t,\^x) +
o(1)\langle v_0, G(t,\^x) \rangle
$$

By our choice \eqref{restricting stationary pts a} of $\psi$ and as
a consequence of Lemma~\ref{non-cylindrical lemma},
we know that $\phi_u$ has no critical point in $\supp \psi$ for all
$u\perp v_0$ and so there is some  $C>0$ so that
$|\nabla \phi_u(t,\^x)| >2C$
for all $u\perp v_0$ and all $(t,\^x)\in \supp \psi$. Therefore (recalling
that $|k|\sim |\xi-\xi'|$) for $\lambda \gg 1$ we have
$$
|\nabla \Phi_{\xi,\xi'}(t,\^x)| >C
$$
as claimed.

Integrating by parts we get that
%The set of $x$ where $\nabla \phi_u(x)=\vec 0$ is contained in a lower-dimensional set which is independent of $u$.
%Hence we can restrict the support of $\psi$ so that on it $|\nabla \phi_u|>c$ for all $u\perp v_0$.
%Integration by parts will then show that $J$ is bounded by
\begin{equation}
  J_{\xi,\xi'} \ll ||\mathcal A_{\xi,\xi'} ||_{C^r} \frac 1{|\xi-\xi'|^r} \,
  ,\quad \forall r\geq 1
\end{equation}
Since $0\leq A(\xi),A(\xi')\leq D$, and $\tau\leq t\leq 2\tau$ on $\supp \psi$,
we may bound the $C^r$-norm  of the amplitude function $\mathcal A_{\xi,\xi'} = e^{-2\pi t(A(\xi) + A(\xi'))}\psi(t,\^x)$ by
\begin{equation}
 ||\mathcal A_{\xi,\xi'} ||_{C^r}  \ll_\psi  (1+A(\xi)+A(\xi'))^{r+1} e^{-2\pi \tau (A(\xi)+A(\xi'))}
%\leq (\frac r{2\pi \tau e})^r
=O(1)
%D^r( \int e^{-2t(x)A(\xi)}dx)^{1/2}  ( \int e^{-2t(x)A(\xi')}dx)^{1/2}
\end{equation}
(the implied constant depends only on $\psi$, $\tau$ and $r$, not on $\xi,\xi'$), which gives the required estimate.
%Indeed, we can bound the derivatives of the amplitude $\mathcal A_{\xi,\xi'}$ by
%\begin{equation}
% ||\mathcal A_{\xi,\xi'} ||_{C^r} \ll \int Q_r( (A(\xi) + A(\xi'), \nabla t,\nabla \psi) e^{-t(x) (A(\xi) + A(\xi')} dx
%\end{equation}
%where $Q_r(y, t_1,\dots, t_d,x_1,\dots x_d)$ is a polynomial of total degree at most $r$.
%Since $A(\xi), A(\xi') \leq D$ we may then  majorize this by
%\begin{equation}
% ||\mathcal A_{\xi,\xi'} ||_{C^r} \ll_{\psi,t} D^r \int e^{-t(x) (A(\xi) + A(\xi')} dx
%\ll D^r
%\end{equation}
%where since $A(\xi)\geq 0$ and $t\geq 0$ we may replace the integrals by $O(1)$.
\end{proof}

\newpage
\section{A lower bound for the mean-square of $\varphi^\C(Z)e^{-2\pi i  \langle \xi_0,Z \rangle}$}
\label{lower bound for integral}

%On the set $\surface(v_0,\tau)$ we can, according to Lemma~\ref{lem:short sum}, replace
%$\varphi^\C(Z)e^{-2\pi i  \langle \xi_0,Z   \rangle}$ by a sum over
%frequencies $\xi$ in the set
%$$\vE'=\{\xi\in \vE: A(\xi) = \langle \xi-\xi_0,v_0 \rangle
%\leq D\approx (\log \lambda)^2 \} $$
%dwith negligible error.

\subsection{Representing $\varphi^\C$ on $\surface(v_0,\tau)$ by a short sum}
We show that for $Z\in \surface(v_0,\tau)$ 
we may represent $\varphi^\C(\vec  Z)e^{-2\pi i \langle \xi_0,\vec Z \rangle}$ by the part of its Fourier expansion
whose frequencies lie in a small cap around $\xi_0$, up to a negligible error.

For $\xi\in \vE$, set
\begin{equation}
A(\xi)   = \langle \xi-\xi_0,v_0 \rangle
\end{equation}
Observe that since all vectors $\xi\in \vE$ lie on a sphere, and thus
no  two vectors can lie on the same positive ray, we have
$\langle \xi,\xi_0 \rangle < \langle \xi_0,\xi_0 \rangle$
for all vectors $\xi\neq \xi_0$ and hence
\begin{equation}
  A(\xi) >0, \quad \xi\neq \xi_0 \,,\quad  A(\xi_0) = 0
\end{equation}

Let
\begin{equation}
 D\approx (\log \lambda)^2
\end{equation}
 The set
\begin{equation}
  \vE':= \{\xi\in \vE: A(\xi) <D \}
\end{equation}
is contained in a cap of size $\approx \sqrt{\lambda D}$ centered at $\xi_0$.
Note that for $\xi$ in this cap, $\xi-\xi_0$ is almost perpendicular to $v_0$.

\begin{lemma}\label{lem:short sum}
 For $\vec Z=\vec X+i\vec Y\in \surface(v_0,\tau)$, we have
\begin{equation}\label{rep phi by short sum}
 \varphi^\C(\vec  Z)e^{-2\pi i \langle \xi_0,\vec Z \rangle} = \sum_{A(\xi)\leq D} a_\xi
e^{2\pi i \langle \xi-\xi_0, \vec X\rangle }
e^{-2\pi t A(\xi)} + O(\frac 1{\lambda^N})
\end{equation}
for all $N\geq 1$.
\end{lemma}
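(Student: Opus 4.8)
The plan is to split the Fourier expansion of $\varphi^\C$ into the frequencies $\xi$ with $A(\xi)\le D$ and those with $A(\xi)>D$, and show that the second part contributes $O(\lambda^{-N})$ pointwise on $\surface(v_0,\tau)$. Concretely, on $\surface(v_0,\tau)$ we write $\vec Z=\vec X+it v_0$ with $t\in(\tau,2\tau)$, so that
\[
\varphi^\C(\vec Z)e^{-2\pi i\langle\xi_0,\vec Z\rangle}
=\sum_{\xi\in\vE}a_\xi\, e^{2\pi i\langle\xi-\xi_0,\vec X\rangle}\,e^{-2\pi t\langle\xi-\xi_0,v_0\rangle}
=\sum_{\xi\in\vE}a_\xi\, e^{2\pi i\langle\xi-\xi_0,\vec X\rangle}\,e^{-2\pi t A(\xi)}.
\]
Wait — one must be careful, since $\im\vec Z=tv_0$ only in the final coordinate description; I should instead use that $\im Z = t v_0$ exactly on $\surface(v_0,\tau)$ (this is exactly the property for which $\surface(v_0,\tau)$ was constructed in \S\ref{sec:geometric}), so $e^{2\pi i\langle\xi-\xi_0,Z\rangle}=e^{2\pi i\langle\xi-\xi_0,\re Z\rangle}e^{-2\pi\langle\xi-\xi_0,tv_0\rangle}=e^{2\pi i\langle\xi-\xi_0,\re Z\rangle}e^{-2\pi tA(\xi)}$, which gives the displayed identity with $\vec X=\re Z$.

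Next I would estimate the tail $\sum_{A(\xi)>D}|a_\xi|e^{-2\pi tA(\xi)}$. Since $t>\tau$ and $A(\xi)>0$ for all $\xi\ne\xi_0$, each term with $A(\xi)>D$ is bounded by $|a_\xi|e^{-2\pi\tau D}$. Using $\sum_\xi|a_\xi|^2=1$ and Cauchy--Schwarz, $\sum_\xi|a_\xi|\le(\#\vE)^{1/2}\ll\lambda^{C}$ for some fixed $C=C(d)$ (the polynomial bound on the number of lattice points on the sphere). Hence the tail is $\ll\lambda^{C}e^{-2\pi\tau D}$. With $D\approx(\log\lambda)^2$ this is $\ll\lambda^{C}e^{-2\pi\tau(\log\lambda)^2}=\lambda^{C-2\pi\tau\log\lambda}$, which is $O(\lambda^{-N})$ for every $N$ once $\lambda$ is large enough depending on $N$ (and on $\tau$, hence on $\surface$). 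This yields \eqref{rep phi by short sum} with the sum running over $A(\xi)\le D$ and error $O(\lambda^{-N})$.

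The only real issue to watch is uniformity: the holomorphic extension $\varphi^\C$ and the termwise manipulation of the series are justified because $\surface(v_0,\tau)$ lies at bounded imaginary part $t<2\tau$ (with $\tau\ll\delta$ chosen uniformly in $v_0\in\Omega_\surface$ in \S\ref{sec:geometric}), so the series $\sum_\xi a_\xi e^{2\pi i\langle\xi,Z\rangle}$ converges absolutely and uniformly there — indeed $|e^{2\pi i\langle\xi,Z\rangle}|=e^{-2\pi\langle\xi,\im Z\rangle}\le e^{2\pi|\xi|\cdot 2\tau}\ll e^{4\pi\tau\lambda}$, and after multiplying by $e^{-2\pi i\langle\xi_0,Z\rangle}$ and restricting to $A(\xi)\le D$ these factors are $O(1)$, while the full tail is controlled as above. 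I expect the main (very mild) obstacle to be simply bookkeeping the dependence of the implied constant and of the threshold on $\lambda$ on the fixed data $\tau_\surface$, $\delta$, and the dimensional constant $C(d)$; there is no analytic difficulty, since the gain $e^{-2\pi\tau D}=e^{-2\pi\tau(\log\lambda)^2}$ beats any fixed power of $\lambda$ and overwhelms the polynomial loss from $\#\vE$.
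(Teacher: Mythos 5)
Your proof is correct and follows essentially the same route as the paper's: both use $\im Z = t v_0$ with $t\in(\tau,2\tau)$ to rewrite the exponential in terms of $A(\xi)$, then bound the tail $\sum_{A(\xi)>D}$ by $(\#\vE)^{1/2}e^{-2\pi\tau D}$ via Cauchy--Schwarz and the choice $D\approx(\log\lambda)^2$. The paper's only presentational difference is that it first proves the estimate on the larger tube $T(v_0;\tau)=\{\im Z = tv_0,\ \tau<t<2\tau\}$ and then observes $\surface(v_0,\tau)\subset T(v_0;\tau)$, which is the same content.
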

\begin{proof}
We define a subset $T(v_0) \subset \C^d/\Z^d$ by
\begin{equation}
 T(v_0):=\{\vec Z\in \C^d: \im \vec Z = |\im \vec Z| v_0 \}
\end{equation}
that is the complex vectors whose imaginary parts point along the ray in the direction of $v_0$.
Restricting $\varphi^\C$ to $T(v_0)$, we have
\begin{equation}
 \varphi^\C(\vec Z) e^{-2\pi i \langle \xi_0, \vec Z \rangle} = \sum_\xi a_\xi e^{2\pi i \langle \xi-\xi_0, \vec X\rangle }
e^{-2\pi t A(\xi)}\,, \quad t:=|\im \vec Z|
\end{equation}

Now restrict $\vec Z$ further by assuming that it lies in the set
$$ T(v_0;\tau):=\{\vec Z\in T(v_0):\tau <|\im \vec Z| <2\tau\}$$
Then for  $\vec Z \in T(v_0;\tau)$ we have
\begin{multline}
 |\sum_{A(\xi)>D} a_\xi e^{2\pi i \langle \xi-\xi_0,\vec Z\rangle } |\leq
\sum_{A(\xi)>D} |a_\xi|e^{-2\pi |\im \vec Z|D} \\
\ll (\#\vE)^{1/2}e^{-2\pi \tau D} \ll \frac 1{\lambda^N}\,, \quad \forall N>1
\end{multline}
using $\sum_\xi|a_\xi|^2=1$,  $\#\vE\ll \lambda^{d-2+\epsilon}$ and $A(\xi)\geq D$, $|\im \vec Z|>\tau$.

Hence for $\vec Z\in T(v_0;\tau)$ we have
\begin{equation}
 \varphi^\C(\vec  Z)e^{-2\pi i \langle \xi_0,\vec Z \rangle} = \sum_{A(\xi)\leq D} a_\xi
e^{2\pi i \langle \xi-\xi_0, \vec X\rangle }
e^{-2\pi t A(\xi)} + O(\frac 1{\lambda^N})
\end{equation}
In particular, since $\surface(v_0,\tau)\subset T(v_0,\tau)$ we proved \eqref{rep phi by short sum}.
\end{proof}

\subsection{Proof of the lower bound on the mean square}
We now want to prove the lower bound \eqref{eq:lower bound for mean square}
for the mean square, namely that
$$
\int_{\surface(v_0,\tau)}| \varphi^\C(Z)e^{-2\pi i  \langle \xi_0,Z
  \rangle} |^2 d\mu(Z) \gg |a_{\xi_0}|^2 + O(\frac 1{\lambda^N})
$$
Hence what we need follows from the following Lemma, once we recall
that $A(\xi)\geq 0$ for all $\xi\in \vE'$ and that $A(\xi_0)=0$:
\begin{proposition}\label{lem:lower bd for short mean square}
There is some $C>0$ so that
\begin{equation}%\label{inductive assumption}
 %\int \left| \sum_{\xi\in \vE'} a_\xi e^{-2\pi tA(\xi)}
%e^{2\pi i \langle \xi-\xi_0, G(t,\^x) \rangle }  \right|^2 d\mu(t,\^x)  \\
\int_{\surface(v_0,\tau)} \left| \sum_{\xi\in \vE'} a_\xi e^{2\pi i \langle \xi-\xi_0,Z \rangle} \right|^2 d\mu(Z)
\geq C \sum_{\xi\in \vE'} |a_\xi|^2 e^{-8\pi \tau A(\xi)}  +O(\frac 1{\lambda^N})
\end{equation}
for all $N>1$. %where $  d\mu(t,\^x) = \psi(t,\^x)dtd\^x$.
\end{proposition}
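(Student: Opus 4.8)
The plan is to expand the square and isolate the diagonal term $\xi = \xi'$, which contributes exactly $\sum_{\xi\in\vE'} |a_\xi|^2 e^{-4\pi t A(\xi)} \mu(\surface(v_0,\tau))$-type quantities, and to control all off-diagonal terms using the oscillatory integral bound of Lemma~\ref{lem:oscillatory integral}. Writing out, after expanding $|\sum a_\xi e^{2\pi i\langle\xi-\xi_0,Z\rangle}|^2$ and integrating against $d\mu$, one gets
\[
\sum_{\xi,\xi'\in\vE'} a_\xi \bar a_{\xi'} \int_{\surface(v_0,\tau)} e^{2\pi i(\langle\xi-\xi_0,Z\rangle - \langle\xi'-\xi_0,\bar Z\rangle)} d\mu(Z)
= \sum_{\xi,\xi'\in\vE'} a_\xi \bar a_{\xi'} J_{\xi,\xi'},
\]
with $J_{\xi,\xi'}$ the integral in \eqref{defA of J}. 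The diagonal terms $J_{\xi,\xi}$ are $\int \psi(t,\^x) e^{-4\pi t A(\xi)}\,|\text{Jacobian}|\,dt\,d\^x$; after choosing $\psi\geq 0$ with $\psi$ bounded below on a fixed subdomain (and noting $\tau<t<2\tau$ on its support), we get $J_{\xi,\xi} \geq C_0 e^{-8\pi\tau A(\xi)}$ for a fixed $C_0>0$. So the diagonal contributes at least $C_0\sum_{\xi\in\vE'}|a_\xi|^2 e^{-8\pi\tau A(\xi)}$.

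For the off-diagonal terms, Lemma~\ref{lem:oscillatory integral} gives $|J_{\xi,\xi'}| \ll |\xi-\xi'|^{-r}$ for every $r\geq 1$ whenever $\xi\neq\xi'$ both lie in the cap of size $\sqrt{\lambda D}$ around $\lambda v_0$, which is exactly the situation for $\xi,\xi'\in\vE'$. By Cauchy--Schwarz, the total off-diagonal mass is bounded by
\[
\sum_{\xi\neq\xi'\in\vE'} |a_\xi||a_{\xi'}| |J_{\xi,\xi'}| \ll \sum_{\xi\in\vE'}|a_\xi|^2 \cdot \max_{\xi\in\vE'}\sum_{\xi'\neq\xi} |J_{\xi,\xi'}|,
\]
so it suffices to show that for each fixed $\xi\in\vE'$, $\sum_{\xi'\in\vE', \xi'\neq\xi}|J_{\xi,\xi'}|$ is small — ideally $o(1)$ or $O(\lambda^{-N})$ — relative to the lower bound we want, up to the admissible error term. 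Here is where the cluster structure from \S\ref{sec:cluster} enters: naively $\sum_{\xi'\neq\xi}|\xi-\xi'|^{-r}$ need not converge over lattice points on a sphere since points can be very close to $\xi$. The plan is to fix $\rho$ a small power of $\log\lambda$ (so that $\rho < R^{\delta(d)}$ with $R=\lambda$), apply Proposition~\ref{Separation lemma}(b) to decompose $\vE' = \coprod_\alpha \vE_\alpha$ into clusters of diameter $<\rho^{1+c(d)}$ separated by more than $\rho$, and then split the sum over $\xi'$ into the intra-cluster part (finitely many terms, $\xi'$ in the same cluster as $\xi$) and the inter-cluster part.

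The inter-cluster part is handled by the rapid decay: since distinct clusters are $\rho$-separated, $\sum_{\xi'\text{ in other clusters}}|J_{\xi,\xi'}| \ll \#\vE' \cdot \rho^{-r} \ll \lambda^{O(1)}(\log\lambda)^{-r'}$ for any $r'$, which is $O(\lambda^{-N})$ once $\rho$ is a sufficiently large power of $\log\lambda$ — no, more carefully: with $\rho = (\log\lambda)^{A}$ this is only polynomially-in-$\log$ small, not $\lambda^{-N}$; so instead one wants $\rho$ a small power $\lambda^{c}$ of $\lambda$ with $c<\delta(d)$, giving genuine $\lambda^{-N}$ decay of the inter-cluster sum after absorbing the $\#\vE'\ll\lambda^{d-2+\epsilon}$ factor. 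The intra-cluster part is the genuine obstacle: within a single cluster containing $\xi$, the points $\xi'$ may be arbitrarily close to $\xi$, so $|J_{\xi,\xi'}|$ is not small and we cannot discard these terms. The resolution is that within such a small cluster, by Jarnik's theorem (Theorem~\ref{Jarnik's thm}) the points $\xi'$ all lie on an affine hyperplane, and one must argue — presumably using the geometry of $\surface(v_0,\tau)$ and the structure of the phase — that the contribution of any single cluster to the quadratic form is nonnegative up to negligible error, or can be bounded by a small multiple of $\sum_{\xi'\in\text{cluster}}|a_{\xi'}|^2$ with a constant small enough to be absorbed into the diagonal. I expect this intra-cluster estimate, reconciling the near-diagonal oscillatory integrals with the required lower bound via the cluster/hyperplane structure, to be the main difficulty; the diagonal extraction and the inter-cluster tail are routine given the lemmas already proved.
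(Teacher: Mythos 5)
Your outline correctly identifies two of the three ingredients the paper uses: expand the square into $\sum a_\xi\bar a_{\xi'}J_{\xi,\xi'}$, use the cluster decomposition of Proposition~\ref{Separation lemma} with $\rho$ a small power of $\lambda$, and kill the inter-cluster terms via Lemma~\ref{lem:oscillatory integral}. Your diagonal computation $J_{\xi,\xi}\geq C_0 e^{-8\pi\tau A(\xi)}$ is also correct. But the place where you stop — the intra-cluster off-diagonal terms — is exactly where the proposition lives, and the diagonal-dominance strategy you set up cannot close it. For $\xi'\in\vE'$ at lattice distance $O(1)$ from $\xi$ (entirely possible), $J_{\xi,\xi'}$ is of the same order as $J_{\xi,\xi}$, so there is no constant $<1$ by which the cross term $2\operatorname{Re}(a_\xi\bar a_{\xi'}J_{\xi,\xi'})$ can be absorbed into $|a_\xi|^2J_{\xi,\xi}+|a_{\xi'}|^2J_{\xi',\xi'}$: if $a_{\xi'}\approx -a_\xi$ the near-cancellation is essentially complete. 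One has to prove the cluster's \emph{whole} contribution to the Hermitian form is $\gg\sum_{\mathrm{cluster}}|a_\xi|^2 e^{-8\pi\tau A(\xi)}$, not that its off-diagonal part is small.

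The missing idea is that the paper runs an induction on $d'=d(\vE')$, the minimal dimension of an affine subspace containing $\vE'$. In the inductive step ($d'\geq 2$) one decomposes $\vE'$ into $\rho$-separated clusters of diameter $\ll\lambda^{1/(d+1)}$; Jarnik's theorem (Theorem~\ref{Jarnik's thm}) then places each cluster $\vE_\alpha$ inside an affine hyperplane, so $d(\vE_\alpha)\leq d-1$ and the inductive hypothesis gives the desired lower bound for each cluster's own mean square, which is then summed over $\alpha$ together with the $O(\lambda^{-N})$ inter-cluster error. The positivity is finally established by hand in the base case $d'=1$: a line meets the sphere in at most two points, so there are at most two frequencies, and using the bound $|\nabla\Phi_{\xi,\xi'}|\geq C$ from the proof of Lemma~\ref{lem:oscillatory integral} one shows the phase $\phi(t,\hat x)$ varies by a definite amount across $\supp\psi$, so $\cos 2\pi\phi\geq -1+\delta$ on a subset of $\mu$-measure $\gg 1$; on that subset $|\mathcal A_\xi e^{i\cdot}+\mathcal A_{\xi'}e^{i\cdot}|^2\geq\delta(\mathcal A_\xi^2+\mathcal A_{\xi'}^2)$, which is exactly the two-term positivity you could not produce. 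So "the diagonal extraction … is routine" is the opposite of the situation: the diagonal alone cannot be made to dominate, and the recursive structure plus the explicit two-frequency argument is the substance of the proof.
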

\begin{proof}
We note that for $Z\in \surface(v_0,\tau)$,
$$ \langle \xi-\xi_0,Z \rangle = \langle \xi-\xi_0, G(t,\^x) \rangle +
it A(\xi) $$
where
\begin{equation}
 G(t,\^x) = \re \gamma^\C((x_1,\^x)+it\omega) =
(x_1,\^x, \re F((x_1,\^x)+it\omega))
\end{equation}
and so we need to show that
\begin{equation}\label{inductive assumption}
\int \left| \sum_{\xi\in \vE'} a_\xi e^{-2\pi tA(\xi)}
e^{2\pi i \langle \xi-\xi_0, G(t,\^x) \rangle }  \right|^2 d\mu(t,\^x)
\geq C \sum_{\xi\in \vE'} |a_\xi|^2 e^{-8\pi \tau A(\xi)}  +O(\frac 1{\lambda^N})
\end{equation}
for all $N>1$, where $  d\mu(t,\^x) = \psi(t,\^x)dtd\^x$.

%We and proceed to do so by dimensional induction. Specifically,
Let $d(\vE')$ be the minimal dimension of an affine hyperplane which
contains all the frequencies $\vE'$, so $ d(\vE') \leq d$. We show by
induction on $d'=d(\vE')$ that \eqref{inductive assumption} holds.
% there is some $C(d')>0$ for which

The case $d'=1$: This means all frequencies lie on a line, and hence (since they lie on a sphere) there are at most two of them. If there is exactly one frequency the claim is clear, so we need to treat the case $\vE'=\{\xi,\xi'\}$ consists of two distinct frequencies.
That is we want to show that
\begin{multline}\label{base case d=1}
 \int \left| a_\xi e^{2\pi i \langle \xi-\xi_0,G(t,\^x) \rangle }
 e^{-2\pi tA(\xi)} +
a_{\xi'}e^{2\pi i \langle \xi'-\xi_0,G(t,\^x) \rangle } e^{-2\pi  tA(\xi')} \right|^2
d\mu \\
\geq C( |a_\xi|^2e^{-8\pi \tau A(\xi)}  + |a_{\xi'}|^2 e^{-8\pi \tau A(\xi')} )
\end{multline}

Write
$$
\mathcal A_{\xi}(t) = |a_{\xi}| e^{-2\pi t A(\xi)} \,, \quad a_\xi =
|a_{\xi}| e^{2\pi i\alpha_{\xi}}
$$
Then we want to give a lower bound for
\begin{equation}\label{LHS of base case}
 \int \left| \mathcal A_{\xi}(t) e^{2\pi i(\alpha_{\xi} + \langle
   \xi-\xi_0,G(t,\^x) \rangle)} +
\mathcal A_{\xi'}(t) e^{2\pi i(\alpha_{\xi'} + \langle \xi'-\xi_0,G(t,\^x)
  \rangle)} \right|^2 d\mu
\end{equation}

We have
\begin{multline}
  \left| \mathcal A_{\xi}(t) e^{2\pi i(\alpha_{\xi} + \langle \xi-\xi_0,G(t,\^x) \rangle)} +
\mathcal A_{\xi'}(t) e^{2\pi i(\alpha_{\xi'} + \langle \xi'-\xi_0,G(t,\^x)
  \rangle)} \right|^2 \\
= \mathcal A_{\xi}(t)^2+ \mathcal A_{\xi'}(t)^2 +
 2\mathcal A_{\xi}(t)\mathcal A_{\xi'}(t)\cos2\pi \phi(t,\^x)
\end{multline}
where the phase function is
\begin{equation}
  \phi(t,\^x) = \alpha_{\xi} - \alpha_{\xi'}  + |\xi-\xi'|
  \Phi_{\xi,\xi'}(t,\^x) \,, \quad \Phi_{\xi,\xi'}(t,\^x) = \langle
  \frac{\xi-\xi'}{|\xi-\xi'|} ,G(t,\^x) \rangle
\end{equation}

Let
$$
\mathcal S_\delta = \{(t,\^x) \in \supp \psi: \cos 2\pi \phi(t,\^x)
\geq 
%(\alpha_\xi-\alpha_{\xi'} + 2\pi \langle \xi-\xi',G(x) \rangle)
-1+\delta \}
$$ 
According to \eqref{grad Phi},
\begin{equation}
 | \nabla \phi(t,\^x)| \geq |\xi-\xi'| C\geq C
\end{equation}
for all $(t,\^x)\in \supp \psi$ and all
$\xi\neq \xi'\in \vE'$ (using integrality for
$|\xi-\xi'|\geq 1$). Therefore since the phase varies by at least a
fixed amount, there is some $\delta>0$ and $C >0$
(independent of $\xi,\xi'$) so that
\begin{equation}
  \int_{\mathcal S_\delta} d\mu \geq \frac C\delta
\end{equation}
%$\phi(x)\notin [\pi-\delta,\pi+\delta]$ is given measure
%$\geq \theta(\delta)$ by $d\mu$.
On the set $\mathcal S_\delta$ we have
\begin{multline*}
 % \begin{split}
\mathcal A_{\xi}(t)^2+ \mathcal A_{\xi'}(t)^2 +
2\mathcal A_{\xi}(t)\mathcal A_{\xi'}(t)\cos2\pi \phi(t,\^x)  \\
\geq  \mathcal A_{\xi}(t)^2+ \mathcal A_{\xi'}(t)^2 +
(-1+\delta) 2\mathcal A_{\xi}(t)\mathcal A_{\xi'}(t) \\
 =
\delta ( \mathcal A_{\xi}(t)^2+ \mathcal A_{\xi'}(t)^2  )
+ (1-\delta) \left| \mathcal A_{\xi}(t) - \mathcal A_{\xi'}(t)\right|^2  \\
%\geq \delta( \min_x A_{\xi}(x)^2  + \min_x A_{\xi'}(x)^2)\\
 \geq \delta (|a_\xi|^2e^{-8\pi \tau A(\xi)}  + |a_{\xi'}|^2 e^{-8\pi \tau A(\xi')} )
 % \end{split}
\end{multline*}

Therefore we find that
\begin{equation*}
 \begin{split}
  \eqref{LHS of base case} & =
\int \left\{ \mathcal A_{\xi}(t)^2+ \mathcal A_{\xi'}(t)^2 +
 2\mathcal A_{\xi}(t)\mathcal A_{\xi'}(t)\cos2\pi \phi(t,\^x) \right\}
 d\mu \\
& \geq  \int_{\mathcal S_\delta} \left\{  \mathcal A_{\xi}(t)^2+ \mathcal A_{\xi'}(t)^2 +
 2\mathcal A_{\xi}(t)\mathcal A_{\xi'}(t)\cos 2\pi \phi(t,\^x)
\right\}  d\mu\\
&\geq
\delta (|a_\xi|^2e^{-8\pi\tau A(\xi)}  + |a_{\xi'}|^2 e^{-8\pi\tau A(\xi')} )
\int_{\mathcal S_\delta} d\mu \\
&\geq C( |a_\xi|^2e^{-8\pi\tau A(\xi)}  + |a_{\xi'}|^2 e^{-8\pi\tau A(\xi')} )
 \end{split}
\end{equation*}
%with $C=\delta \theta(\delta)$ independent of $\xi,\xi'$.
as claimed.

The case $d'\geq 2$:
By Proposition~\ref{Separation lemma} we may partition $\vE'=\coprod \vE_\alpha$ where
\begin{equation}
  \diam \vE_\alpha \ll \lambda^{\frac 1{d+1}}\,, \quad
\dist(\vE_\alpha,\vE_\beta) \gg \lambda^{\frac 1{(d+1)(1+c(d))}}
\,, \alpha\neq \beta
\end{equation}
Then
\begin{multline*}
 \int \left| \sum_{\xi\in \vE'} a_\xi e^{2\pi i \langle
   \xi-\xi_0,G(t,\^x) \rangle } e^{-2\pi tA(\xi)} \right|^2 d\mu
\\
= \sum_\alpha \int \left| \sum_{\xi\in \vE_\alpha}
a_\xi e^{2\pi i \langle \xi-\xi_0,G(t,\^x) \rangle } e^{-2\pi tA(\xi)} \right|^2 d\mu
+ \sum_{\alpha\neq \beta} \sum_{\xi\in \vE_\alpha}
\sum_{\xi'\in \vE_\beta} a_\xi \overline{a_{\xi'}} J_{\xi,\xi'}
\end{multline*}
where the oscillatory integral $J_{\xi,\xi'}$ is given in \eqref{def of J}.
By Lemma~\ref{lem:oscillatory integral}  we have
\begin{equation}
%\begin{split}
 |J_{\xi,\xi'}|  \ll \frac {1} {|\xi-\xi'|^r}
% ( \int e^{-2t(x)A(\xi)}dx)^{1/2}  ( \int e^{-2t(x)A(\xi')}dx)^{1/2}  \\
 \ll \frac  1{\lambda^N} %( \int e^{-2t(x)A(\xi)}dx)^{1/2}  ( \int e^{-2t(x)A(\xi')}dx)^{1/2}
\,, \quad \forall N>1
%\end{split}
\end{equation}
since $|\xi-\xi'|\gg \lambda^{\frac 1{(d+1)(1+c(d))}}$ for $\xi\in
\vE_\alpha$, $\xi'\in \vE_\beta$ with $\alpha\neq \beta$.
Hence we have an upper bound for the off-diagonal terms
\begin{equation}\label{upper bd for off-diag}
 \sum_{\alpha\neq \beta} \sum_{\xi\in \vE_\alpha} \sum_{\xi'\in \vE_\beta} a_\xi \overline{a_{\xi'}} J_{\xi,\xi'} \ll
\frac 1{\lambda^N} \sum_{\xi\in \vE'} |a_\xi|^2   %\int e^{-2t(x)A(\xi)}dx
\leq \frac 1{\lambda^N}\,, \quad \forall N>1
\end{equation}
 taking into account the normalization $\sum_\xi |a_\xi|^2=1$.

%giving an upper bound for the diagonal terms
%\begin{equation}
% \int \left| \sum_{\xi\in \vE_\alpha} a_\xi e^{i \langle \xi-\xi_0,G(t,\^x) \rangle } e^{-tA(\xi)} \right|^2 d\mu
%\ll \frac 1{\lambda^N}
%\end{equation}

We now want to derive a lower bound for the diagonal terms.
By Jarnik's theorem, since $\diam \vE_\alpha \ll \lambda^{\frac
  1{d+1}}$, the set $\vE_\alpha$ is contained in an affine hyperplane
$H_\alpha$ and hence $d(\vE_\alpha)\leq d-1$. Thus by the induction hypothesis we have
\begin{equation}
 \int \left| \sum_{\xi\in \vE_\alpha} a_\xi e^{2\pi i \langle \xi-\xi_0,G(t,\^x) \rangle }
 e^{-2\pi tA(\xi)} \right|^2 d\mu
\geq C \sum_{\xi \in \vE_\alpha} |a_\xi|^2 e^{-8\pi \tau A(\xi)}
+ O(\frac 1{\lambda^N})
\end{equation}
%and hence lies on a $(d-2)$-dimensional sphere
%\begin{equation}
% |\xi- x_\alpha| = r \,, \quad r \ll \lambda^?
%\end{equation}
Combining with the upper bound \eqref{upper bd for off-diag} for the
off-diagonal terms  and summing over $\alpha$ we get the required
lower bound \eqref{inductive assumption}.
\end{proof}

\newpage
\section{Non-zero curvature: Proof of Theorem~\ref{thm:nonzero curvature}}\label{sec:all freqs}
Assume now that the hypersurface $\surface$ has nowhere zero Gauss-Kronecker
curvature, i.e. all principal curvatures  are nowhere zero. (The
condition that $\surface$ is not flat means that at least one of the
principal curvatures is nonzero).
We will use Theorem~\ref{lem:small a's} to prove
Theorem~\ref{thm:nonzero curvature}, namely
that for $\lambda>\lambda_{\surface}$, an eigenfunction
$\varphi_\lambda$ cannot vanish identically on $\surface$.

%The final step is to use Theorem~\ref{lem:small a's} to show that
%{\em  all} coefficients $a_\xi$ are small, thus leading to a contradiction.

\subsection{The strategy}
We keep the normalization $\sum_\xi |a_\xi|^2=1$.
We assume that there is some cap
\begin{equation}
\Omega_0 = \Ccap(w_0,\theta_0) \subset S^{d-1}
\end{equation}
around $w_0$, with opening angle $\theta$ of size $O(1)$ so that
\begin{equation}
 |a_\xi|<\frac 1{\lambda^N}, \quad \forall \frac{\xi}{|\xi|} \in \Omega_0
\end{equation}
guaranteed by Theorem~\ref{lem:small a's}. We shall call such frequencies ``negligible''.

We aim to show that there is a {\em larger} cap
$$\Omega_1 = \Ccap(w_1,\theta_1) \subset S^{d-1}$$
for which  all frequencies $\vE_1:= \lambda\Omega_1\cap \vE$
in direction $\Omega_1$ are negligible. Here ``larger'' means
that say
\begin{equation}\label{bigness of Omega_1}
\theta_1\geq \theta_0+\delta_0
\end{equation}
for some fixed $\delta_0>0$ (independent of $\lambda$).
We will show that all frequencies in $\vE_1$ are ``negligible''.
Proceeding in this way we will eventually show that {\em all} frequencies
are ``negligible'', contradicting $\sum_\xi|a_\xi|^2=1$.

\subsection{An oscillatory integral}\label{sec:osc integral 2}

Since $\surface$ has non-vanishing curvature, the unit normals to
$\surface$ sweep out at least a cap  $\Ccap(u_0,\delta_1)$ for some
$\delta_1>0$. If $\delta_0<\delta_1/2$,
then for any $u\in \Ccap(u_0,\frac {\delta_1}2)$ we have
\begin{equation}
 \Ccap(u,\delta_0) \subset \Ccap(u_0,\delta_1)
\end{equation}
We choose $\delta_0$ sufficiently small so that for any such $u$,
we have a patch $\surface_u\subset \surface$ on the surface so that
the Gauss map
\begin{equation}\label{diffeo}
N:\surface_u\to \Ccap(u,\delta_0)
\end{equation}
is a diffeomorphism.

Fix a bump function $\psi$ supported in the cap $\Ccap(u_0,\delta_0)$,
from which we get a smooth measure $d\mu$ on the cap; applying
rotations give smooth measure on any cap $\Ccap(u,\delta_0)$, and
pulling back to the patch $\surface_u$ via the Gauss map $N$ we get a
smooth measure $\mu_u$ on $\surface_u$, which depends in a bounded way
on $u\in \Ccap(u_0,\delta_1/2)$. Denote by
\begin{equation}
\^\mu_u(\xi):=\int_{\surface_u} e^{-2\pi i \langle \xi,x\rangle}  d\mu_u(x)
\end{equation}
its Fourier transform.

Fourier transforms of surface-carried measure are known to decay polynomially in the presence or curvature \cite{Hlawka, Herz},
in fact if the surface is not flat \cite{Littman}. 
However there is faster decay in directions which are disjoint from the image of the Gauss map. We use this to prove:
\begin{lemma}\label{nonstationary phase lemma}
%Fix some $\delta_0>0$, and assume that the hypersurface $\surface$ is such that all unit normals $\nu_x$ are contained in a cap of size %$\delta_0$:
%\begin{equation}
% \{\nu_x:x\in \surface\} \subset \Ccap(u,\delta_0) \subset S^{d-1}
%\end{equation}
%for some center $u\in S^{d-1}$.
For all vectors $  y\neq 0$ which do not lie in the direction of the
bigger cap $\Ccap(u,2\delta_0)$, we have
\begin{equation}
\^\mu(y)  \ll_{N} \frac
 1{|y|^N},\quad \frac{  y}{|  y|} \notin \Ccap(u,2\delta_0),
 \quad \forall N>0
\end{equation}
where the implied constants can be taken uniform in $u\in \Ccap(u_0,\delta_1/2)$
\end{lemma}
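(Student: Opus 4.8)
The plan is to realize $\^\mu_u$ as an oscillatory integral over the patch $\surface_u$ and to integrate by parts using a well-chosen first-order differential operator, exploiting that the phase has no stationary point when the direction $y/|y|$ stays away from the cap $\Ccap(u,2\delta_0)$. Concretely, parametrize $\surface_u$ as a graph $x\mapsto \gamma(x)=(x,f(x))$ as in \S~\ref{sec:geometric}, so that
\begin{equation}
\^\mu_u(y)=\int e^{-2\pi i \langle y,\gamma(x)\rangle}\, b(x)\, dx\,,
\end{equation}
where $b(x)$ is the smooth, compactly supported density obtained by pulling $\psi$ back through the Gauss map (and including the Jacobian of that diffeomorphism); $b$ depends in a bounded way on $u$. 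The gradient of the phase $\Psi_y(x)=\langle y,\gamma(x)\rangle$ is $\nabla_x\Psi_y = (y_1+y_d\partial_1 f,\dots,y_{d-1}+y_d\partial_{d-1}f)$, i.e. $\nabla_x\Psi_y=0$ exactly when $y$ is parallel to the normal $(-\nabla f(x),1)$, that is when $y/|y|=\pm N_p$ for the corresponding point $p$.

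The key geometric point is that the normals $N_p$ over $\surface_u$ fill out precisely $\Ccap(u,\delta_0)$, by the diffeomorphism \eqref{diffeo}. Hence if $y/|y|\notin \Ccap(u,2\delta_0)$, then $y/|y|$ is bounded away — by a fixed angular distance depending only on $\delta_0$ — from every normal direction $\pm N_p$, $p\in\surface_u$. This gives a uniform lower bound
\begin{equation}
|\nabla_x \Psi_y(x)| \gg |y|\,,\qquad x\in \supp b\,,
\end{equation}
with implied constant depending only on $\delta_0$ and the $C^1$-size of $f$, hence uniform in $u\in\Ccap(u_0,\delta_1/2)$. One then applies the standard non-stationary phase integration-by-parts operator $L=\frac{1}{2\pi i|\nabla\Psi_y|^2}\langle \nabla\Psi_y,\nabla_x\rangle$, which satisfies $L(e^{-2\pi i\Psi_y})=|y|^{-0}\cdot(\text{no, rather})$ — more precisely, integrating by parts $N$ times transfers $N$ derivatives onto the amplitude and each step gains a factor $\ll 1/|y|$, yielding $|\^\mu_u(y)|\ll_N \|b\|_{C^N}\,|y|^{-N}$. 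Since $\|b\|_{C^N}$ is bounded uniformly in $u$ (the Gauss map and its inverse have bounded derivatives on the relevant caps, and $\psi$ is fixed), this is exactly the claimed bound.

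The only mild obstacle is bookkeeping: one must check that the constant in $|\nabla_x\Psi_y|\gg|y|$ really is uniform over $u\in\Ccap(u_0,\delta_1/2)$ and over $x\in\supp b$, which follows because $\surface$ is fixed and compact (so $|\nabla f|$ is bounded) and because $2\delta_0$ leaves a fixed angular gap around the normal cap $\Ccap(u,\delta_0)$; and one must track that the amplitude derivatives produced by repeated integration by parts stay bounded uniformly in $u$, which is immediate from the smoothness of the Gauss-map diffeomorphism. No curvature beyond what is already assumed is needed for this lemma — curvature was used only to guarantee that the Gauss map is a local diffeomorphism and that the normals sweep out an honest cap; the decay here is the elementary non-stationary phase estimate.
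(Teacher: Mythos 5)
Your argument is essentially the paper's proof: express $\^\mu_u$ as an oscillatory integral over the patch $\surface_u$, use that the normals lie in $\Ccap(u,\delta_0)$ while $\hat y:=y/|y|$ avoids $\Ccap(u,2\delta_0)$ to obtain a uniform lower bound on the gradient of the phase, and then integrate by parts repeatedly. The presentational difference is minor: the paper works with a general regular parametrization $X(t)$, writes $\nabla\phi=DX(t)\hat y$, chooses a covector $\omega$ with $\omega DX(t)=P(\hat y)$ (the tangential projection), obtains $\omega\nabla\phi=|P(\hat y)|^2\geq(\sin\delta_0)^2$, and controls $|\omega|$ via positive-definiteness of the first fundamental form $DX\,DX^T$; you use the graph parametrization $\gamma(x)=(x,f(x))$, compute $\nabla_x\Psi_y$ explicitly, and appeal to compactness of the geometric data. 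Both routes give the same quantitative conclusion and the same uniformity in $u$.

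One point worth tightening, which the paper's own write-up also glosses over: the hypothesis $\hat y\notin\Ccap(u,2\delta_0)$ bounds $\hat y$ away from $N_p\in\Ccap(u,\delta_0)$, but it does \emph{not} by itself bound $\hat y$ away from $-N_p$, contrary to your sentence that $y/|y|$ ``is bounded away \dots from every normal direction $\pm N_p$''. The gradient $\nabla_x\Psi_y$ (equivalently $P(\hat y)$) vanishes precisely when $\hat y$ lies on the \emph{line} spanned by $N_p$, so the angular distance to $\pm N_p$ is what must be controlled; in the paper's notation the inequality $\sin\alpha\geq\sin\delta_0$ fails if $\alpha$ is close to $\pi$. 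The clean fix is to state the hypothesis as $\hat y\notin\Ccap(u,2\delta_0)\cup\Ccap(-u,2\delta_0)$, which puts $\alpha\in(\delta_0,\pi-\delta_0)$. This costs nothing downstream: $\mu_u$ is a real measure so $|\^\mu_u(-y)|=|\^\mu_u(y)|$, and the directions $\frac{\xi-\eta}{|\xi-\eta|}$ furnished by Lemma~\ref{lem:playing with caps} in \S~\ref{sec:all freqs} in fact avoid both caps (running that lemma's reflection argument with $x$ and $y$ exchanged yields the $-u$ case as well). You should make this explicit when you assert the lower bound $|\nabla_x\Psi_y|\gg|y|$.
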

\begin{proof}
 We take a regular parametrization $X:t=(t_1,\dots, t_{d-1})\mapsto
 X(t)$ of the patch $\surface_u$. Then the Fourier transform becomes
 \begin{equation*}
   \^\mu_u(y) = \int_{\R^{d-1}} e^{-2\pi i |y| \phi(t)} \Psi(t) dt
 \end{equation*}
for a suitable amplitude $\Psi\in C_c^\infty(\R^{d-1})$ and with phase function
\begin{equation*}
  \phi(t) = \langle \^y, X(t) \rangle,\quad \^y:=\frac{y}{|y|}
\end{equation*}
Our claim will follow by integration by parts
if we give a uniform lower bound for the
gradient of the phase function
\begin{equation}
  |\nabla \phi| \geq C>0
\end{equation}

The gradient of the phase function is given by
\begin{equation*}
  \nabla \phi = DX(t) \^y
\end{equation*}
where we think of $\^y\in S^{d-1}$ as a column vector and the
derivative $DX=(\frac{\partial X_i}{\partial t_j})$ is a $(d-1)\times
d$ matrix.
Choose a row vector $\omega\in \R^{d-1}$ for which $\omega DX(t)$ is
the orthogonal projection $P(\^y)$ of $\^y$ on the tangent space
$T_{X(t)}\surface$. Thus
\begin{equation*}
\omega \nabla \phi(t)  = \omega DX(t) \^y =  P(\^y)\cdot \^y
\end{equation*}

Let $\alpha$ be the angle between the unit normal
$N_{X(t)}$ and $\^y$. Then $\alpha>\delta_0$ since
by assumption $N_{X(t)}\in \Ccap(u,\delta_0)$
while $\^y\notin \Ccap(u,2\delta_0)$. Therefore
\begin{equation*}
   |P(\^y)| = |\^y| \sin \alpha = \sin \alpha \geq \sin \delta_0
\end{equation*}
and therefore
\begin{equation*}
\omega\nabla \phi(t)=  P(\^y)\cdot \^y= |P(\^y)|^2\geq (\sin \delta_0)^2
\end{equation*}
On the other hand,
\begin{equation*}
  |\omega \nabla \phi(t)| \leq |\omega| |\nabla \phi(t)|
\end{equation*}
and so we find
\begin{equation*}
  |\nabla \phi|\geq \frac{(\sin \delta_0)^2}{|\omega|}
\end{equation*}
and it remains to give an upper bound for $|\omega|$.

We have
\begin{equation*}
|P\^y|^2=  |\omega DX|^2  = \omega DX  DX^T \omega^T
\end{equation*}
Now $P(\^y)$ has length at most $|\^y|=1$, being the orthogonal
projection of the unit vector $\^y$, and so we find
\begin{equation*}
  1 \geq  \omega DX(t)  DX(t)^T \omega^T
\end{equation*}
The rows of $DX(t)$ are linearly independent since we assume that $X$
is a regular parametrization. Hence the quadratic form $DX(t)
DX(t)^T$ is positive definite  (it is the first fundamental form of
the hypersurface) and so
\begin{equation*}
  \omega DX(t)  DX(t)^T \omega^T \geq c(t) |\omega|^2
\end{equation*}
for some $c(t)>0$, and taking $c:=\min\{ c(t)\}>0$ we  find
$1 \geq c |\omega|^2$, that is
\begin{equation*}
  |\omega|\leq \frac 1{\sqrt{c}}
\end{equation*}
Thus
\begin{equation}
  |\nabla \phi|\geq \frac{(\sin \delta_0)^2}{\sqrt{c}}
\end{equation}
giving the required lower bound.
\end{proof}

%\subsection{Playing with caps}
\subsection{Geometric considerations}
For a unit vector $u\in S^{d-1}$ let
\begin{equation}
 \tau_u(x) = x- 2\langle x,u \rangle  u
\end{equation}
be the reflection in the hyperplane orthogonal to $u$.

Fix $u_0\in S^{d-1}$, and $\delta>0$. Then there is some
$\epsilon=\epsilon_d(\delta)>0$ so that for every $w\in S^{d-1}$, the
set of reflected points $\tau_u w$, for $u$ ranging over all points in
the cap $\Ccap(u_0,\delta)$, contains a cap $\Ccap(w_1,\epsilon)$:
\begin{equation}\label{def of epsilon}
 \forall w \quad \exists w_1 \mbox{  such that  } \quad
 \Ccap(w_1,\epsilon) \subseteq \{\tau_u w: u\in \Ccap(u_0,\delta) \}
\end{equation}
By symmetry, $\epsilon$ is independent of the base point $u_0$, and
depends only on the dimension $d$ and on $\delta$.

%\begin{verbatim}
%Note: For $d=2$, $\epsilon_2(\delta) \approx \delta$ is of order $\delta$.
%What happens if $d\geq 3$ ?
%\end{verbatim}

Now let $u_0\in S^{d-1}$, $\delta_1>0$ be as in \S~\ref{sec:osc integral 2}.
We fix  $\delta_0>0$, with $\delta_0<\delta_1/2$ sufficiently small so
that the Gauss map gives a diffeomorphism \eqref{diffeo}, and
Lemma~\ref{nonstationary phase lemma} holds.
In addition we require
\begin{equation}
  \delta_0 <  \frac 16\epsilon_d(\frac{\delta_1}2)
\end{equation}

%Then for any $u\in \Ccap(u_0,\frac {\delta_1}2)$ we have
%\begin{equation}
% \Ccap(u,\delta_0) \subset \Ccap(u_0,\delta_1)
%\end{equation}

Recall that $\Ccap(w_0,\theta_0)=\Omega_0$ is a cap where we assume
the Fourier coefficients $a_\xi\approx 0$
are negligible for all frequencies with $\xi/|\xi|\in \Omega_0$.

\begin{lemma}\label{lem:playing with caps}
Let $u\in \Ccap(u_0,\delta_1/2)$ and
$B \subset \tau_u \Ccap(w_0,\theta_0-4\delta_0)$.
Then for all unit vectors $y\notin B$, either $y\in \Omega_0 =
\Ccap(w_0,\theta_0)$ or else
\begin{equation}\label{nonresonance u}
 \frac{x-y}{|x-y|} \notin \Ccap(u,2\delta_0), \quad \forall x\in
 \tau_u \Ccap(w_0,\theta_0-4\delta_0), x\neq y
\end{equation}
\end{lemma}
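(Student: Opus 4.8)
The plan is to translate the statement into elementary spherical geometry and track angular distances under the reflection $\tau_u$. The key observations are: (1) $\tau_u$ is an isometry of $S^{d-1}$, so it maps a cap $\Ccap(c,\theta)$ to the cap $\Ccap(\tau_u c,\theta)$ of the same opening angle; (2) as $u$ varies over $\Ccap(u_0,\delta_1/2)$ and we also restrict to $x\in\tau_u\Ccap(w_0,\theta_0-4\delta_0)$, the point $x$ stays within a bounded angular neighborhood of a ``model'' reflected cap. Let me fix $u\in\Ccap(u_0,\delta_1/2)$ and let $y$ be a unit vector with $y\notin B$, hence in particular $y\notin\tau_u\Ccap(w_0,\theta_0-4\delta_0)$ since $B\subset\tau_u\Ccap(w_0,\theta_0-4\delta_0)$ — wait, that inclusion goes the wrong way; rather, the hypothesis is $B\subset\tau_u\Ccap(w_0,\theta_0-4\delta_0)$ and $y\notin B$, so I should not conclude $y$ lies outside the reflected cap. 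So the dichotomy must be: either $y$ lies \emph{inside} the slightly larger cap $\Omega_0=\Ccap(w_0,\theta_0)$ (the ``good'' alternative), or $y$ is at angular distance $>\theta_0$ from $w_0$, and then I must prove the non-resonance condition \eqref{nonresonance u}.

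So assume $\dist_{S^{d-1}}(y,w_0)>\theta_0$. I want to show that for every $x\in\tau_u\Ccap(w_0,\theta_0-4\delta_0)$ with $x\neq y$, the unit vector $(x-y)/|x-y|$ is not in $\Ccap(u,2\delta_0)$. Write $x=\tau_u x'$ with $x'\in\Ccap(w_0,\theta_0-4\delta_0)$. The first step is a geometric lemma about the chord direction: for two distinct unit vectors $p,q$, the normalized difference $(p-q)/|p-q|$ is the unit vector bisecting $p$ and $-q$ appropriately; more usefully, $(p-q)\perp(p+q)$, and the angle that $(p-q)/|p-q|$ makes with any fixed direction $u$ is controlled by how $p$ and $q$ sit relative to the hyperplane $u^\perp$. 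The cleanest route: $(x-y)/|x-y|\in\Ccap(u,2\delta_0)$ would force both $x$ and $y$ to lie close (within $O(\delta_0)$) to a common hyperplane $u^\perp$-translate in the relevant sense; quantitatively, $\langle (x-y)/|x-y|,u\rangle>\cos 2\delta_0$ means $\langle x,u\rangle - \langle y,u\rangle$ is nearly as large as $|x-y|$, which pins down the positions of $x$ and $y$ on the sphere relative to $u$. Then apply $\tau_u$: since $\tau_u$ fixes $u$-component up to sign, $\langle x,u\rangle=-\langle x',u\rangle$, and we get that $x'$ and $y$ are forced to be angularly close to the \emph{reflection} configuration, i.e. $y$ must lie within $O(\delta_0)$ of $\tau_u x'\in\tau_u\Ccap(w_0,\theta_0-4\delta_0)$. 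Chaining the triangle inequality on $S^{d-1}$: $\dist(y,w_0)\le\dist(y,\tau_u x')+\dist(\tau_u x', \tau_u w_0)=\dist(y,\tau_u x')+\dist(x',w_0)\le O(\delta_0)+(\theta_0-4\delta_0)$, and with the constants arranged ($\delta_0<\frac16\epsilon_d(\delta_1/2)$, and the $O(\delta_0)$ term being at most, say, $4\delta_0$ by a careful count of the chord-angle estimate) this gives $\dist(y,w_0)<\theta_0$, contradicting our assumption. Hence \eqref{nonresonance u} holds.

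\textbf{Main obstacle.} The crux is the quantitative chord-direction estimate: deducing from $\langle(x-y)/|x-y|,u\rangle$ being close to $1$ that $y$ is angularly close to $\tau_u x$, with an \emph{explicit} dependence on $\delta_0$ that fits inside the $4\delta_0$ budget built into the cap $\Ccap(w_0,\theta_0-4\delta_0)$. One has to be careful that $|x-y|$ can be small (when $x,y$ are nearly antipodal-to-reflection, i.e. nearly equal), in which case the chord direction is unstable; but in that regime $y$ is already close to $x$, hence close to $\Ccap(w_0,\theta_0-4\delta_0)$, hence inside $\Omega_0$, so that degenerate case lands in the first alternative of the dichotomy. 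Handling this small-chord case cleanly, and bookkeeping the constants so that the reflection-symmetry constant $\epsilon_d(\delta_1/2)$ and the $4\delta_0$ slack interact correctly, is where the real work lies; the rest is triangle inequalities on the sphere and the isometry property of $\tau_u$. I would organize the final write-up as: (i) $\tau_u$ is an isometry; (ii) the chord-angle lemma with explicit constant; (iii) the small-chord degenerate case giving $y\in\Omega_0$; (iv) the generic case giving \eqref{nonresonance u} via the triangle inequality and the constant bound on $\delta_0$.
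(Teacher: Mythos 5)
Your proposal identifies the right global shape (dichotomy, isometry of $\tau_u$, triangle inequality), but the ``main obstacle'' you flag as unresolved is exactly where the real content lies, and your sketch of how to cross it has a genuine gap plus a sign/variable confusion. The paper's proof rests on one exact, elementary identity that you never state: for distinct unit vectors $x,y\in S^{d-1}$, if $u_1:=(x-y)/|x-y|$ then automatically $y=\tau_{u_1}(x)$. (Indeed $|y|^2=|x-|x-y|u_1|^2=1-2|x-y|\langle x,u_1\rangle+|x-y|^2$ forces $\langle x,u_1\rangle=|x-y|/2$, whence $x-2\langle x,u_1\rangle u_1=x-|x-y|u_1=y$.) This is the spherical fact that the two intersection points of a chord-line with the sphere are reflections of each other across the perpendicular hyperplane through the origin. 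Once you have it, the contrapositive is immediate and exact: if $(x-y)/|x-y|=u_1\in\Ccap(u,2\delta_0)$ and $x=\tau_u x'$ with $x'\in\Ccap(w_0,\theta_0-4\delta_0)$, then $y=\tau_{u_1}\tau_u x'$; the composition of two reflections whose normals make angle $\alpha\le 2\delta_0$ is a rotation by $2\alpha\le 4\delta_0$, so $\dist(y,x')\le 4\delta_0$ and $\dist(y,w_0)\le 4\delta_0+(\theta_0-4\delta_0)=\theta_0$, i.e.\ $y\in\Omega_0$. There is no ``small-chord degenerate case'' to handle and no $O(\delta_0)$ to bookkeep — the budget of $4\delta_0$ in the cap $\Ccap(w_0,\theta_0-4\delta_0)$ is sized exactly for this rotation angle.

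\textbf{Where your sketch goes wrong.} In your chord-angle analysis you conclude that $y$ ``must lie within $O(\delta_0)$ of $\tau_u x'$'', but $\tau_u x'=\tau_u(\tau_u x)=x$; what the argument actually yields is that $y$ is close to $x'$ (equivalently $\tau_u y$ is close to $x$), not to $x$. The same confusion infects your triangle inequality, where $\dist(y,\tau_u x')+\dist(\tau_u x',\tau_u w_0)$ estimates $\dist(y,\tau_u w_0)$, not $\dist(y,w_0)$ as needed, and also your degenerate-case claim that $y$ close to $x$ puts $y$ near $\Ccap(w_0,\theta_0-4\delta_0)$ (in general $x$ lies in the \emph{reflected} cap $\tau_u\Ccap(w_0,\theta_0-4\delta_0)$, which need not be near $\Ccap(w_0,\theta_0-4\delta_0)$ at all). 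The reflection identity $y=\tau_{u_1}(x)$ both fixes the variable-tracking and eliminates the degenerate case entirely, so I'd recommend replacing your ``chord-angle lemma with explicit constant'' by that identity and the two-reflections-equal-one-rotation fact.
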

\begin{proof}
Let $y\in S^{d-1}\backslash B$ and assume that \eqref{nonresonance u} fails,
that is there is some $x\in \tau_u \Ccap(w_0,\theta_0-4\delta_0)$ and
$u_1\in \Ccap(u,2\delta_0)$ so that
\begin{equation}\label{prelim reflection}
 y=x-|x-y|u_1
\end{equation}
We then need to show that $y\in \Omega_0$.

The condition \eqref{prelim reflection} means that
\begin{equation}
 y=\tau_{u_1}(x)
\end{equation}
This is because $y$ lies on the intersection of the sphere with the line through $x\in S^{d-1}$
in the direction of $u_1$; that intersection contains (at most) two points, one of them being $x$,
which we assume is distinct from $y$.
Clearly the reflection $\tau_{u_1}(x)$ also has this properties, so that $y=\tau_{u_1}(x)$.
Hence we find that
\begin{equation*}
 y=\tau_{u_1}x \in \tau_{u_1} \circ \tau_u \Ccap(w_0,\theta_0-4\delta_0)
\end{equation*}

The composition of two distinct reflections $\tau_u \circ\tau_{u_1}$ is a rotation in the plane spanned by the two vectors $u,u_1$ (assumed not to be co-linear) by an angle which is twice the angle $\alpha$ between the two vectors. In our case, since
$u_1\in \Ccap(u,2\delta_0)$  lies in  cap  centered at  $u $, we have $\alpha\leq 2\delta_0 $. Hence
\begin{equation*}
 \tau_{u_1} \circ \tau_u \Ccap(w_0,\theta_0-4\delta_0)\subseteq \Ccap(w_0,\theta_0-4\delta_0+2\alpha)
\subseteq \Ccap(w_0,\theta_0)
\end{equation*}
so that $y\in \Omega_0$ as claimed.
\end{proof}

\subsection{Vanishing of Fourier coefficients in the larger caps}

For each $u\in \Ccap(u_0,\delta_1/2)$, apply the reflection $\tau_u$ to the cap $\Ccap(w_0, \theta_0-5\delta_0)$ to get a cap
\begin{equation}
 \Omega_1(u):=\tau_u \Ccap(w_0,\theta_0-5\delta_0) = \Ccap(\tau_u w_0,\theta_0-5\delta_0)
\end{equation}
We now claim that the Fourier coefficients $a_\xi$ for frequencies whose directions lie
in the cap $\Omega_1(u)$ are negligible:
\begin{proposition}
If $\xi/|\xi|\in \Omega_1(u)$ then
 \begin{equation}\label{upper bd for a}
| a_\xi | \ll  \frac 1{\lambda^N},\quad \forall N\geq 1
\end{equation}
\end{proposition}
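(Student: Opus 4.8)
The plan is to combine the short-sum representation of $\varphi^\C$ along suitable complex submanifolds (as in \S\ref{lower bound for integral}) with the oscillatory-integral decay of $\^\mu_u$ from Lemma~\ref{nonstationary phase lemma}, exploiting that reflection in a hyperplane orthogonal to a normal direction $u$ of $\surface$ carries the ``good'' cap $\Omega_0$ to the larger cap $\Omega_1(u)$. Concretely, fix $\xi$ with $\xi/|\xi|\in\Omega_1(u)=\tau_u\Ccap(w_0,\theta_0-5\delta_0)$. The idea is to look at the integral of $\varphi$ (or rather its complexification, restricted to the tube $T(u;\tau)$ over the patch $\surface_u$) against the oscillating kernel $e^{-2\pi i\langle\xi,\cdot\rangle}d\mu_u$: since $\varphi$ vanishes on $\surface$, this integral vanishes identically, and expanding $\varphi=\sum_\eta a_\eta e^{2\pi i\langle\eta,x\rangle}$ produces
\[
0=\sum_{\eta\in\vE} a_\eta \,\^\mu_u(\xi-\eta)\, e^{-2\pi t A(\eta)}
\]
(after pushing the contour into the complex domain in the direction $v_0=-\xi/|\xi|$, exactly as in Lemma~\ref{lem:short sum}, to gain the damping factor $e^{-2\pi t A(\eta)}$ that kills the tail).

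Next I would isolate the $\eta=\xi$ term. The term $\^\mu_u(0)$ is a fixed nonzero constant (the total mass of $d\mu_u$), so we get $|a_\xi|\ll\sum_{\eta\neq\xi}|a_\eta|\,|\^\mu_u(\xi-\eta)|e^{-2\pi tA(\eta)}$. Split this sum into: (i) $\eta$ with $\eta/|\eta|\in\Omega_0$, which contribute $O(\lambda^{-N})$ by Theorem~\ref{lem:small a's}; (ii) the remaining $\eta$, for which by Lemma~\ref{lem:playing with caps} (with $B=\Omega_1(u)$ a sub-cap of $\tau_u\Ccap(w_0,\theta_0-4\delta_0)$, since $\theta_0-5\delta_0<\theta_0-4\delta_0$, and with $y=\xi/|\xi|\notin B$... here one must instead apply the lemma to $x=\xi/|\xi|$) the direction $(\xi-\eta)/|\xi-\eta|$ is bounded away from $\Ccap(u,2\delta_0)$, hence $|\^\mu_u(\xi-\eta)|\ll|\xi-\eta|^{-N}\ll\lambda^{-N}$ by Lemma~\ref{nonstationary phase lemma}. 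Using $\#\vE\ll\lambda^{d-2+\epsilon}$ and $\sum|a_\eta|^2=1$ to control the number of terms, both pieces are $O(\lambda^{-N})$ for every $N$, giving \eqref{upper bd for a}.

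The main obstacle I anticipate is bookkeeping the complexification/contour-shift correctly for each fixed $\xi$: the damping direction in the exponential must be $v_0=-\xi/|\xi|$, so the tube $T(u;\tau)$ and the measure $\mu_u$ (built in \S\ref{sec:osc integral 2} over the patch $\surface_u$ whose Gauss image is $\Ccap(u,\delta_0)$) have to be complexified in that direction, and one needs $\surface_u^\C$ to still carry a submanifold on which $\im Z=tv_0$ — this requires $v_0$ to be (nearly) tangent to $\surface_u$, which is exactly what having $\xi/|\xi|\in\Omega_1(u)$ and $u$ a normal direction buys, via the reflection identity. One must also check that the oscillatory-integral estimate of Lemma~\ref{nonstationary phase lemma} survives this complexification (the phase picks up the bounded imaginary shift $e^{-2\pi tA(\eta)}$, and the stationary-phase / integration-by-parts argument goes through because $|\nabla\phi|$ stays bounded below on the compact tube). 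The cap arithmetic — verifying $\theta_0-5\delta_0$ versus $\theta_0-4\delta_0$ and the constraint $\delta_0<\tfrac16\epsilon_d(\delta_1/2)$ — is routine once the structure above is in place, and guarantees $\Omega_1(u)$ genuinely overlaps enough of $S^{d-1}$ that, ranging $u$ over $\Ccap(u_0,\delta_1/2)$, the union $\bigcup_u\Omega_1(u)$ strictly enlarges $\Omega_0$, which is what feeds the iteration in \S\ref{sec:all freqs}.
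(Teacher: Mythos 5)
Your plan has the right starting point (test the vanishing of $\varphi$ on the real patch $\surface_u$ against an oscillatory kernel and exploit the Fourier decay of $\mu_u$ outside $\Ccap(u,2\delta_0)$, routed through Lemma~\ref{lem:playing with caps}), but the single-frequency strategy of isolating $\eta=\xi$ has two genuine gaps that the paper's actual proof is built precisely to avoid.

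First, when you split the remaining sum and claim $|\^\mu_u(\xi-\eta)|\ll|\xi-\eta|^{-N}\ll\lambda^{-N}$, you need $|\xi-\eta|$ to be polynomially large in $\lambda$. Lemma~\ref{nonstationary phase lemma} only gives decay in $|\xi-\eta|$, and two lattice points on the same sphere can have $|\xi-\eta|=O(1)$ (e.g.\ when $\xi$ sits near the boundary of $\Omega_1(u)$ with $\eta$ just outside). For such $\eta$ your bound is useless. The paper fixes this with Proposition~\ref{Separation lemma}: it enlarges $\vF=\lambda\Omega_1(u)\cap\vE$ to a cluster $\vE_1(u)=\tilde\vF$ with $\dist(\vE_1(u),\vE\setminus\vE_1(u))>\lambda^{\delta(d)}$ while still satisfying $\tfrac1\lambda\vE_1(u)\subset\tau_u\Ccap(w_0,\theta_0-4\delta_0)$, and then only treats $\eta\notin\vE_1(u)$ as ``off-diagonal''; for those, $|\xi-\eta|>\lambda^{\delta(d)}$ and the $\^\mu_u$-decay does give $O(\lambda^{-N})$. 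Without the separation lemma you have nothing to rule out the near-diagonal $\eta$'s.

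Second, and more structurally, in your residual sum there are frequencies $\eta\neq\xi$ with $\eta/|\eta|\in\Omega_1(u)$ (or in the slightly larger $\tau_u\Ccap(w_0,\theta_0-4\delta_0)$); for these, Lemma~\ref{lem:playing with caps} (applied with $y=\eta/|\eta|\in B$) gives no constraint on $(\xi-\eta)/|\xi-\eta|$, and $a_\eta$ is not yet known to be small --- that is exactly what you are trying to prove. You cannot dispatch them one $\eta$ at a time. The paper instead tests $\overline{\varphi}$ against the entire partial sum $\sum_{\xi\in\vE_1(u)}a_\xi e^{2\pi i\langle\xi,x\rangle}$, so all the $\eta\in\vE_1(u)$ land in the ``diagonal'' $\int_{\surface_u}\bigl|\sum_{\xi\in\vE_1(u)}a_\xi e^{2\pi i\langle\xi,x\rangle}\bigr|^2 d\mu_u$, and that diagonal is bounded below by $C\sum_{\xi\in\vE_1(u)}|a_\xi|^2$ by the machinery of Proposition~\ref{lem:lower bd for short mean square} (applied in the real domain, $A(\xi)\equiv 0$). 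That mean-square lower bound --- not the single term $\^\mu_u(0)$ --- is what makes the whole cluster's coefficients simultaneously negligible. Finally, the contour shift to $T(u;\tau)$ that you propose is borrowed from \S\ref{lower bound for integral}, where the goal is to truncate the frequency support; it is not used in the paper's proof of this proposition, which stays entirely in the real domain on $\surface_u$, and it does not help with either of the two issues above.
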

\begin{proof}
 Let
\begin{equation}
 \vF = \lambda   \Omega_1(u) \cap \vE
\end{equation}
and use Proposition~\ref{Separation lemma} with $\rho=\lambda^{\delta(d)}$ to get an overset $\tilde \vF$, $\vF\subseteq \tilde\vF \subset \vE$ so that
\begin{equation}
 \dist(\tilde \vF,\vE\backslash \tilde \vF) >\lambda^{\delta(d)}
\end{equation}
and
\begin{equation}\label{eq:7:17}
 \diam \tilde \vF \leq \diam \vF  + \lambda^{\frac 1{2(d+1)}} \leq  \lambda \diam \Omega_1(u)\{ 1+O(\lambda^{-1+\frac 1{2(d+1)}})\}
\end{equation}
Since $\frac 1\lambda \tilde \vF \supset \frac 1\lambda \vF \subset \Omega_1(u)= \tau_u\Ccap(w_0,\theta_0-5\delta_0)$,
condition \eqref{eq:7:17} implies that for $\lambda$ sufficiently large,
\begin{equation}
 \frac 1\lambda \tilde \vF \subset \tau_u \Ccap(w_0,\theta_0-4\delta_0)
\end{equation}

Set
\begin{equation}
 \vE_1(u) = \tilde \vF
\end{equation}
so that
\begin{equation}
 \dist(\vE_1(u),\vE\backslash \vE_1(u) )>\lambda^{\delta(d)}
\end{equation}
and
\begin{equation}
 \frac 1\lambda \vE_1(u) \subset \tau_u\Ccap(w_0,\theta_0-4\delta_0)
\end{equation}

Consider the integral
\begin{equation}
  0=\int_{\surface_u} \overline{\varphi(x)} \sum_{\xi\in \vE_1(u)} a_\xi e^{2\pi i \langle \xi,x \rangle} d\mu_u(x)
\end{equation}
which equals zero, since we assume $\varphi=0$ on $\surface$.

On the other hand, expanding
$$\varphi = \sum_{\xi\in \vE_1(u)} a_\xi e^{2\pi i \langle \xi,x \rangle}+\sum_{\xi\notin \vE_1(u)}
a_\xi e^{2\pi i \langle \xi,x \rangle}
$$
gives a sum of ``diagonal'' and ``off-diagonal'' terms:
\begin{equation*}
 \begin{split}
0 &= \int_{\surface_u} \left|\sum_{\xi \in \vE_1(u)} a_\xi e^{2\pi i \langle \xi,x \rangle}\right|^2 d\mu_u(x) +
\sum_{\xi\in \vE_1(u)}\sum_{\eta\notin \vE_1(u)} a_\xi\overline{a_\eta} \^\mu_u(\eta-\xi)\\
 &= \mbox{diagonal }+ \mbox{off-diagonal}
 \end{split}
\end{equation*}
The diagonal term can be bounded from below:
\begin{equation}
  \int_{\surface_u} \left|\sum_{\xi \in \vE_1(u)} a_\xi e^{2\pi i \langle \xi,x \rangle}\right|^2 d\mu_u(x)
  \geq C  \sum_{\xi\in \vE_1(u)} |a_\xi|^2
\end{equation}
by arguing as in Proposition~\ref{lem:lower bd for short mean square}
(in fact by using it in the special case $A(\xi)=0$).

We will show that the off-diagonal part is ``negligible'' which will
give the required upper bound \eqref{upper bd for a}.
To do so, decompose the off-diagonal term as
\begin{equation*}
\begin{split}
\mbox{off-diagonal} &= \sum_{\xi\in \vE_1(u)} \sum_{\eta\in
  \vE_0\backslash \vE_1(u)}
a_\xi\overline{a_\eta} \^\mu_u(\eta-\xi) \\ &+
\sum_{\xi\in \vE_1(u)} \sum_{\eta\in \vE\backslash (\vE_0\cup \vE_1(u))}
a_\xi\overline{a_\eta} \^\mu_u(\eta-\xi)
\end{split}
\end{equation*}
The first term is negligible because all the coefficients
$a_\eta\approx 0$ are negligible for $\eta\in \vE_0$.

In the second term, we claim that all Fourier
transforms $\^\mu_u(\xi-\eta)\approx 0$ are negligible for $\xi\in
\vE_1(u)$, $\eta\notin \vE_1(u)\cup \vE_0$: Indeed, denoting by $x=\xi/|\xi|$ and $y=\eta/|\eta|$ (note $x\neq y$),
apply Lemma~\ref{lem:playing with caps} with
$$B=\frac 1\lambda \vE_1(u) \subset \tau_u \Ccap(w_0, \theta_0-4\delta_0)$$
Then $y\notin \Omega_0$ since $\eta\notin \vE_0$ hence
$$\frac{x-y}{|x-y|} \notin \Ccap(u,2\delta_0)
$$
Since $|\xi|=|\eta|=\lambda$, we have $\frac{\xi-\eta}{|\xi-\eta|} = \frac{x-y}{|x-y|}$. Therefore
$$
\frac{\xi-\eta}{|\xi-\eta|} = \frac{x-y}{|x-y|} \notin \Ccap(u,2\delta_0)
$$
and hence by the non-stationary phase
lemma~\ref{nonstationary phase lemma}, we have
$$\^\mu_u(\eta-\xi) \ll \frac 1{|\xi-\eta|^M},\quad \forall M\geq 1$$
Moreover, since $\xi\in \vE_1(u)$ and $\eta\notin \vE_1(u)$,
$$
|\xi-\eta|\geq \dist(\vE_1(u),\vE\backslash \vE_1(u)) >\lambda^{\delta(d)}
$$
Hence we get
\begin{equation}
  \^\mu(\eta-\xi) \ll \frac 1{\lambda^N},\quad \xi\in \vE_1(u), \eta
  \notin \vE_1(u)\cup \vE_0
\end{equation}
that is the Fourier transforms are negligible as required. Thus the off-diagonal term is negligible, which shows that
$\sum_{\xi\in \vE_1(u)} |a_\xi|^2$ is negligible. Since $\frac 1\lambda \vF = \Omega_1(u)\cap \frac 1\lambda \vE\subset \frac 1\lambda \vE_1(u)$, we get $|a_\xi|\ll \frac 1{\lambda^N}$ if $\xi/|\xi|\in \Omega_1(u)$.
\end{proof}

Finally, we claim
\begin{proposition}
There is a cap $\Omega_1 = \Ccap(w_1,\theta_0+\delta_0)$ for which all
frequencies $\xi$ in direction $\Omega_1$, the Fourier coefficients $a_\xi$ are negligible
\end{proposition}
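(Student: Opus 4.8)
The plan is to assemble the cap $\Omega_1$ out of the caps $\Omega_1(u)=\Ccap(\tau_u w_0,\theta_0-5\delta_0)$ constructed in the previous Proposition. The key point is that as $u$ ranges over $\Ccap(u_0,\delta_1/2)$, the reflected centers $\tau_u w_0$ sweep out enough of the sphere that the union of the $\Omega_1(u)$ already contains a cap strictly larger than $\Omega_0=\Ccap(w_0,\theta_0)$, with a gain in the opening angle that is bounded below independently of $\lambda$.

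First I would record that, by the previous Proposition, every frequency $\xi$ whose direction lies in $\bigcup_{u\in \Ccap(u_0,\delta_1/2)}\Omega_1(u)$ has negligible Fourier coefficient $|a_\xi|\ll\lambda^{-N}$ for all $N$, since any such $\xi$ lies in a single $\Omega_1(u)$. Next I would invoke the reflection property \eqref{def of epsilon} with $w=w_0$ and $\delta=\delta_1/2$ to produce a point $w_1$ and $\epsilon=\epsilon_d(\delta_1/2)>0$ with $\Ccap(w_1,\epsilon)\subseteq\{\tau_u w_0:u\in\Ccap(u_0,\delta_1/2)\}$. Hence
\begin{equation*}
\bigcup_{u\in \Ccap(u_0,\delta_1/2)}\Omega_1(u)\;\supseteq\;\bigcup_{c\in\Ccap(w_1,\epsilon)}\Ccap(c,\theta_0-5\delta_0).
\end{equation*}

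Then I would run a one-line triangle-inequality argument on $S^{d-1}$: if $p$ is a unit vector at angular distance at most $r+\epsilon$ from $w_1$, sliding from $w_1$ toward $p$ along the connecting geodesic a distance $\min(\epsilon,\dist(p,w_1))$ lands at a center $c\in\Ccap(w_1,\epsilon)$ with $\dist(p,c)\le r$; thus a union of caps of angular radius $r$ with centers filling $\Ccap(w_1,\epsilon)$ contains $\Ccap(w_1,r+\epsilon)$. With $r=\theta_0-5\delta_0$ this gives $\bigcup_u\Omega_1(u)\supseteq\Ccap(w_1,\theta_0-5\delta_0+\epsilon)$. Finally I would use the standing requirement $\delta_0<\tfrac16\epsilon_d(\delta_1/2)$, which forces $\theta_0-5\delta_0+\epsilon>\theta_0+\delta_0$, so that $\Omega_1:=\Ccap(w_1,\theta_0+\delta_0)$ is contained in the union and therefore consists of negligible directions, which is exactly the asserted conclusion.

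I do not expect a genuine obstacle here: all the analytic content (the diagonal lower bound and the rapid decay of $\^\mu_u$ away from the Gauss image) has already been used in the previous Proposition and in Lemma~\ref{nonstationary phase lemma}, and what remains is elementary spherical geometry. The only thing to be careful about is bookkeeping the cascade of smallness constraints on $\delta_0$ (namely $\delta_0<\delta_1/2$, the diffeomorphism range of the Gauss map \eqref{diffeo}, the range needed for Lemma~\ref{nonstationary phase lemma}, and $\delta_0<\tfrac16\epsilon_d(\delta_1/2)$) so that the gain $\delta_0$ in the opening angle is uniform and independent of $\lambda$; this is precisely what makes the iteration in the proof of Theorem~\ref{thm:nonzero curvature} terminate after finitely many steps and thereby force all Fourier coefficients to be negligible, contradicting $\sum_\xi|a_\xi|^2=1$.
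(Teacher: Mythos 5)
Your proposal is correct and follows essentially the same route as the paper: invoke the reflection property \eqref{def of epsilon} to show the reflected centers $\{\tau_u w_0\}$ cover a cap of angular radius $\epsilon_d(\delta_1/2)$ around some $w_1$, observe that a union of caps of radius $\theta_0-5\delta_0$ over such a set of centers covers $\Ccap(w_1,\theta_0-5\delta_0+\epsilon_d(\delta_1/2))$, and conclude by the standing constraint $6\delta_0<\epsilon_d(\delta_1/2)$. Your explicit "slide along the geodesic" justification of the cap-inclusion step is a welcome clarification of a point the paper states without proof.
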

\begin{proof}
We note that the union
\begin{equation}
 \bigcup_{u\in \Ccap(u_0,\delta_1/2)} \Omega_1(u) = \bigcup_{u\in \Ccap(u_0,\delta_1/2)} \Ccap(\tau_u w_0, \theta_0-5\delta_0)
\end{equation}
contains a cap $\Omega_1 = \Ccap(w_1,\theta_1)$ with $\theta_1\geq \theta_0 +\delta_0$.

This follows since the set of reflected centers
$$\{\tau_u w_0: u \in \Ccap(u_0,\delta_1/2) \}$$
contains a cap $\Ccap(w_1, \epsilon_d(\frac {\delta_1}2))$, where $\epsilon_d(\delta)$ is defined in \eqref{def of epsilon},
since we chose $\delta_0$ sufficiently small so that $\epsilon_d(\frac{\delta_1}2) >6\delta_0$, and hence
\begin{equation}
 \bigcup_{u\in \Ccap(u_0,\delta_1/2)} \Omega_1(u) \supset \Ccap(w_1,\theta_0-5\delta_0+\epsilon_d(\frac {\delta_1}2))
\supset \Ccap(w_1,\theta_0+\delta_0)
\end{equation}
Therefore for all frequencies in direction $\Omega_1$ the Fourier coefficients $a_\xi$ are negligible, since the same holds for each of the small caps $\Omega_1(u)$ containing $\Omega_1$.
\end{proof}
 By continuing this process, we see that {\em all} coefficients $a_\xi$ are negligible,
contradicting the normalization $\sum_\xi |a_\xi|^2=1$. This concludes the proof of Theorem~\ref{thm:nonzero curvature}.

\newpage

\appendix\section{The two-dimensional case: Using the ABC theorem}\label{sec:abc}

In this section we give a proof of Theorem~\ref{thm:2 dim} using the function-field ``abc theorem'' of
Brownawell-Masser \cite{BM} and Voloch \cite{Voloch}. We recall the statement: Let $K=\C(X)$ be the function field of an algebraic curve of genus $g$ over the complex numbers,
$S$ a finite set of places of $K$, and $u_1,\dots u_m\in K$ a set of $S$-units, that is rational functions whose zeros and poles lie in $S$.
The degree, or height, of a non-constant rational function $x\in K$ is defined as the degree of $K$ over the field extension $\C(x)$:
$H(x)=[K:\C(x)]$, which is the number of zeros (or poles) of $x$, counted with multiplicities.
\begin{theorem}[\cite{BM}, \cite{Voloch}]\label{abc theorem}
Let $u_1,\dots, u_m\in K$ be non-constant $S$-units, linearly independent over $\C$, satisfying
\begin{equation}
\sum_{j=1}^m u_j = 1
\end{equation}
Then
\begin{equation}
\max_j H(u_j) \leq \frac{m(m-1)}2 (2g-2+\#S)
\end{equation}
\end{theorem}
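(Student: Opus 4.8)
The plan is to use the \emph{Wronskian method} (Mason--Stothers in the polynomial case; Brownawell--Masser \cite{BM} and Voloch \cite{Voloch} in this generality). First I would fix a nonzero derivation $D$ of $K$ by writing $D=d/\omega$ for a meromorphic differential $\omega$; then $\Div(\omega)$ is a canonical divisor, so $\deg\Div(\omega)=2g-2$, and by Riemann--Roch one may choose $\omega$ so that $\Div(\omega)$ is supported away from $S$. Form the Wronskian
$$W:=W(u_1,\dots,u_m)=\det\bigl(D^{i-1}u_j\bigr)_{1\le i,j\le m}.$$
The field of constants of $D$ is $\C$ (here $K$ is a one-dimensional function field over $\C$ and $D\neq0$), so the hypothesis that $u_1,\dots,u_m$ are $\C$-linearly independent guarantees $W\neq0$; hence $W\in K^\times$, and $\deg(W)_0=\deg(W)_\infty$.

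Next I would extract two complementary facts about $\Div(W)$. \emph{(Zeros.)} At a place $v$ where $\omega$ is regular, expanding the determinant shows that $W$ vanishes to order at least $\sum_j(\text{order of vanishing of }u_j\text{ at }v)-\binom m2$ (the $u_j$ cannot all vanish at $v$, since $\sum_j u_j=1$); summing over the places of $S$ where the $u_j$ have zeros, and adding the forced zeros coming from the negative part of $\Div(\omega)$, yields a lower bound
$$\deg(W)_0\ \ge\ \sum_{j=1}^m H(u_j)\ -\ \Delta_0,$$
where $\Delta_0$ depends only on $m$, on $\#S$, and on $\Div(\omega)^{-}$. \emph{(Poles.)} For each fixed $k$, use the relation $\sum_j u_j=1$: since $D^\ell(1)=0$ for $\ell\ge1$, adding the columns $\neq k$ to the $k$-th column of the Wronskian matrix turns it into $(1,0,\dots,0)^{\top}$, and Laplace expansion gives
$$W=\pm\,W\bigl(Du_1,\dots,\widehat{Du_k},\dots,Du_m\bigr).$$
In this form the poles of $W$ can only come from the poles of the $u_j$ with $j\neq k$ (and from $\Div(\omega)^{+}$), so the same bookkeeping gives
$$\deg(W)_\infty\ \le\ \sum_{j\neq k}H(u_j)\ +\ \Delta_\infty,$$
with $\Delta_\infty$ depending only on $m$, $\#S$, and $\Div(\omega)^{+}$.

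Combining these with $\deg(W)_0=\deg(W)_\infty$, the term $\sum_{j\neq k}H(u_j)$ cancels and one is left with $H(u_k)\le\Delta_0+\Delta_\infty$; as $k$ was arbitrary, $\max_j H(u_j)\le\Delta_0+\Delta_\infty$. The substance of the proof — and what I expect to be the main obstacle — is to carry out the two estimates sharply enough that $\Delta_0+\Delta_\infty$ is exactly $\tfrac{m(m-1)}2(2g-2+\#S)$: this needs a uniform, place-by-place bound on the orders of Wronskians of functions with prescribed zeros and poles, in which the combinatorial coefficient comes out to $\binom m2$, together with the cancellation $\deg\Div(\omega)^{+}-\deg\Div(\omega)^{-}=2g-2$ that makes the genus enter with the correct (possibly negative, when $g=0$) contribution; this is also where the freedom in choosing $\omega$, used above to keep its support disjoint from $S$, is essential. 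The degenerate situations ($g=0$, or $\#S$ too small to admit non-constant $S$-units) are handled directly, and $\C$-linear independence of the $u_j$ precludes any vanishing proper subsum, so no reduction in the number of terms is needed as a preliminary step.
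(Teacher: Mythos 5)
The paper does not prove Theorem~\ref{abc theorem}; it is quoted as a known result with citations to Brownawell--Masser \cite{BM} and Voloch \cite{Voloch}, so there is no in-paper proof to compare yours against. Your Wronskian sketch is in fact the method of those references, and you have identified the right ingredients: non-vanishing of $W=\det(D^{i-1}u_j)$ from $\C$-linear independence (since the constant field of a nonzero derivation of a one-variable function field over $\C$ is $\C$), the $\binom{m}{2}$ loss per place in the lower bound for $\deg(W)_0$, the column reduction via $\sum_j u_j=1$ that drops one of the $u_j$ from the pole count, and the cancellation $\deg\Div(\omega)^{+}-\deg\Div(\omega)^{-}=2g-2$ that makes the genus enter with the correct sign.

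As you yourself acknowledge, though, this is an outline rather than a proof: the place-by-place accounting that yields exactly $\tfrac{m(m-1)}{2}(2g-2+\#S)$ --- showing that the $\binom{m}{2}$ loss occurs once per place of $S$ and nowhere else, and that the contributions from $\Div(\omega)$ collapse to precisely the genus term --- is left undone, and that is where essentially all the work in \cite{BM} and \cite{Voloch} lies. Two small cautions if you try to complete it: the published arguments tend to work with local uniformizers at each place rather than a single global $\omega$ with support disjoint from $S$ (the latter choice is available but adds nothing); and after the column reduction one has a Wronskian of the $m-1$ functions $Du_j$, whose polar divisors exceed those of the $u_j$, so the zeros of $\omega$ (i.e.\ $\Div(\omega)^{+}$) unavoidably enter the pole bookkeeping. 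Within the paper, Theorem~\ref{abc theorem} is used as a black box; the statement near it that actually needs an argument is Lemma~\ref{lem:ineq for height}.
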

This result improves that of R.C. Mason \cite{Mason several},
where the quadratic term $m(m-1)/2$ is replaced by a term exponential in $m$,
which is not sufficiently strong for our purposes.

\subsection{Complexification}
Let $\varphi$ be an eigenfunction, $-\Delta \varphi = 4\pi^2 \lambda^2 \varphi$, which vanishes on the curve $\surface$. Write
\begin{equation}\label{expansion of phi}
\varphi(x) = \sum_{\xi} a_\xi e^{2\pi i \langle \xi,x \rangle}
\end{equation}
Let
\begin{equation}
 \supp \^\varphi = \{\xi: a_\xi\neq 0\}
\end{equation}
be the set of frequencies of $\varphi$, and set
\begin{equation}
 r=\#\supp\^\varphi
\end{equation}
to be the number of frequencies; necessarily $r\geq 2$.

 We can embed the torus $\TT^2\simeq S^1\times S^1$
in $\C^2$ via the map $(x,y)\mapsto (z_1,z_2) = (e^{2\pi i x}, e^{2\pi i y})$. This allows us to
associate with each trigonometric polynomial \eqref{expansion of phi}
%$$\varphi(x) = \sum_\xi e^{2\pi i \langle \xi, x \rangle}$$
a Laurent polynomial %in the variables $z=(z_1,z_2)\in \C^2$,
\begin{equation}\label{sum for F}
 F(z) = \sum_\xi a_\xi z^\xi
\end{equation}
where for $z=(z_1,z_2)\in \C^2$ and $\xi=(n_1,n_2)\in \Z^2$ we denote
$$z^\xi:=z_1^{n_1} z_2^{n_2}$$
We can further write
\begin{equation}
  F(z_1,z_2) = \frac {P(z_1,z_2)}{z_1^{a_1} z_2^{a_2}}
\end{equation}
for a unique polynomial $P\in \C[z_1,z_2]$ so that $z_1\nmid P$, $z_2\nmid P$.
If $\varphi \neq ae^{2\pi i \langle \xi, x \rangle}$ is not composed of a single frequency
(which it cannot if we assume that it is real-valued) then $P$ is non-constant.
Thus to each trigonometric polynomial $\varphi$ (not a single exponential) we associate the plane curve
$$
X_P=\{z: P(z)=0\}\subset \C^2
$$
which is possibly reducible and singular.

The nodal set of $\varphi$ must be contained in $X_P$, since $z_i$ do not vanish on $\TT^2 = S^1\times S^1 \subset \C^2$.
Thus if $\varphi$ vanishes on the (real) smooth curve $\surface\subset \TT^2$,
then $\surface$ must be contained in an irreducible component of $X_P$ (possibly in more than one component).
Thus we get an irreducible component (possibly singular)
$$ X_D=\{z: D(z)=0\}$$
containing $\surface$. Here $D\in \C[z_1,z_2]$ is an irreducible divisor
of $P$ of positive degree.
Note that in that case $D(z_1,z_2)$ cannot depend only on one of
the variables, say on $z_1$. Indeed in that case $D(z_1)$ is a
one-variable polynomial, and is then irreducible only in the case that
it is linear: $D(z_1)=z_1-c$, whose zero set is a closed geodesic, contradicting our choice of $\surface$.

Let $\lambda_0$ be minimal where an eigenfunction $\varphi_0$ with
eigenvalue $4\pi^2\lambda_0^2$ vanishes on $\surface$.
We can choose an irreducible $D_0\in \C[z_1,z_2]$ so that $\surface
\subset X_{D_0}=:X_0$.
Let $ \C(X_0)$ be the function field of the curve $X_0$, that is the
field of fractions
of the integer domain $\C[z_1,z_2]/(D_0)$.  The curve $X_0$ is
irreducible but possibly singular.
Let $X\to X_0$ be its normalization, whose ring of regular functions
is the integral closure of $\C[z_1,z_2]/(D_0)$, and has the same
function field as $X_0$. The map $X\to X_0$ is one-to-one outside of
finitely many points.

Restricting the monomials $z^\xi$ ($\xi\in \Z^2$) to $X_0$ gives rational
functions which we still denote by $z^\xi$,
in $\C(X_0)=\C(X)$, which have all their zeros and poles in the set
$S_0$ given by
\begin{equation}
S_0=X_0\cap \{ (z_1,z_2) \in \mathbb P^2:  z_i=0, \infty\}
\end{equation}
%(maybe consider the projective closure of $X_0$..........)
Note that $S_0$ is finite because $X_0$ is not a line of the form
$z_i=0,\infty$, since $D(z_1,z_2)$ depends on both variables.
By pulling back to the normalization $X$,
we get rational functions, still denoted by $z^\xi$, on $X$ which are
$S$-units for the pullback $S$ of $S_0$ to $X$.

\subsection{A lower bound for the height of monomials}
In order to apply Theorem~\ref{abc theorem}, we need to compute the height of the monomials $z^\xi$ as
rational functions on the curve $X$.
The assumption that $\surface$ is not a segment of a closed geodesic allows us to obtain a useful lower bound:
%Here the height, or degree, of a rational function $u\in \C(X)=\C(X_0)$
%is  the degree of the extension $\deg(u)=[\C(X_0): \C(u)]$.
%Alternatively (and this is what we will use)

\begin{lemma}\label{lem:ineq for height}
Suppose that $\surface$ is not a segment of a closed geodesic.
Then there is some constant $c_\surface>0$ so that
$$H(z_1^{n_1} z_2^{n_2}) \geq c_\surface \max(|n_1|, |n_2|)$$
for all  $(n_1,n_2)\in \Z^2$.
\end{lemma}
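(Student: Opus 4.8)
The plan is to compute the height $H(z_1^{n_1}z_2^{n_2})$ in terms of the orders of vanishing of $z_1$ and $z_2$ at the finitely many places in $S$, and then to recognize the resulting quantity as a norm on $\R^2$ whose degeneracy would force $\surface$ to be a closed geodesic.

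First I would recall that for a non-constant $f\in\C(X)$ on the smooth projective curve $X$, the height $H(f)=[\C(X):\C(f)]$ equals the degree of the polar divisor of $f$, so that $H(f)=\frac12\sum_P|v_P(f)|$, the sum running over all places $P$ of $\C(X)$. Applying this to $f=z_1^{n_1}z_2^{n_2}$, using additivity of valuations and the fact (from the construction) that $z_1$ and $z_2$ are $S$-units, so $v_P(z_i)=0$ for $P\notin S$, I get
\[
H(z_1^{n_1}z_2^{n_2})=\frac12\sum_{P\in S}\bigl|\,n_1 v_P(z_1)+n_2 v_P(z_2)\,\bigr|=:N(n_1,n_2),
\]
a finite sum in which each $e_P:=(v_P(z_1),v_P(z_2))$ is a fixed integer vector. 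One checks at once that $N$ extends to a seminorm on $\R^2$: it is non-negative, homogeneous, and subadditive, being a sum of absolute values of linear functionals $(n_1,n_2)\mapsto e_P\cdot(n_1,n_2)$.

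Next I would split into two cases according to the span of $\{e_P:P\in S\}$. If these vectors span $\R^2$, then $N$ is positive definite, hence a genuine norm on $\R^2$; by equivalence of norms on a finite-dimensional space there is $c_\surface>0$ with $N(n_1,n_2)\geq c_\surface\max(|n_1|,|n_2|)$ for all real $(n_1,n_2)$, which is exactly the asserted inequality (and in passing it shows $z_1^{n_1}z_2^{n_2}$ is non-constant whenever $(n_1,n_2)\neq 0$). If instead all $e_P$ lie on a single line, I would choose a primitive integer vector $(p,q)$ spanning it and write $e_P=m_P(p,q)$ with $m_P\in\Z$; then $g:=z_1^{q}z_2^{-p}$ has $v_P(g)=q\,v_P(z_1)-p\,v_P(z_2)=0$ at every place, so $g$ has neither zeros nor poles on the smooth projective curve $X$ and is therefore a nonzero constant $c\in\C^\times$.

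Finally I would derive a contradiction in this second case. Since $\surface\subset X_0$ and $g$ is constant on $X_0$, the identity $z_1^{q}z_2^{-p}=c$ holds at every point of $\surface$; under $(x,y)\mapsto(z_1,z_2)=(e^{2\pi i x},e^{2\pi i y})$ this says $e^{2\pi i(qx-py)}=c$ on $\surface$, so $qx-py$ is constant modulo $\Z$ along $\surface$. As $(q,-p)$ is a primitive lattice vector, the set $\{(x,y)\in\TT^2:qx-py\equiv\text{const}\pmod 1\}$ is a closed geodesic, and $\surface$ would be a segment of it, contrary to the hypothesis. Hence only the first case occurs, and the lemma follows. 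I expect the argument to be essentially soft once the height formula $H(f)=\frac12\sum_P|v_P(f)|$ is available; the points that need care are the bookkeeping with the normalization $X$ and the set $S$ (ensuring the defining sum of $N$ is finite and supported on $S$), and the verification that the only way $N$ can fail to be a norm is precisely the closed-geodesic degeneracy, signs and primitivity included.
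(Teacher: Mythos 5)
Your proof is correct and, once the notation is unwound, essentially coincides with the paper's argument: your identity $H(z_1^{n_1}z_2^{n_2})=\tfrac12\sum_P|n_1v_P(z_1)+n_2v_P(z_2)|$ is exactly the paper's $\|\Div u\|_1=2H(u)$, your seminorm $N$ on $\R^2$ is the pullback of the $\ell^1$-norm on the span $V$ of $\Div z_1,\Div z_2$ inside $\Div X$, your ``Case 1'' is the linear independence of $\Div z_1$ and $\Div z_2$, and your ``Case 2'' contradiction (constancy of $z_1^{q}z_2^{-p}$ on $\surface$ forcing $\surface$ into a closed geodesic) is the paper's argument verbatim. The only thing worth tidying is the fully degenerate subcase of Case 2 where all $e_P=0$ (no primitive spanning vector exists); this is excluded because $D$ depends on both variables, so $z_1$ is non-constant on $X$, or alternatively because it would make $z_1$ itself constant on $\surface$, again yielding a closed geodesic.
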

\begin{proof}

Let $\Div X$ be the vector space of divisors of $X$, that is of (finite) formal sums $\sum_{P\in X} n_P P$
(we include points at infinity). The degree of such a divisor is $\sum_P n_p$.
For a rational function $u$ on $X$, we have an associated principal divisor $\Div u=\Div_0 u-\Div_\infty u$ where $\Div_0 u$ and $\Div_\infty u$ are the divisors of zeros and poles. Then the degree of a principal divisor is zero: $ \deg \Div u = \deg \Div_0 u - \deg \Div_\infty u = 0$ and the height  of $u$ equals
$$ H(u) = \deg \Div_0 u=\deg \Div_\infty u$$

On the vector space $\Div X$ we have the $\ell^1$-norm
$$
||\sum_P n_P P||_1:=\sum_P |n_P|
$$
which for a principal divisor equals twice the height of $u$.
\begin{equation*}
 ||\Div u||_1 = \deg\Div_0 u +\deg \Div_\infty u = 2H(u)
\end{equation*}

%Now consider the divisors of $z_1$ and $z_2$.
We  claim that if $\surface$ is not a segment of a closed geodesic,
then  $\Div z_1$ and $\Div z_2$ are linearly independent elements of $\Div X$.
Indeed, a linear dependence means that there are integers $a_1,a_2\in \Z$ for which
$$ a_1 \Div z_1 = a_2 \Div z_2$$
or equivalently that $z_1^{a_1}/z_2^{a_2}\equiv c$ is the constant function when restricted to the curve $X_0$. That means that on $\surface$, we have
$$ e^{2\pi i(a_1 x_1-a_2x_2)}=c$$
whose zero set is a union of closed geodesics. Hence $\Div z_1$ and $\Div z_2$ are linearly independent.

Since $\Div z_1$ and $\Div z_2$ are linearly independent, their span $V$ is a two-dimensional vector space in $\Div X$. On $V$ we then have two norms: The restriction of the $\ell^1$-norm and the $\ell^\infty$-norm
\begin{equation*}
 || \Div (z_1^{n_1} z_2^{n_2})||_\infty = ||n_1 \Div z_1 + n_2 \Div z_2||_\infty = \max(|n_1|,|n_2|)
\end{equation*}
which is indeed a norm since $\Div z_1$ and $\Div z_2$ are linearly independent.
Since on any finite-dimensional vector space all norms are equivalent, we find that there is some $c=c_V>0$ for which
\begin{equation*}
      || \Div (z_1^{n_1}z_2^{n_2} )||_1
\geq c  | \Div (z_1^{n_1}z_2^{n_2})||_\infty = c \max(|n_1|,|n_2|)
\end{equation*}
for all $n\in \Z^2$, and hence
\begin{equation*}
 H(z_1^{n_1}z_2^{n_2}) =\frac 12 || \Div (z_1^{n_1}z_2^{n_2} )||_1\geq  \frac 12 c  \max (|n_1|,|n_2|)
\end{equation*}
as claimed.
\end{proof}

\subsection{Proof of Theorem~\ref{nonvanishing thm}}
We assume that $\surface$ is not a segment of a closed geodesic.
We choose $\lambda$ sufficiently large so that
\begin{equation}
 \lambda \gg  (\#S +2g_X-2)^{1+\epsilon}
\end{equation}
and show that no eigenfunction $\varphi$ with eigenvalue
$4\pi^2\lambda^2$ can vanish on $\surface$.

Suppose $\lambda$ admits an eigenfunction \eqref{expansion of phi}
%\begin{equation}\label{expansion of phi2}
%\varphi(x) = \sum_{n\in \supp \^\varphi} \^\varphi(n)e(n\cdot x)
%\end{equation}
which vanishes on $\surface$. Among such eigenfunctions,
choose such $\varphi$ with the number of frequencies $r$ being minimal.
%\subsection{The case of two frequencies}
If $r=2$ then after scaling,
$$
\varphi(x) = e^{2\pi i \langle \xi,x\rangle}  -ae^{2\pi i \langle \xi',x\rangle} , \qquad a\in \C
$$
and for its nodal set to contain a real point,  we need $|a|=1$, that is
$a=e^{2\pi i\alpha}$, $\alpha\in \R$.
In that case the nodal set consists of $x\in \TT^2$ with
$$ \langle \xi-\xi', x\rangle  \in \alpha +\Z$$
which is a union of straight lines with rational slopes, i.e.
closed geodesics. So we may assume $r\geq 3$.

In the expansion \eqref{expansion of phi}, choose one of the
frequencies $\xi_0$ and divide all terms in
\eqref{expansion of phi} by $a_{\xi_0} e^{2\pi i \langle \xi_0,x \rangle}$
to get a relation:
\begin{equation}\label{divided relation}
\sum_{\xi_0\neq \xi \in \supp \^\varphi} -\frac{a_\xi}{a_{\xi_0}}z^{\xi-\xi_0}=1
\end{equation}
Set
\begin{equation}
 u_\xi:= -\frac{a_\xi}{a_{\xi_0}}z^{\xi-\xi_0} \in \C(X_0)
% \quad \xi\in \supp \^\varphi
\end{equation}
Then we get a relation in $\C(X_0)=\C(X)$ (an $S$-unit equation)
\begin{equation}\label{S-unit eq from phi}
\sum_{\substack{\xi\in \supp \^\varphi\\\xi\neq \xi_0}}  u_\xi   = 1
\end{equation}
where $u_\xi$ are linearly independent, by the minimality assumption on $\varphi$.
%The relation \eqref{S-unit eq from phi} is a non-degenerate equation
%in $S_0$-units of $\C(X_0)$, to which
To the relation \eqref{S-unit eq from phi} we apply the ``abc-Theorem'' (Theorem~\ref{abc theorem})
%of Brownawell  and Masser \cite{BM} and Voloch \cite{Voloch}\footnote{their work improves that
%  of R.C. Mason \cite{Mason several}, who has a weaker estimate
%where the quadratic term $(r-1)(r-2)/2$ is replaced by a term
%exponential in $r$, which is useless for us.},
which says that if $r\geq 3$
%(which is the case at hand since we
%assume $\surface$ is not a segment of a closed geodesic),
then
\begin{equation}\label{Voloch thm}
 \max ( H(u_\xi): \xi_0\neq \xi\in \supp \^\varphi ) \leq \frac {(r-1)(r-2)}2
 (\#S +2g_X-2)
\end{equation}
where $g_X$ is the genus of the smooth curve $X$.
%Here the degree $H( u_n)=[\C(X_0): \C(u_n)]$ is the number of zeros
%(or poles) of $u_n$, counted with multiplicities, or the number of
%preimages of a generic point.
Since
$$H( u_\xi ) =H( z^{\xi-\xi_0}) \geq c_\surface ||\xi-\xi_0||_\infty$$
by Lemma~\ref{lem:ineq for height}, we find that
\begin{equation}\label{conseq of abc}
\max_{\xi\neq \xi_0} ||\xi-\xi_0||_\infty \ll  \frac {(r-1)(r-2)}2 (\#S +2g_X-2)
\end{equation}

Now the number of frequencies $r$ is at most the total number of lattice points on the circle $|x|=\lambda$,
hence is bounded by $r\ll \lambda^\epsilon$
for all $\epsilon>0$. Thus by \eqref{conseq of abc} we find that all frequencies  of
$\varphi$ are contained in a box of size $\ll \lambda^\epsilon$
around $\xi_0$.
By Jarnik's theorem (Theorem~\ref{Jarnik's thm}), any arc of size $\ll \lambda^{1/3}$ contains at
most two lattice points, hence this forces $r=2$ contradicting our
assumption $r\geq 3$.
This gives a contradiction for $\lambda$ sufficiently large. \qed


\begin{thebibliography}{99}

\bibitem{BR}
J. Bourgain and Z. Rudnick, {\em Restriction of toral eigenfunctions to
hypersurfaces}.  C. R. Math. Acad. Sci. Paris  {\bf 347}  (2009),  no. 21-22, 1249--1253.

\bibitem{BRdim3}
J. Bourgain and Z. Rudnick, in preparation.

\bibitem{BM}
W. D. Brownawell and D.W. Masser, {\em Vanishing sums in function fields.}
Math. Proc. Cambridge Philos. Soc.  {\bf 100}  (1986),  no. 3, 427--434.

\bibitem{BGT}
N. Burq, P. G\'erard, and N. Tzvetkov.  {\em Restrictions of the
Laplace-Beltrami eigenfunctions to submanifolds}.  Duke Math. J.  {\bf
138} (2007),  no. 3, 445--486.

\bibitem{DF}
H. Donnelly and C. Fefferman, {\em Nodal sets of eigenfunctions on Riemannian manifolds.}
Invent. Math.  {\bf 93}  (1988),  no. 1, 161--183.

\bibitem{Herz}
C. S. Herz
{\em Fourier transforms related to convex sets}.
Ann. of Math. (2) {\bf 75} (1962), 81--92.

\bibitem{Hlawka}
E. Hlawka,
{\em \"Uber Integrale auf konvexen K\"orpern.} I.
Monatsh. Math. {\bf 54}, (1950). 1--36.

\bibitem{Holt}
JB Holt, {\em The irreducibility of Legendre's polynomials.}
Proceedings of the London Mathematical Society s2-11 (1913), 351--356.

\bibitem{Hu}
R. Hu, {\em $L^p$ norm estimates of eigenfunctions restricted to submanifolds},
Forum Mathematicum  {\bf 21}, No. 6 (2009),  1021–-1052.


\bibitem{Ille}
Hildegard Ille, Jbuch. der Dissertationen,
Berlin, 1924.

\bibitem{Jarnik}
V. Jarnik.
{\em \"Uber die Gitterpunkte auf konvexen Kurven. }
Math. Z. {\bf 24} (1926), no. 1, 500--518.

\bibitem{Littman}
W. Littman,
{\em Fourier transforms of surface-carried measures and differentiability of surface averages}.
Bull. Amer. Math. Soc. {\bf 69} 1963 766--770.

\bibitem{Mason several}
 R.C. Mason,
{\em Norm form equations}. I.
J. Number Theory {\bf 22} (1986), no. 2, 190--207.

%\bibitem{NPS}
%F. Nazarov, L. Polterovich and M. Sodin, {\em  Sign and area in nodal geometry of Laplace eigenfunctions}.
%Amer. J. Math. {\bf 127} (2005), no. 4, 879--910.

\bibitem{Sarnak}
P. Sarnak,  %personal communication.
Letter to Reznikov, June 2008,
see www.math.princeton.edu/sarnak/

\bibitem{Stieltjes}
T.J. Stieltjes, Letter No. 275 of Oct. 2, 1890,
in Correspondance d'Hermite et de Stieltjes, vol  2, Gauthier-Villars, Paris, 1905.

\bibitem{Voloch}
J.F. Voloch,
{\em Diagonal equations over function fields}.
Bol. Soc. Brasil. Mat. {\bf 16} (1985), no. 2, 29–39.

\bibitem{Wahab1}
J.H. Wahab,  {\em New cases of irreducibility for Legendre polynomials}. Duke Math. J. {\bf 19}, (1952). 165--176.

\bibitem{Wahab2}
J.H. Wahab, {\em New cases of irreducibility for Legendre polynomials}. II.  Duke Math. J.  {\bf 27}  1960 481--482.



\end{thebibliography}
\end{document}